\PassOptionsToPackage{table,xcdraw}{xcolor}	%
\documentclass[acmsmall, screen, anonymous=false]{acmart}

\usepackage[utf8]{inputenc}

\usepackage{xspace}

\usepackage{subcaption} %

\newtheorem{remark}{Remark}

\newtheorem{hypothesis}{Hypothesis}

\def\polylog{\operatorname{polylog}} %

\newcommand{\sparseBMM}{\textsc{sparseBMM}}
\newcommand{\hyperclique}{\textsc{Hyperclique}}
\newcommand{\ThreeSUM}{\textsc{3sum}}
\newcommand{\seth}{\textsc{Seth}}
\newcommand{\calV}{{\mathcal V}}

\newcommand{\calH}{{\mathcal H}}
\newcommand{\calS}{{\mathcal S}}
\newcommand{\atoms}{\texttt{atoms}}
\newcommand{\var}{\texttt{var}}
\newcommand{\free}{\texttt{free}}
\newcommand{\bigO}{{\mathcal{O}}} %
\newcommand{\mh}{\textrm{mh}}
\newcommand{\mhfree}{\textrm{fmh}}
\newcommand{\RF}{\textrm{RF}}
\newcommand{\freeind}{\alpha_{\textrm{free}}}

\newcommand{\dom}{\texttt{dom}}
\newcommand{\ar}{\texttt{ar}}
\newcommand{\lex}{L}
\newcommand{\vertex}{V}
\newcommand{\varidx}{p}

\def \factor{\textrm{factor}}
\def \bucket{\textrm{bucket}}
\def \root{\textrm{root}}
\def \tuple{t}
\def \w{\texttt{weight}}
\def \layer{\texttt{layer}}
\def \rng{\texttt{start}}
\def \endIndex{\texttt{end}}

\newcommand{\N}{\mathbb{N}} %
\newcommand{\R}{{\mathbb{R}}} %

\newcommand{\Comp}[2]{\ensuremath{\langle #1, #2 \rangle}}

\def\e#1{\emph{#1}}

\def\angs#1{\mathord{\langle #1\rangle}}
\AtBeginDocument{%
  \providecommand\BibTeX{{%
    \normalfont B\kern-0.5em{\scshape i\kern-0.25em b}\kern-0.8em\TeX}}}

\newcommand{\smallsection}[1]{\vspace{5mm}\noindent\textbf{#1.}} %
\newcommand{\introparagraph}[1]{\textbf{#1.}} %
\newcommand{\datarule}{{\,:\!\!-\,}} %
\newcommand{\nikos}[1]{{{\color{blue}{[\textbf{Nikos}: #1]}}}} %
\newcommand{\nofar}[1]{{{\color{violet}{[\textbf{Nofar}: #1]}}}} %
\newcommand{\benny}[1]{{{\color{olive}{[\textbf{Benny}: #1]}}}} %
\newcommand{\wolf}[1]{{{\color{magenta}{[\textbf{WG}: #1]}}}} %
\newcommand{\mirek}[1]{{{\color{cyan}{[\textbf{MR}: #1]}}}} %
\newcommand\reviewer[2]{{\color{red}[\textbf{Reviewer #1:} #2]}}

\newcommand{\nikosrem}[1]{{{\color{blue}{[\textbf{Reminder}: #1]}}}}
\newcommand{\nofarrem}[1]{{{\color{blue}{[\textbf{Reminder}: #1]}}}}
\newcommand{\wolfrem}[1]{{{\color{blue}{[\textbf{Reminder}: #1]}}}}
\newcommand{\mirekrem}[1]{{{\color{blue}{[\textbf{Reminder}: #1]}}}}

\usepackage{tabularx} %
\usepackage{booktabs} %
   
\newcommand{\ourversion}[1]{#1} 
\newcommand{\hide}[1]{} 
\newcommand{\hidetwo}[2]{}

\newcommand{\mkclean}{

    \renewcommand{\ourversion}{\hide}

    \renewcommand{\mirek}{\hide}
    \renewcommand{\wolf}{\hide}
    \renewcommand{\nikos}{\hide}
    \renewcommand{\nofar}{\hide}
    \renewcommand{\benny}{\hide}
    \renewcommand{\reviewer}{\hidetwo}

    \renewcommand{\nikosrem}[1]{} 
    \renewcommand{\nofarrem}[1]{}
    \renewcommand{\wolfrem}[1]{} %
    \renewcommand{\mirekrem}[1]{}
}

\mkclean %

\usepackage[ruled,noend,linesnumbered]{algorithm2e} %

\usepackage{setspace} %

\DontPrintSemicolon     %
\SetNlSty{}{}{}                %
\SetAlgoInsideSkip{smallskip}   %

\SetAlFnt{\small}			%
\SetAlCapFnt{\small}		%
\SetAlCapNameFnt{\small}

\usepackage[capitalise,nameinlink]{cleveref} %
\crefformat{footnote}{#2\footnotemark[#1]#3}

\AtEndPreamble{%
    \hypersetup{colorlinks,
      linkcolor=purple,
      citecolor=blue,
      urlcolor=ACMDarkBlue,
      filecolor=ACMDarkBlue}}

\def\rel#1{\mathsf{#1}}
\def\att#1{\mathrm{#1}}
\def\val#1{\mathtt{#1}}
\def\set#1{\mathord{\{#1\}}}

\newcommand{\eat}[1]{}

\usepackage{tcolorbox}		%
\tcbuselibrary{breakable,skins}		%

\tcbset{examplestyle/.style={
		enhanced jigsaw,	%
		colback=gray!10,	%
		colframe=gray!10,	%
		arc=2mm,			%
		boxrule=0pt,		%
		left=1mm,
		right=1mm,
		left skip=0mm,  %
		right skip=0mm, %
		top=1mm,		%
		bottom=1mm,		%
		breakable,		%
		parbox = false,		%
		before={\par\pagebreak[0]\vspace{2mm}\parindent=0pt},
		after={\par\pagebreak[0]\vspace{1mm}\parindent=0pt},				
		bottomrule = 0mm,
		boxsep = 0mm,					%
		topsep at break=0pt,			%
		bottomsep at break=0pt,			%
		pad at break=0mm,
		pad before break=0mm,		
		pad after break=1mm,		
		bottomrule at break=0mm,
		toprule at break=0mm,		
		}}

\tcolorboxenvironment{example}{examplestyle}

 \sloppy

\setcopyright{none}
\settopmatter{printacmref=false} 
\begin{document}
\title[Tractable Orders for Direct Access to Ranked Answers of CQs]{Tractable Orders for Direct Access to Ranked Answers of Conjunctive Queries}
\author{Nofar	Carmeli}
\authornote{Both authors contributed equally to the paper.}
\orcid{0000-0003-0673-5510}
\email{Nofar.Carmeli@inria.fr}
\affiliation{Technion\country{Israel}}
\affiliation{DI ENS, ENS, CNRS, PSL University, Inria\country{France}}

\author{Nikolaos	Tziavelis}
\authornotemark[1]
\orcid{0000-0001-8342-2177}
\email{tziavelis.n@northeastern.edu}
\affiliation{Northeastern University\country{USA}}

\author{Wolfgang	Gatterbauer}
\orcid{0000-0002-9614-0504}
\email{w.gatterbauer@northeastern.edu}
\affiliation{Northeastern University\country{USA}}

\author{Benny Kimelfeld}
\orcid{0000-0002-7156-1572}
\email{bennyk@cs.technion.ac.il}
\affiliation{Technion - Israel Institute of Technology\country{Israel}}

\author{Mirek	Riedewald}
\orcid{0000-0002-6102-7472}
\email{m.riedewald@northeastern.edu}
\affiliation{Northeastern University\country{USA}}

\begin{abstract}
We study the question of when we can provide \emph{direct access to the $k$-th answer} to a Conjunctive Query (CQ) according to a specified order over the answers in time logarithmic in the size of the database, following a preprocessing step that constructs a data structure in time quasilinear in database size.
Specifically, we embark on the challenge of identifying \emph{the tractable answer orderings}, that is, those orders that allow for such complexity guarantees.  To better understand the computational challenge at hand, we also investigate the more modest task of providing access to only a single answer (i.e., finding the answer at a given position), a task that we refer to as \emph{the selection problem}, and ask when it can be performed in quasilinear time. We also explore the question of when selection is indeed easier than ranked direct access.
We begin with \emph{lexicographic orders}. For each of the two problems, we give a decidable characterization (under conventional complexity assumptions) of the class of tractable lexicographic orders for every CQ without self-joins.  We then continue to the more general \emph{orders by the sum of attribute weights} and establish the corresponding decidable characterizations, for each of the two problems, of the tractable CQs without self-joins. Finally, we explore the question of when the satisfaction of Functional Dependencies (FDs) can be utilized for tractability, and establish the corresponding generalizations of our characterizations for every set of unary FDs.

\end{abstract}

\maketitle

\section{Introduction}
\label{sec:intro}

\emph{When can we support direct access
to a ranked list of answers to
a database query without (and
considerably faster than) materializing all
answers?}
To illustrate the concrete instantiation of this question, assume the following simple relational schema 
for
information about activities of residents in the context of pandemic spread:
\[
  \rel{Visits}(\att{person},\att{age},\att{city})
  \quad\quad 
\rel{Cases}(\att{city},\att{date},\att{\#cases})
\]
Here, the relation $\rel{Visits}$ mentions, for each person in the database, the cities that the
person visits regularly (e.g., for work and for visiting relatives) and the age of
the person (for risk assessment); 
the relation $\rel{Cases}$ specifies
the number of new infection cases in specific cities at specific dates
(a measure that is commonly used for spread assessment albeit being
sensitive to the amount of testing).

Suppose that we wish to
efficiently compute the natural join $\rel{Visits}\Join \rel{Cases}$
based on equality of the city attribute, so that we have 
all combinations of people (with their age), the cities they regularly
visit, and the city's daily new cases.  For
example,\[(\val{Anna},\val{72},\val{Boston},\val{12/7/2020},\val{179})\,.\]
While the number of such answers could be quadratic in the size of the
database, the seminal work of Bagan, Durand, and
Grandjean~\cite{bdg:dichotomy} has established that the answers can be
enumerated with a constant delay between
consecutive answers, after a linear-time preprocessing phase.  This is
due to the fact that this join is a special case of a \e{free-connex}
Conjunctive Query (CQ).
In the case of CQs without self-joins, being free-connex 
is a sufficient and necessary condition for such efficient
evaluation~\cite{bb:thesis,bdg:dichotomy}.  The necessity requires
conventional assumptions in fine-grained complexity\footnote{For the
  sake of simplicity, throughout this section we make all of these
  complexity assumptions.  We give their formal statements in
  \Cref{sec:hypotheses}.}
and it holds even if we multiply the preprocessing time and delay by a
logarithmic factor
in the size of the database.\footnote{We refer to
  those as \emph{quasilinear preprocessing} and \emph{logarithmic delay},
  respectively.}
To realize the constant (or logarithmic) delay, 
the preprocessing
phase builds a data structure that enables efficient iteration over
the answers in the enumeration phase. Brault-Baron~\cite{bb:thesis}
showed that in the linear-time preprocessing phase, we can build a
structure with better guarantees: not only log-delay enumeration, but
even log-time \e{direct access}: 
a structure that, given $k$,
allows to directly retrieve the $k^\textrm{th}$ answer in the
enumeration without needing to enumerate the preceding
$k-1$ answers first.\footnote{``Direct access'' is also widely known as ``random access.'' We prefer to use ``direct access'' to avoid confusion with the problem of answering ``in random order.'' }
Later, Carmeli et al.~\cite{oldRandomAccess} 
showed how such a structure can be used for enumerating answers in
a random order (random permutation)\footnote{Not to be confused with ``random access.''}
with the statistical guarantee that the order is uniformly distributed.  In particular, in the above example we can
enumerate the answers of $\rel{Visits}\Join \rel{Cases}$ in a provably
uniform random permutation (hence, ensuring statistical validity of
each prefix) with logarithmic delay, after a linear-time preprocessing
phase.  Their direct-access structure also allows for
\e{inverted access}: given an answer, return the index $k$ of that
answer (or determine that it is not a valid answer).
Recently, Keppeler~\cite{Keppeler2020Answering} proposed another direct-access structure with the additional ability to allow efficient database updates, but at the cost of only supporting a limited subset of free-connex CQs.
All known direct-access structures~\cite{bb:thesis,oldRandomAccess,Keppeler2020Answering}
allow the answers to be \e{sorted} by some lexicographic order (even
if the formal results do not explicitly state it).
For instance, in our example of $\rel{Visits}\Join \rel{Cases}$, 
the structure could be such that the tuples are enumerated in the (descending or ascending) 
order of $\att{\#cases}$ and then by date, or in the order of $\att{city}$ and then by $\att{age}$.
 Hence, in logarithmic time we can evaluate quantile queries, namely find the $k^\textrm{th}$ answer in order, and determine the position of a tuple inside the sorted list. From this we can also
conclude (fairly easily) 
 that we can enumerate the answers ordered by $\att{age}$ where ties are broken randomly, again provably uniformly.  
 Carmeli et al.~\cite{oldRandomAccess} have also shown how the order of the answers can be useful for generalizing 
 direct-access algorithms from CQs to UCQs.
Notice that direct access to the sorted list of answers is a stronger requirement than \e{ranked enumeration} that has been studied in recent work~\cite{tziavelis20vldb,tziavelis20tutorial,deep19,YangRLG18:anyKexploreDB,DBLP:conf/icdt/BourhisGJR21,DBLP:conf/www/YangAGNRS18}.

Yet, the choice of lexicographic order is an artefact
of the structure construction, e.g., the elimination
order~\cite{bb:thesis}, the join tree~\cite{oldRandomAccess}, or the $q$-tree \cite{Berkholz17updates}. 
If
the application desires a
specific lexicographic order, we can only
hope to find a matching construction. 
However, this is not necessarily
possible.  For example, could we construct in (quasi)linear time a
direct-access structure for $\rel{Visits}\Join \rel{Cases}$ ordered by
$\att{\#cases}$ and then by $\att{age}$? 
Interestingly, 
we will show that
the answer is negative: it is impossible to build in quasilinear time
a direct-access structure with logarithmic access time
for that lexicographic order.

Getting back to the question posed at the beginning of this section,
in this paper we embark on the challenge of identifying, for each CQ,
the orders that allow for efficiently constructing a direct-access
structure. We adopt the \emph{tractability yardstick} of quasilinear
construction (preprocessing) time and logarithmic access time. In
addition, we focus on two types of orders: lexicographic orders, and
scoring by the sum of attribute weights.

As aforesaid, some of the orders that we study are intractable. To
understand the root cause of the hardness, we consider another task
that allows us to narrow our question to a considerably weaker
guarantee.  Our notion of tractability so far requires the
construction of a structure in quasilinear time and a direct access in
logarithmic time.  In particular, if our goal is to compute just a
\emph{single} quantile, say the $k^\textrm{th}$ answer, then it takes
quasilinear time.  Computing a single quantile is known as the
\e{selection problem}~\cite{blum73select}.  The question we ask is to
what extent selection is a weaker requirement than direct access in
the case of CQs. In other words, how much larger is the class of
ordered CQs with quasilinear selection than that of CQs with a
quasilinear construction of a logarithmic-access structure?

In some situations, we might be able to avoid hardness through a more
careful inspection of the \e{integrity constraints} that the database
guarantees on the source relations.  For example, it turns out that we
\e{can} construct in quasilinear time a direct-access structure for
$\rel{Visits}\Join \rel{Cases}$ ordered by $\att{\#cases}$ and then by
$\att{age}$ if we assume that each city occurs at most once in
$\rel{Cases}$ (i.e., for each city we have a report for a single
day). Hence, it may be the case that an ordered CQ is classified as
intractable (with respect to the guarantees that we seek), but it
becomes tractable if we are allowed to assume that the input database
satisfies some integrity constraints such as key constraints or more
general Functional Dependencies (FDs). Moreover, FDs are so common
that ignoring them implies that we often miss opportunities of fast
algorithms.  This angle arises regardless of answer ordering, and
indeed, Carmeli and Kr{\"{o}}ll~\cite{DBLP:journals/mst/CarmeliK20}
showed precisely how the class of (self-join-free) CQs with tractable enumeration
extends in the presence of FDs. Accordingly, we extend our study on
ranked direct access and the selection problem to incorporate FDs, and
aim to classify every combination of \e{(a)}  CQ, \e{(b)} order over
the answers, \e{and (c)}  set of FDs.

\begin{figure*}[t]
\centering
\includegraphics[width=\linewidth]{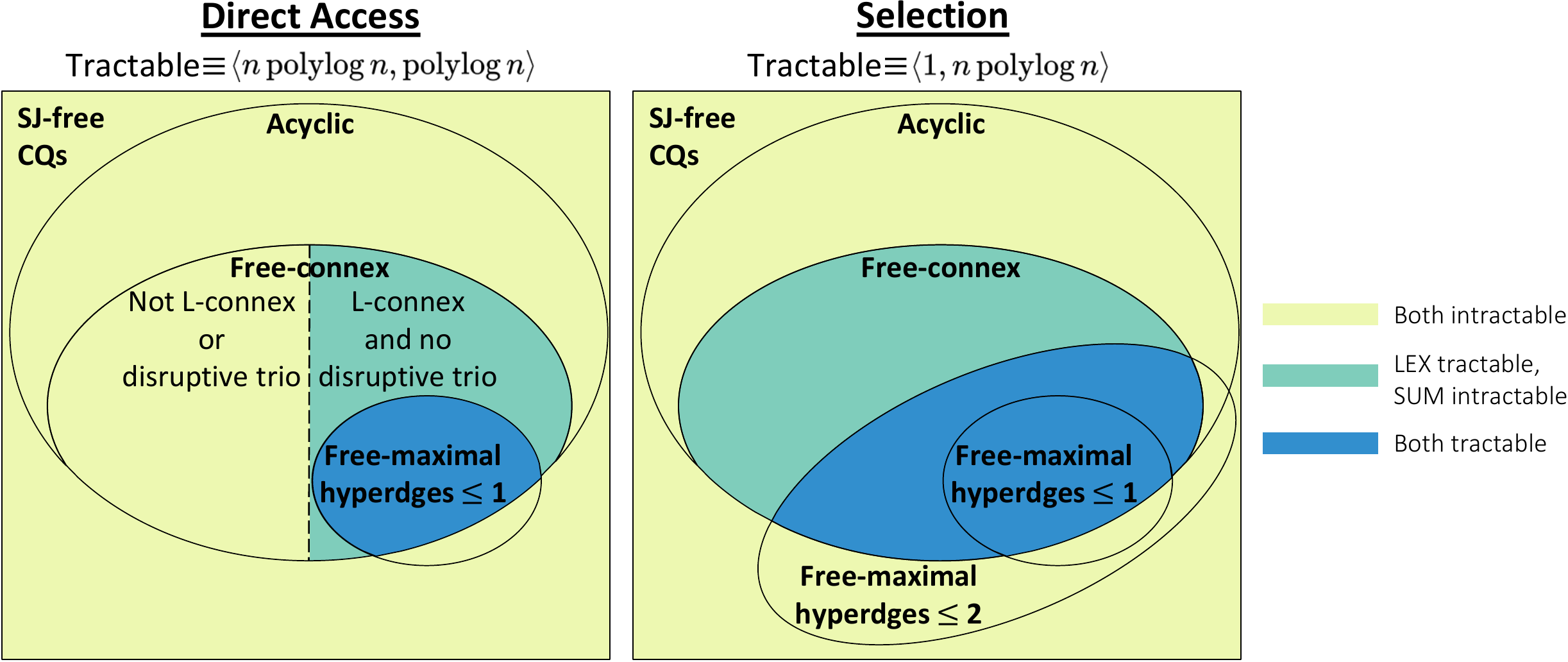}
\caption{Overview of our results for lexicographic (LEX) orders 
and sum-of-weights (SUM) orders. 
CQs without self-joins (SJ-free) are classified based on the  tractability of the \emph{direct access problem} (left) and the \emph{selection problem} (right).
The $L$-connex property applies only to lexicographic orders $L$ (the precise definitions are given in \Cref{sec:prelim}).
All tractable cases extend to CQs with self-joins. 
The sizes of the ellipses are arbitrary and do not correspond to the size or importance of the classes.}
\label{fig:overview}
\end{figure*}

\paragraph{Contributions} Before we describe the results that we
establish in this manuscript, let us illustrate them on an example.

\begin{figure}[t]
\centering
\begin{subfigure}{.23\linewidth}
    \centering
	\setlength{\tabcolsep}{1mm}
	\hspace{0mm}
	\mbox{
		\begin{tabular}[t]{ >{$}c<{$} | >{$}c<{$}  >{$}c<{$} }
		\mathbf{R}	&  x & y \\
		\hline
			& 1 & 5 		\\			
			& 1 & 2		\\
			& 6 & 2  		\\
		\multicolumn{3}{c}{}	
		\end{tabular}			
	}
	\hspace{0mm}
	\mbox{
		\begin{tabular}[t]{ >{$}c<{$} | >{$}c<{$} >{$}c<{$} }
		\mathbf{S}	& y		& z	 \\
		\hline
			& 5	& 3		\\
			& 5	& 4		\\
			& 5	& 6		\\			
			& 2	& 5		\\
		\multicolumn{3}{c}{}		
		\end{tabular}
	}
	\caption{Example Database.}
\end{subfigure}	
\hfill
\begin{subfigure}{.24\linewidth}
    \centering
	\setlength{\tabcolsep}{1mm}
	\hspace{0mm}
	\mbox{
		\begin{tabular}[t]{ >{$}c<{$} | >{$}c<{$}  >{$}c<{$} >{$}c<{$} }
		\mathbf{Q}	&  x 	& y   & z \\
		\hline
		\#1	& 1 & 2 & 5 	\\			
		\#2	& 1 & 5 & 3 	\\
		\rowcolor{blue!12}
		\#3	& 1 & 5 & 4		\\ 					
		\#4	& 1 & 5 & 6		\\ 				
		\#5	& 6 & 2 & 5		\\			
		\end{tabular}			
	}
	\label{fig:intro_lex1}
	\caption{LEX ordering $\angs{x, y, z}$.}
\end{subfigure}	
\begin{subfigure}{.24\linewidth}
    \centering
	\setlength{\tabcolsep}{1mm}
	\hspace{0mm}
	\mbox{
		\begin{tabular}[t]{ >{$}c<{$} | >{$}c<{$}  >{$}c<{$} >{$}c<{$} }
		\mathbf{Q}	&  x 	& z   & y \\
		\hline
		\#1	& 1 & 3 & 5 	\\			
		\#2	& 1 & 4 & 5 	\\
		\rowcolor{blue!12}		
		\#3	& 1 & 5 & 2		\\ 	
		\#4	& 1 & 6 & 5 	\\						
		\#5	& 6 & 5 & 2		\\			
		\end{tabular}			
	}
	\label{fig:intro_lex2}
	\caption{LEX ordering $\angs{x, z, y}$}
\end{subfigure}	
\begin{subfigure}{.26\linewidth}
    \centering
	\setlength{\tabcolsep}{1mm}
	\hspace{0mm}
	\mbox{
		\begin{tabular}[t]{ >{$}c<{$} | >{$}c<{$}  >{$}c<{$} >{$}c<{$} | >{$}c<{$}}
		\mathbf{Q}	&  x 	& y   & z	& x + y + z \\
		\hline
		\#1	& 1 & 2 & 5 & 8	\\
		\#2	& 1 & 5 & 3 & 9	\\
		\rowcolor{blue!12}		
		\#3	& 1 & 2 & 6 & 9	\\	
		\#4	& 1 & 5 & 4	& 10	\\ 	 
		\#5	& 6 & 2 & 5	& 13	\\							
		\end{tabular}			
	}
	\label{fig:intro_sum}
	\caption{SUM ordering.}
\end{subfigure}	
\caption{\Cref{ex:intro}: An example input database (a) and possible orderings of the answers to the query $Q(x, y, z) \datarule R(x, y), S(y, z)$.
The orderings in (b) and (c) use two different lexicographic orders (LEX),
while the ordering in (d) uses a sum-of-weights order where the weights are assumed to be identical to the attribute values.}
\label{fig:intro}
\end{figure}

\begin{example}
\label{ex:intro}
\Cref{fig:intro} depicts an example database and different orderings of the answers 
to the 2-path CQ $Q(x, y, z) \datarule R(x, y), S(y, z)$.
The question we ask is whether the median 
(e.g., the 3rd answer in the example) or in general, the answer in any index
can be computed efficiently as the database size $n$ grows.
Tractable direct access requires $\bigO(\polylog n)$ per access after $\bigO(n \polylog n)$ preprocessing,
while tractable selection requires $\bigO(n \polylog n)$ for a single access.
We compare the impact of different orders, projections, and FDs on the example 2-path CQ.
\begin{itemize}
    \item LEX $\angs{x, y, z}$: Direct access is tractable.
    \item LEX $\angs{x, z, y}$: Direct access is intractable because
      the lexicographic order ``does not agree'' with the query
      structure, which we capture through the concept of a
      \e{disruptive trio} that we introduce later on. However, selection is tractable.
    \item LEX $\angs{x, z}$: Direct access is intractable because the query is not $L$-connex for the partial lexicographic order $L$. However, selection is again tractable.
    \item LEX $\angs{x, z}$ and $y$ projected away: Selection is now intractable because the query is not free-connex.
    \item LEX $\angs{x, z, y}$, with the FD $R: y \rightarrow x$ or the FD $S: y \rightarrow z$: Direct access is tractable as a consequence of earlier work on enumeration with FDs~\cite{DBLP:journals/mst/CarmeliK20}.
    \item LEX $\angs{x, z, y}$, with the FD $R: x \rightarrow y$: Direct access is tractable with the techniques that we develop in this paper. Intuitively, the FD implies that the order is equivalent to the tractable order $\angs{x, y, z}$.
    \item LEX $\angs{x, z, y}$, with the FD $S: z \rightarrow y$: Direct access is intractable since the FD does not help in this case.
    \item SUM $x+y+z$: Direct access is intractable because it would allow us to solve the 3SUM problem in subquadratic time, yet selection is tractable.
    \item SUM $x+y$ and $z$ projected away: Direct access is tractable because all the free variables are contained in $R$.
    \item SUM $x+z$ and $y$ projected away: Selection is intractable because the query is not free-connex.    
\end{itemize}

\end{example}

Our first main result is an algorithm for direct access for lexicographic orders, including ones that are not achievable by  past structures. 
We further show that within the class of CQs without self-joins,
our algorithm covers all the tractable cases (in the sense adopted here),
and we
establish a decidable and easy-to-test classification
of the lexicographic orders over the free variables into tractable
and intractable ones.  For instance, in
the case of $\rel{Visits}\Join \rel{Cases}$ the lexicographic order
$(\att{\#cases}, \att{age}, \att{city},\att{date},\att{person})$
is intractable.  It is classified as such because $\att{\#cases}$ and
$\att{age}$ are non-neighbours (i.e., do not co-occur in the same
atom), but $\att{city}$, which comes after $\att{\#cases}$ and
$\att{age}$ in the order, is a neighbour of both.  This is what we
call a \e{disruptive trio}.
The lexicographic order
$(\att{\#cases}, \att{age})$ is also intractable since the query
$\rel{Visits}\Join \rel{Cases}$ is not
$\set{\att{\#cases}, \att{age}}$-connex.  
In contrast, the
lexicographic order $(\att{\#cases},\att{city},\att{age})$ is
tractable.  We also show that within the tractable side, the structure
we construct
allows for inverted access
in constant time.

Our classification is proved in two steps.  We begin by considering
the complete lexicographic orders (that involve all free variables). 
We
show that for free-connex CQs without self-joins, 
the absence of a disruptive trio is a sufficient and
necessary condition for tractability. 
We then generalize to partial
lexicographic orders $L$
where the ordering is determined only by
a \emph{subset of the free variables}. 
There, the
condition is that there is no disruptive trio \e{and} that the query
is $L$-connex
(a similar condition to being free-connex,
but for the subset of the variables that appear in $L$ instead of the free ones). 
Interestingly, it turns out
that a partial lexicographic order is tractable if and only if
it is the prefix of a complete tractable lexicographic order.

Next, we study the selection problem for lexicographic orders and show that being free-connex  is a sufficient and necessary condition for a linear-time solution in the case of CQs without self-joins. 
In particular, there are ordered queries  with a tractable selection but intractable direct access, namely free-connex CQs without self-joins where we have a disruptive trio  
or lack the property of being $L$-connex.

A lexicographic order is a special case of an ordering by the \e{sum
of attribute weights}, where every database value is mapped to some weight. 
Hence, a natural question is which CQs have a tractable direct
access by the order of sum.  For example, what about
$\rel{Visits}\Join \rel{Cases}$ with the order
$(\alpha{\cdot}\att{\#cases}+\beta{\cdot}\att{age})$?  It is easy to
see that this order is intractable because the lexicographic order
$(\att{\#cases}, \att{age})$ is intractable.  In fact, it is easy to
show that an
order by sum is intractable whenever \e{there exists} an intractable
lexicographic order (e.g., there is a disruptive trio).
However, we will show that the situation is worse: the only tractable case is the
one where the CQ is acyclic and there is an atom that contains all of
the free variables. 
In particular, ranked direct access by sum is intractable
already for the Cartesian product
$Q(c_1,d,x,p,a,c_2)
  \datarule\rel{Visits}(p,a,c_1),\rel{Cases}(c_2,d,x)$, even though
\e{every} lexicographic order is tractable (according to our
aforementioned classification).
This daunting hardness also emphasizes how ranked direct access is fundamentally harder than \e{ranked enumeration} where, in the case of the sum of attributes, the answers of \emph{every free-connex}
CQ can be enumerated with logarithmic delay after a 
linear
preprocessing time~\cite{tziavelis22anyk}.

Next, we study the selection problem for the sum of weights, and establish the following dichotomy in complexity (again
assuming fine-grained hypotheses): the selection problem can be solved
in $\bigO(n\log n)$ time, where $n$ is the size of the database, if and
only if the hypergraph of the CQ restricted to the free variables contains at most two maximal
hyperedges (w.r.t.~containment).
The tractable side is applicable even in the presence of self-joins, and it is achieved by adopting an algorithm
by Frederickson and Johnson~\cite{frederickson84selection}
originally developed for selection on sorted matrices.
For illustration, the selection problem is solvable in quasilinear time
for the query $\rel{Visits}\Join \rel{Cases}$ ordered by sum.

Lastly, we study the implication of FDs on our results, and generalize all of them to incorporate a set of unary FDs (i.e., FDs with a single attribute on the left-hand side).
Like previous works on FDs on
enumeration~\cite{DBLP:journals/mst/CarmeliK20},
deletion propagation~\cite{DBLP:conf/pods/Kimelfeld12},
resilience~\cite{DBLP:journals/pvldb/FreireGIM15},
and probabilistic inference~\cite{DBLP:journals/pvldb/GatterbauerS15}, we use the notion of an extended CQ to reason about the tractability of a CQ under the presence of FDs.
The idea is that by looking at the structure of the extended CQ (without FDs),
then we are able to classify the original CQ together with the FDs.
While this works in a relatively straightforward way for the case of sum,
the case of lexicographic orders is more involved since the FDs may interact with the order in non-trivial ways.
To extend our dichotomy results for lexicographic orders to incorporate FDs,
we show how the extension of a CQ and order may also result in a \e{reordering} of the variables.
Then, tractability is decided by the extended CQ together with the reordered lexicographic order.

\paragraph{Overview of results} We summarize our results 
(excluding the dichotomies under the presence of FDs) in
\Cref{fig:overview} with different colors indicating the tractability
of the studied orders, namely
lexicographic (LEX) and sum-of-weights (SUM) orders. 
For both direct access and selection, we obtain the precise
picture of the orders and CQs without self-joins that are tractable according to our yardstick: $\bigO(n \polylog n)$ preprocessing and $\bigO(\polylog n)$ per access for direct access
(conveniently denoted as $\langle n \polylog n, \polylog n \rangle$
for $\langle \textrm{preprocessing}, \textrm{access} \rangle$)
and $\bigO(n \polylog n)$ for selection (or $\Comp{1}{n \log n}$).
Finally, we show how the results are
affected by every set of unary FDs (not depicted in \Cref{fig:overview}); 
in other words, we extend our
dichotomies to incorporate the FDs of the underlying schema under the
restriction that all FDs have a single attribute on the premise
(leaving the case of the more general FDs open for future research).

\paragraph{Comparison to an earlier conference version} A
preliminary version of this manuscript appeared
in a conference proceedings~\cite{confversion}. Compared to that
version, this manuscript includes several significant extensions and
improvements. First, we added an investigation on the complexity
of the selection problem with lexicographic orders, establishing a
complete dichotomy result (\Cref{thm:lex-selection-dicotomy} and all
of \Cref{sec:lex_selection}). Second, we extended a dichotomy
from the conference version to include self-join-free CQs \e{beyond
  full CQs} for the selection problem by SUM, so it now covers all
self-join-free CQs (with projections), thereby resolving the
corresponding open question from the conference paper
(\Cref{sec:sum_selection_projections}).  Third, we extended our
results to cover unary FDs (\Cref{sec:FDs}).  Fourth, we have
clarified the relationship between the disruptive trio and the concept
of an elimination order (\Cref{rem:elim_order}).  Fifth and last, we
made considerable simplifications and improvements in previous
components, including the proof of hardness of direct access for
lexicographic orders (\Cref{lemma:hardness})
and several proofs for the
SUM selection (\Cref{sec:sum_selection}).

\paragraph{Applicability}
It is important to note that while our results are stated over a limited class of queries (a fragment of acyclic CQs), there are some implications beyond this class that are immediate yet significant. In particular, we can use known techniques that reduce other CQs to a tractable form and then apply our direct-access or selection algorithms. 
As an example, a \emph{hypertree decomposition} can be used to transform a cyclic CQ to an acyclic form by paying a non-linear overhead during preprocessing~\cite{GottlobGLS:2016}.
As another example, a CQ with inequality ($<$) predicates can be reduced to a CQ
without inequalities
by paying only a polylogarithmic-factor increase in the size of the database~\cite{tziavelis21inequalities}.

\paragraph{Outline} The remainder of the manuscript is organized as
follows.  \Cref{sec:prelim} gives the necessary background.
In
\Cref{sec:lexic}, we consider direct access by lexicographic orders
that include all the free variables, and \Cref{sec:partial} extends
the results to partial ones.  We move on to the (for the most part)
negative results for direct access by sum orders in
\Cref{sec:sum_direct}. We study the selection problem for
lexicographic orders and sum in \Cref{sec:lex_selection} and
\Cref{sec:sum_selection}, respectively.  We extend our results to
incorporate unary FDs in \Cref{sec:FDs} and, lastly, conclude and
give future directions in \Cref{sec:conclusions}.

\section{Preliminaries}
\label{sec:prelim}

\subsection{Basic Notions}
\label{sec:basicNotions}

\introparagraph{Database}
A \e{schema} $\calS$ is a set of relational symbols $\{ R_1, \ldots, R_m \}$.
We use $\ar(R)$ for the arity of a relational symbol $R$.
A \e{database instance} $I$ 
contains a finite relation $R^I \subseteq \dom^{\ar(R)}$ for each $R \in \calS$, 
where $\dom$ is a set of constant values called the \e{domain}.
When $I$ is clear, we simply use $R$ instead of $R^I$.
We use $n$ for the size of the database, i.e., 
the total number of tuples.

\introparagraph{Queries}
A \e{conjunctive query} (CQ) $Q$ over schema $\calS$ is an expression of the form
$Q(\mathbf{X}_f) \datarule R_1(\mathbf{X}_1), \ldots, R_\ell(\mathbf{X}_\ell)$,
where the tuples $\mathbf{X}_f, \mathbf{X}_1, \ldots, \mathbf{X}_\ell$ hold variables,
every variable in $\mathbf{X}_f$ appears in some $\mathbf{X}_1, \ldots, \mathbf{X}_\ell$,
and $\{ R_1, \ldots, R_\ell \} \subseteq \calS$.
Each $R_i(\mathbf{X}_i)$ is called an \e{atom} of the query $Q$, 
and the set of all atoms is denoted by $\atoms(Q)$.
We use $\var(e)$ 
or $\var(Q)$ for the set of variables that appear in an atom $e$ or query $Q$, respectively.\footnote{We use $e$ for atoms because of the natural analogy to hyperedges in hypergraphs associated with a query $Q$.}
The variables $\mathbf{X}_f$ are called \e{free} and are denoted by $\free(Q)$.
A CQ is \e{full} if $\free(Q) = \var(Q)$ and \e{Boolean} 
if $\free(Q) = \emptyset$.
Sometimes, we say that CQs that are not full have \e{projections}.
A repeated occurrence of a relational symbol is a \e{self-join} and if no self-joins exist, a CQ is called \e{self-join-free}.
A homomorphism 
from a CQ $Q$ to a database $I$ 
is a mapping of $\var(Q)$ to constants from $\dom$,
such that every atom of $Q$ maps to a tuple in the database $I$.
A \e{query answer} is such a homomorphism 
followed by a projection 
on
the free variables.
The answer to a Boolean CQ is whether such a homomorphism exists.
The set of query answers is $Q(I)$ and we use $q \in Q(I)$ for a query answer.
For an atom $R(\mathbf{X})$ of a CQ, 
we say that a tuple $t \in R$ assigns a variable $x$ to value $c$ and denote it as $t[x] = c$
if for every index $i$ such that $\mathbf{X}[i] = x$ we have that $t[i] = c$.
The \e{active domain} of a variable $x$ is the subset of $\dom$ that $x$ can be assigned from the database $I$.

\introparagraph{Hypergraphs}
A {\em hypergraph} $\calH=(V,E)$ is a set $V$ of {\em vertices} and a set $E$ of subsets of $V$ called {\em hyperedges}.
Two vertices in a hypergraph are {\em neighbors} if they appear in the same edge.
A {\em path} of $\calH$ is a sequence of vertices such that every two succeeding vertices are neighbors.
A {\em chordless path} is a path in which no two non-succeeding vertices appear in the same hyperedge 
(in particular, no vertex appears twice).
A \emph{join tree} of a hypergraph $\calH=(V,E)$ is a tree $T$ where the nodes\footnote{
To make a clear distinction between the vertices of a hypergraph and those of its join tree, we call the latter nodes.}
are the hyperedges of $\calH$
and the {\em running intersection} property holds, namely: 
for all $u \in V$ the set $\{e \in E \mid u \in e\}$ forms a (connected) subtree in $T$.
An equivalent phrasing of the running intersection property is that 
given two nodes $e_1,e_2$ of the tree, 
for any node $e_3$ on the simple path between them, we have that $e_1\cap e_2\subseteq e_3$.
A hypergraph $\calH$ is {\em acyclic} if there exists a join tree for $\calH$.
We associate a hypergraph $\calH(Q) = (V, E)$ to a CQ $Q$ 
where the vertices are the variables of $Q$, 
and every atom of $Q$ corresponds to a hyperedge with the same set of variables.
Stated differently, $V = \var(Q)$ and
$E = \{ \var(e) | e \in \atoms(Q) \}$.
With a slight abuse of notation, we identify atoms of $Q$ with hyperedges of $\calH(Q)$.
A CQ $Q$ is {\em acyclic} if $\calH(Q)$ is acyclic,
otherwise it is \e{cyclic}.
The free-restricted hypergraph $\calH_{\free}(Q)$ is the restriction of $\calH(Q) = (V, E)$ on the
nodes that correspond to free variables, i.e., 
$\calH_{\free}(Q) = (\free(Q), \{e \cap \free(Q) | e \in E\})$.

\introparagraph{Free-connex CQs}
A hypergraph $\calH'$ is an \emph{inclusive extension} of $\calH$ if every edge of $\calH$ appears in $\calH'$, and every edge of $\calH'$ is a subset of some edge in $\calH$.
Given a subset $S$ of the vertices of $\calH$, a tree $T$ is an 
\emph{ext-$S$-connex tree} (i.e., extension-$S$-connex tree)
for a hypergraph $\calH$ if:
(1) $T$ is a join tree of an inclusive extension of $\calH$, and
(2) there is a subtree\footnote{By subtree, we mean any connected subgraph of the tree.} $T'$ of $T$ that contains exactly the vertices $S$~\cite{bdg:dichotomy}.
We say that a hypergraph is $S$-connex if it has an ext-$S$-connex tree~\cite{bdg:dichotomy}.
A hypergraph is $S$-connex iff it is acyclic and it remains acyclic after the addition of a hyperedge containing exactly $S$~\cite{bb:thesis}.
Given a hypergraph $\calH$ and a subset $S$ of its vertices, an {\em $S$-path} is a chordless path $(x,z_1,\ldots,z_k,y)$ in $\calH$ with $k\geq 1$, such that $x,y\in S$, and $z_1,\ldots,z_k\not\in S$.
A hypergraph is $S$-connex iff
it has no $S$-path~\cite{bdg:dichotomy}.
A CQ $Q$ is {\em free-connex} if $\calH(Q)$ is $\free(Q)$-connex~\cite{bdg:dichotomy}.
Note that a free-connex CQ is necessarily acyclic.\footnote{Free-connex CQs are sometimes called in the literature \emph{free-connex acyclic} CQs~\cite{bdg:dichotomy}. As free-connexity is not defined for cyclic CQs, we choose to omit the word \emph{acyclic} and simply call these CQs \emph{free-connex}.}
An implication of the characterization given above is that it suffices to find a join-tree for an inclusive extension of a hypergraph $\calH$ to infer that $\calH$ is acyclic.

To simplify notation, we also say that a CQ is $L$-connex for a (partial) lexicographic order $L$ if the CQ is $S$-connex for the set of variables $S$ that appear in $L$.
Generalizing the notion of an inclusive extension,
we say that a hypergraph $\calH'$ is \emph{inclusion equivalent} to $\calH$ if every hyperedge of $\calH$ is a subset of some hyperedge of $\calH'$ and vice versa.
For example, the two hypergraphs with hyperedges 
$E_1 = \{\{x, y\}, \{y, z\}\}$ and $E_2 = \{\{x, y\}, \{y, z\}, \{z\}\}$
are inclusion equivalent because
$\{z\}$ is a subset of $\{y, z\}$ and every hyperedge is trivially a subset of itself.

\subsection{Problem Definitions}

\introparagraph{Orders over Answers}
For a CQ $Q$ and database instance $I$, 
we assume a \e{total order} $\preceq$ over the query answers $Q(I)$.
We consider two types of orders in this paper:

\begin{enumerate}
	\item \textbf{LEX}:
	Assuming that the domain values are ordered, a \e{lexicographic order} $\lex$ is an ordering of $\free(Q)$ 
	such that $\preceq$ compares two query answers $q_1, q_2$ on the value of the first variable in $\lex$,
	then on the second (if they are equal on the first), 
	and so on~\cite{harzheim06ordered}. 
	A lexicographic order is called \e{partial} if the variables in $\lex$ are a subset of $\free(Q)$.

	\item \textbf{SUM}:
	The second type of order assumes
	given weight functions that assign real-valued weights to the domain values of each variable.
	More precisely, for each variable $x$, we define a function $w_x: \dom \rightarrow \R$.
	Then, the query answers are ordered by a weight which is computed 
	by aggregating the weights of the assigned values of free variables.
	In a \e{sum-of-weights order}, denoted by SUM, 
	we have the weight of each query answer $q \in Q(I)$ to be $w_Q(q) = \sum_{x \in \free(Q)} w_x(q(x))$ and 
    $q_1 \preceq q_2$ implies that $w_Q(q_1) \leq w_Q(q_2)$.
    We emphasize that we allow only free variables to have weights,
    otherwise different semantics for the query answers are possible \cite{Tziavelis:fullversion}.
	To simplify notation, we sometimes refer to all $w_x, x \in \free(Q)$ 
	and $w_Q$ together as one weight function $w$.
\end{enumerate}

\introparagraph{Attribute Weights vs.\ Tuple Weights for SUM}
Notice that in the definition above, 
we assume that the input weights are assigned to the domain values of the attributes.
Alternatively, the input weights could be assigned to the relation tuples, 
a convention that has been used in past work on ranked enumeration \cite{tziavelis20vldb}.
Since there are several reasonable semantics for interpreting a tuple-weight ranking for CQs with projections and/or self-joins \cite{Tziavelis:fullversion},
we elect to present our results for the case of attribute weights.
We note that our results directly extend to the case of tuple weights
for full self-join-free CQs where the semantics are clear.
On the one hand,
attribute weights can easily be transformed to tuple weights in linear time such that the weights of the query answers remain the same.
This works by assigning each variable to one of the atoms that it appears in,
and computing the weight of a tuple by aggregating the weights of the assigned attribute values.
Therefore, our hardness results for SUM orders directly extend to the case of tuple weights.
On the other hand, our positive results on
direct access (\Cref{sec:sum_direct}),
selection (\Cref{sec:sum_selection_tractable})
and their extension to the case of FDs (\Cref{sec:fds_sum})
rely on algorithms that innately operate on tuple weights,
thus we cover those cases too.

\introparagraph{Direct Access vs.\ Selection}
We now define two problems that both directly access ordered query answers.
Since our goal is to classify the combination of CQs and orders by their tractability, 
we let those two define the problem.
Specifically, a problem is defined by a CQ $Q$ and a family of orders $\Pi$.
The reason that we use a family of orders in the problem definition is that 
for the case of SUM,
we do not distinguish between different weight functions in our classification.
For LEX, we always consider the family of orders to contain only one specific (partial) lexicographic order.

\begin{definition}[Direct Access]
    Let $Q$ be a CQ and $\Pi$ a family of total orders.
    The problem of \e{direct access} by $\Pi$ takes as an input a database $I$ and an order $\prec$ from $\Pi$
    and constructs a data structure (called the \e{preprocessing} phase)
    which then allows access to a query answer $q \in Q(I)$ at any index $k$ of the
    (implicit) array of query answers sorted by $\prec$.
\end{definition}

The essence of direct access is that after the preprocessing phase, we need to be able to support
multiple such accesses.
Notably, the values of $k$ that are going to be requested afterward are not known during preprocessing.

\begin{definition}[Selection]
    Let $Q$ be a CQ and $\Pi$ a family of total orders.
    The problem of \e{selection} by $\Pi$ takes as an input a database $I$,
    an order $\prec$ from $\Pi$,
    and asks for the query answer $q \in Q(I)$ at index $k$ of the
    (implicit) array of query answers sorted by $\prec$.
\end{definition}	

The problem of \e{selection} \cite{blum73select,floyd75select,frederickson93select}
is a computationally easier task that requires only a single direct access, 
hence does not make a distinction between preprocessing and access phases.
A special case of the problem is finding the median query answer.

For both problems, if the index $k$ exceeds the total number of answers, 
the returned answer is ``out-of-bound''.

\subsection{Complexity Framework and Sorting}
\label{sec:complexity}

We measure asymptotic complexity in terms of the size of the database $n$,
while the size of the query is considered a constant.
If the time for preprocessing is $\bigO\big(f(n)\big)$ and the time for each access is $\bigO\big(g(n)\big)$,
we denote that as $\Comp{f(n)}{g(n)}$,
where $f,g $ are functions from $\N$ to $\R$.
Note that by definition, the problem of selection asks for a $\Comp{1}{g(n)}$ solution.

Our goal for both problems is to achieve efficient access in time significantly smaller than (the worst case) $|Q(I)|$.
For direct access, we consider the problem tractable if $\Comp{n \log n}{\log n}$ is possible,
and for selection $\Comp{1}{n \log n}$.

The model of computation is the standard RAM model with uniform cost measure.
In particular, it allows for linear time construction of lookup tables,
which can be accessed in 
constant time.
We would like to point out that some past works \cite{bdg:dichotomy,oldRandomAccess} have assumed 
that in certain variants of the model, sorting 
can be done in linear time \cite{grandjean96sorting}.
Since we consider problems related to summation and sorting \cite{frederickson84selection} 
where a linear-time sort would improve otherwise optimal bounds,
we adopt a more standard assumption that sorting is comparison-based and possible only in quasilinear time.
As a consequence, some upper bounds mentioned in this paper
are weaker than the original sources which assumed linear-time sorting \cite{bb:thesis,oldRandomAccess}.

\subsection{Hardness Hypotheses}
\label{sec:hypotheses}

All the lower bounds we prove are conditional on one or multiple of the following four hypotheses.

\begin{hypothesis}[\sparseBMM]
	Two Boolean matrices $A$ and $B$, represented as lists of their non-zero entries, cannot be multiplied in time $m^{1+o(1)}$, 
	where $m$ is the number of non-zero entries in $A$, $B$, and $AB$.
\end{hypothesis}

A consequence of this hypothesis is that we cannot answer the query 
$Q(x,z) \datarule R(x,y), S(y,z)$
with quasilinear preprocessing and polylogarithmic delay.
In more general terms, any self-join-free acyclic non-free-connex CQ 
cannot be enumerated with quasilinear\footnote{
\label{note:linear}
Works in the literature~\cite{bagan2008computing,berkholz2020tutorial,oldRandomAccess}
typically phrase this as linear,
yet any logarithmic factor increase is still covered by the hypotheses.
}
preprocessing time and polylogarithmic delay assuming the \sparseBMM{} hypothesis~\cite{bdg:dichotomy, berkholz2020tutorial}.

\begin{hypothesis}[\hyperclique~\cite{abboud14conjectures,DBLP:conf/soda/LincolnWW18}]
For every $k \geq 2$, there is no
$O(m \polylog m)$ algorithm for deciding the existence of a
$(k{+}1,k)$-hyperclique in a $k$-uniform hypergraph with $m$ hyperedges,
where a \e{$(k{+}1,k)$-hyperclique} is a set of $k{+}1$ vertices
such that every subset of $k$ elements is a hyperedge.
\end{hypothesis}

When $k=2$, 
this follows from the ``$\delta$-Triangle'' hypothesis \cite{abboud14conjectures}. 
This is the hypothesis that we cannot detect a triangle in a graph in linear time~\cite{Alon.1997}.
When $k\ge 3$, this is a special case of the ``$(\ell,k)$-Hyperclique''
hypothesis~\cite{DBLP:conf/soda/LincolnWW18}.
A known consequence is that Boolean cyclic and self-join-free CQs cannot be answered in 
quasilinear\cref{note:linear}
time~\cite{bb:thesis}.
As a result, cyclic and self-join-free CQs do not admit 
enumeration with quasilinear preprocessing time and polylogarithmic delay assuming the \hyperclique{} hypothesis~\cite{bb:thesis}.

\begin{hypothesis}[\ThreeSUM~\cite{patrascu2010dynamic,ilya053sum}]
Deciding whether there exist $a \in A, b \in B, c \in C$
from three sets of integers $A, B, C$, each of size $\Omega(m)$, such that 
$a + b + c = 0$ cannot be done in time $O(m^{2-\epsilon})$ for any $\epsilon > 0$.
\end{hypothesis}

In its simplest form, the \ThreeSUM{} problem 
asks for three distinct real
numbers $a, b, c$ from a set $S$ with $m$ elements 
that satisfy $a + b + c = 0$.
There is a simple $O(m^2)$ algorithm for the problem, but it is conjectured that in general, 
no truly subquadratic solution exists \cite{patrascu2010dynamic}.
The significance of this conjecture has been highlighted by many conditional lower bounds for problems 
in computational geometry \cite{gajentaan95geom} and 
within the class P in general \cite{williams153sum}.
Note that the problem remains hard even for integers provided that they are sufficiently large 
(i.e., in the order of $O(n^3)$) \cite{patrascu2010dynamic}.
The hypothesis we use here has three different sets of numbers, but it is equivalent~\cite{ilya053sum}.
This lower bound has been confirmed in some restricted models of computation \cite{erickson95sum,ailon05sum}.

\begin{hypothesis}[\seth~\cite{impagliazzo01seth}]
For the satisfiability problem with $m$ variables and 
$k$ variables per clause ($k$-SAT),
if $s_k$ is the infimum of the real numbers $\delta$
for which $k$-SAT admits an
$\bigO(2^{\delta m})$ algorithm,
then $\lim_{k \to \infty} s_k = 1$
\end{hypothesis}

Intuitively, the \emph{Strong Exponential Time Hypothesis} (\seth{}) states that the best possible algorithms for $k$-SAT
approach $\bigO(2^m)$ running time when $k$ goes to infinity.
\seth{} implies that the $k$-Dominating Set problem on a graph with $m$ vertices
cannot be solved in 
$\bigO(m^{2-\epsilon})$ for $k \geq 3$ and any constant $\epsilon$ \cite{patrascu10sat}.
Based on that, it can be shown that counting the answers to a self-join-free acyclic CQ that is not free-connex  
cannot be done in $\bigO(n^{2-\epsilon'})$ for any constant $\epsilon'$ \cite{stefan}.

\subsection{Known Results for CQs}\label{sec:past}

\introparagraph{Eliminating Projection}
We now provide some background that relates to the efficient handling of CQs.
For a query with projections, a standard strategy is to reduce it
to an equivalent one 
where techniques for acyclic full CQs can be leveraged.
The following proposition, which is widely known and used \cite{berkholz2020tutorial}, shows that
this is possible for free-connex CQs.

\begin{proposition}[Folklore]
\label{prop:reduce-to-subtree}
Given a database instance $I$, a CQ $Q$, 
a join tree $T$ of an inclusive extension of $Q$,
and a subtree $T'$ of $T$ that contains all the free variables,
we can compute 
in linear time a database instance $I'$ 
over the schema of a CQ $Q'$ that consists of the nodes of $T'$ such that 
$Q(I)=Q'(I')$
and $|I'| \leq |I|$.
\end{proposition}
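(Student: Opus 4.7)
The plan is to obtain $I'$ by a standard two-step transformation of $I$: first a semi-join reduction along $T$ (Yannakakis-style), and then a restriction of the resulting instance to the atoms appearing in $T'$. Since $T$ is a join tree of an inclusive extension of $\calH(Q)$, some nodes of $T$ may correspond to hyperedges that are strict subsets of atoms of $Q$ rather than to atoms of $Q$ themselves; for each such node I would first materialize its relation as the projection of the (unique up to choice) containing atom. This introduces only a constant factor blow-up relative to $|I|$, and takes linear time in the RAM model. Then I perform the standard two-pass semi-join reduction on $T$ (bottom-up to the root, then top-down), obtaining an instance $\tilde I$ in which every tuple in every relation participates in at least one homomorphism $h:\var(Q)\to\dom$ satisfying all atoms of $Q$. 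All these steps run in linear time using hashing, and $|\tilde I|\le c\cdot|I|$ for a constant $c$ depending on $Q$.

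I would then define $Q'$ to be the CQ whose atoms are the nodes of $T'$, with the same free variables as $Q$ (permissible since $T'$ contains all of $\free(Q)$), and $I'$ as $\tilde I$ restricted to the relations of those atoms. Applying one further pass of semi-join reduction among the atoms of $T'$ and discarding any tuples already removed gives $|I'|\le|I|$.

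The essence of correctness is showing $Q(I)=Q'(I')$. The inclusion $Q(I)\subseteq Q'(I')$ is routine: any homomorphism witnessing an answer of $Q$ uses only non-dangling tuples and hence survives in $\tilde I$, and its restriction to atoms in $T'$ is a homomorphism of $Q'$ on $I'$ yielding the same projection onto $\free(Q)$. The main obstacle is the converse, $Q'(I')\subseteq Q(I)$, which requires showing that every homomorphism $h'$ on the atoms of $T'$ using tuples of $I'$ extends to a homomorphism $h$ of $Q$ on $I$. I would argue this by rooting $T$ at any node of $T'$ and extending $h'$ top-down: since $T'$ is a connected subtree, $T\setminus T'$ decomposes into subtrees each attached to $T'$ at a single ``boundary'' node $e\in T'$, and by the running intersection property every variable shared between $T'$ and such a subtree lies in $e$. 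Hence $h'(e)$ already fixes all the boundary values. Walking further into the subtree node by node, the semi-join reduction guarantees that at every step a tuple matching the already-assigned shared variables exists, and running intersection guarantees that any variable reappearing at a deeper node was last seen at the immediate parent, so the newly chosen tuple is automatically consistent with the entire partial assignment. This yields the required extension $h$ and closes the proof.
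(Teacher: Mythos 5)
Your proof follows essentially the same route the paper sketches right after the proposition: materialize a relation for every node of the inclusive extension by projecting a containing atom, run the classic Yannakakis two-pass semi-join reduction over $T$, restrict to the nodes of $T'$, and justify $Q'(I')\subseteq Q(I)$ by rooting $T$ in $T'$ and extending homomorphisms outward through the hanging subtrees via pairwise consistency and the running intersection property. The argument is correct; the only (inherited) looseness is in the final size bound, since when several nodes of $T'$ are projections of the same atom the literal claim $|I'|\le|I|$ is really $|I'|=O(|I|)$, which is all that is needed downstream.
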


\noindent
This reduction is done by first creating a relation for every node in $T$ using projections of existing relations, then performing the classic semi-join reduction by Yannakakis~\cite{Yannakakis} to filter the relations of $T'$ according to the relations of $T$, and then we can simply ignore all relations that do not appear in $T'$ and obtain the same answers.
Afterward, they can be handled efficiently, 
e.g. their answers can be enumerated with constant delay \cite{bdg:dichotomy}.
We refer the reader to recent tutorials~\cite{berkholz2020tutorial,durand20tutorial} for an intuitive illustration of the idea.

\introparagraph{Ranked enumeration}
Enumerating the answers to a CQ in ranked order is a special case of direct access 
where the accessed indexes are consecutive integers starting from $0$.
As it was recently shown~\cite{tziavelis20tutorial}, ranked enumeration for CQs is intimately connected to classic algorithms on finding the $k^\textrm{th}$ shortest path in a graph.
In contrast to direct access, ranked enumeration by SUM orders
(which also includes lexicographic orderings as a special case) 
is possible with logarithmic delay after a linear-time preprocessing phase
for all free-connex CQs \cite{tziavelis20vldb}.
In contrast, as we will show, that is not the case for direct access. 
Existing ranked-enumeration algorithms rely on priority queue structures 
that compare a minimal number of candidate answers
to produce the ranked answers one-by-one in order.
There is no straightforward way to extend them to the task of direct access where we may 
skip over a large number of answers to get to an arbitrary index $k$.

\introparagraph{Direct Access}
Carmeli et al.~\cite{oldRandomAccess} devise a direct access structure (called ``random access'') and use it to uniformly sample CQ answers (called ``random-order enumeration''). 
While it leverages the idea of using count statistics on the input tuples to navigate the space of query answers that had also been used in prior work on sampling~\cite{Zhao18sampling},
it decouples it from the random order requirement and advances it into direct access.
The separation into a direct access component and a random permutation (of indices) generated externally also allows sampling without replacement which was not possible before.
This direct access algorithm is also a significant simplification over a prior one by Brault-Baron~\cite{bb:thesis}. 
We emphasize that even though these algorithms do not explicitly discuss the order of the answers, 
a closer look shows that they internally use and produce \e{some lexicographic order}.

\begin{theorem}[\cite{oldRandomAccess,bb:thesis}]\label{theorem:known-access}
Let $Q$ be a CQ. If $Q$ is free-connex, then direct access (in some order) 
is possible 
in $\Comp{n \log n}{\log n}$.
Otherwise, if it is also self-join-free, then direct access (in any order) 
is not possible 
in $\Comp{n \polylog n}{\polylog n}$,
assuming $\sparseBMM$ and $\hyperclique$.
\end{theorem}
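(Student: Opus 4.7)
The plan is to prove the two directions separately, leveraging the machinery from \Cref{sec:past}.

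For the positive direction (every free-connex $Q$ admits direct access in \emph{some} order with $\Comp{n \log n}{\log n}$), the approach is as follows. Since $Q$ is free-connex, there exists an ext-$\free(Q)$-connex tree $T$ for $\calH(Q)$, with a subtree $T'$ whose vertices are exactly $\free(Q)$. First, apply \Cref{prop:reduce-to-subtree} to obtain in linear time an equivalent full acyclic CQ $Q'$ on $T'$ over a database $I'$ with $|I'| \le |I|$. It now suffices to build a direct-access structure for the full acyclic CQ $Q'$. I would root $T'$ arbitrarily, choose an ordering of the variables compatible with a pre-order traversal, and compute bottom-up ``count statistics'': for every node $e$ of $T'$ and every tuple $t \in R_e$, store $c(e,t) =$ number of answers of the subquery rooted at $e$ that agree with $t$ on $\var(e)$. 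A standard Yannakakis-style bottom-up pass computes these counts in $\bigO(n)$ arithmetic operations; an additional sorting step (at each node, sort the tuples by the key variables for that node) costs $\bigO(n \log n)$ in our comparison model. For an access query $k$, navigate $T'$ top-down: at the root, use the sorted tuples together with cumulative counts and a binary search to find the unique tuple $t$ such that $k$ falls in its count interval; descend into the children with the appropriate residual index, always restricting to tuples compatible with the partial assignment. Because $T'$ has constant size and each step costs $\bigO(\log n)$, each access is $\bigO(\log n)$. The enumeration order induced is lexicographic on the chosen variable ordering.

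For the negative direction, the key observation is that a $\Comp{n \polylog n}{\polylog n}$ direct-access structure in \emph{any} order immediately yields an enumeration algorithm with polylogarithmic delay after quasilinear preprocessing, by simply accessing indices $1, 2, 3, \ldots$ until out-of-bound. Thus it suffices to invoke the known enumeration lower bounds summarized in \Cref{sec:hypotheses} for self-join-free CQs that are not free-connex. If $Q$ is cyclic, then by the \hyperclique{} hypothesis it cannot be enumerated (even in the Boolean case) in quasilinear time. If $Q$ is acyclic but not free-connex, then by the \sparseBMM{} hypothesis it cannot be enumerated with quasilinear preprocessing and polylogarithmic delay. Either way, the assumed direct-access bounds are refuted, and the lower bound holds \emph{irrespective of the order}.

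The harder part is the upper bound: one must argue carefully that the count statistics are computable in linear time (the log factor coming only from the initial sort), that the top-down navigation always identifies a unique tuple consistent with the prefix already chosen (which is where the running intersection property of $T'$ is used), and that the resulting order is indeed a total lexicographic order on the free variables, not merely a partial one. The negative direction is a short reduction from ranked enumeration and requires no new ideas beyond observing that direct access is strictly stronger than sequential enumeration.
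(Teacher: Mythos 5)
Your proposal is correct and follows essentially the same route as the cited works \cite{oldRandomAccess,bb:thesis}, which the paper summarizes in \Cref{sec:past} and \Cref{sec:layered-alg}: reduce to a full acyclic CQ via \Cref{prop:reduce-to-subtree}, build bottom-up count statistics on a join tree with sorted buckets, and answer access queries by top-down navigation with binary search, yielding some lexicographic order. The negative direction via the observation that direct access subsumes enumeration (access indices $0,1,2,\ldots$) and the known enumeration lower bounds under \sparseBMM{} and \hyperclique{} is likewise the standard argument the paper relies on.
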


Recent work by Keppeler~\cite{Keppeler2020Answering} 
suggests another 
direct-access solution
by lexicographic order, 
which also supports efficient 
insertion and deletion of input tuples.
Given these additional requirements,
the supported CQs are more limited, 
and are only
a subset of free-connex CQs called \emph{$q$-hierarchical}~\cite{Berkholz17updates}. This is a subclass of the well-known \emph{hierarchical} queries with an additional restriction on the
existential
variables.
As an example, the following CQs are not $q$-hierarchical even though they are free-connex:
$Q_1(x,y) \datarule R_1(x),R_2(x,y),R_3(y)$ and
$Q_2(x) \datarule R_1(x,y),R_2(y)$. 
For these queries, direct access is not supported by the solution of Keppeler~\cite{Keppeler2020Answering}, 
even though it is possible without the 
update requirements (as we show in \Cref{sec:lexic}).

All previous direct-access solutions of which we are aware have two gaps compared to this work: (1) they do not discuss which lexicographic orders (given by orderings of the free variables) are supported; (2) they do not support all possible lexicographic orders.
We conclude this section with a short survey of  existing solutions and their supported orders.

All 
prior
direct-access solutions use some component that depends on the query structure and constrains the supported orders.
The algorithm of Carmeli et al.~\cite[Algorithm 3]{oldRandomAccess} assumes that 
a join tree is given with the CQ, 
and the 
lexicographic
order is \emph{imposed by the join tree}. 
Specifically, it is 
an ordering of the variables
achieved by a preorder
depth-first traversal of the tree. 
As a result, it does not support 
any
order that requires 
jumping
back-and-forth
between
different
branches of the tree. 
In particular, it does not support  $Q_3(v_1,v_2,v_3,v_4)\datarule R(v_1,v_3), S(v_2,v_4)$ 
with the lexicographic order given by the increasing variable indices 
(we adopt this convention for all the examples below).
We show how to handle this CQ and order in detail in \Cref{ex:layers}.
The algorithm of Brault-Baron~\cite[Algorithm 4.3]{bb:thesis} assumes that an \emph{elimination order} is given along with the CQ. 
The resulting lexicographic order is affected by that elimination order, 
but is not exactly the same. 
This solution
suffers from similar restrictions, and
it does not support $Q_3$ either.
The algorithm of Keppeler~\cite{Keppeler2020Answering} assumes that a \emph{$q$-tree} is given with the CQ, and the possible 
lexicographic orders
are affected by this tree. %
Unlike the two earlier mentioned approaches,
this algorithm can interleave variables from different atoms, 
yet cannot support some orders that are possible for the previous algorithms. 
As an example, it does not support $Q_4(v_1,v_2,v_3) \datarule R_1(v_1,v_2),R_2(v_2,v_3)$ as $v_2$ 
is
highest in the hierarchy (the atoms containing it strictly subsume the atoms containing any other variable) and so it 
is necessarily the first variable in the q-tree and in the ordering produced.

Finally, we should mention that there exist queries and orders 
that
require \emph{both} jumping back-and-forth in the join tree \emph{and} visiting the variables in an order different than any hierarchy.
As a result, these are not supported by any previous solution. Two such examples are 
$Q_5(v_1,v_2,v_3,v_4,v_5) \datarule R_1(v_1,v_3),R_2(v_3,v_4),R_3(v_2,v_5)$ and 
$Q_6(v_1,v_2,v_3,v_4,v_5) \datarule R_1(v_1,v_2,v_4),R_2(v_2,v_3,v_5)$.
In \Cref{sec:lexic}, we provide an algorithm that supports both of these CQs.

\section{Direct Access by Lexicographic Orders}
\label{sec:lexic}

In this section, we answer the following question: 
for which underlying
lexicographic orders 
can we achieve 
``tractable'' direct access to ranked CQ answers, i.e.\  
with quasilinear preprocessing and polylogarithmic time per answer?

\begin{example}[No direct access]\label{ex:lex-alg}
Consider the lexicographic order 
$L = \angs{v_1, v_2, v_3}$ for the query
$Q(v_1, v_2, v_3) \datarule R(v_1, v_3), S(v_3, v_2)$.
Direct access to the query answers according to that order 
would allow us to 
 ``jump over'' the $v_3$ values via binary search 
and essentially enumerate the answers to
$Q'(v_1, v_2) \datarule R(v_1, v_3), S(v_3, v_2)$. 
However, we know that $Q'$ is not free-connex and that is impossible to achieve 
enumeration with quasilinear preprocessing and polylogarithmic delay (if \sparseBMM{} holds).
Therefore, the bounds we are hoping for are out of reach for the given query and order.
The core difficulty is that the joining variable $v_3$ appears \e{after} the other two in the lexicographic order.
\end{example}

We formalize this notion of ``variable in the middle'' in order to detect similar situations in more complex queries.

\begin{definition}[Disruptive Trio]
\label{def:trio}
Let $Q$ be a CQ and $\lex$ a lexicographic order of its free variables.
We say that three free variables $v_1, v_2, v_3$ 
are a \emph{disruptive trio} in $Q$ with respect to $\lex$ 
if $v_1$ and $v_2$ are not neighbors (i.e.\ they do not appear together in an atom), 
$v_3$ is a neighbor of both $v_1$ and $v_2$, 
and $v_3$ appears after $v_1$ and $v_2$ in $\lex$.
\end{definition}

As it turns out, 
for
free-connex and self-join-free CQs, the tractable CQs are precisely captured by this simple criterion.
Regarding self-join-free CQs that are not free-connex, their known  intractability of enumeration
implies that direct access is also intractable.
This leads to the following dichotomy:

\begin{theorem}[Direct Access by LEX]\label{thm:lex-dicotomy}
Let $Q$ be a CQ and $\lex$ be a lexicographic order.
\begin{itemize}
    \item If $Q$ is free-connex and does not have a disruptive trio with respect to $\lex$, then direct access by $\lex$ is possible 
    in $\Comp{n \log n}{\log n}$.
    \item Otherwise, if $Q$ is also self-join-free, then direct access by $\lex$ is not possible 
    in $\Comp{n \polylog n}{\polylog n}$
    assuming \sparseBMM{} and \hyperclique{}.
\end{itemize}
\end{theorem}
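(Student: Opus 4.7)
My plan is to prove the dichotomy by handling the hardness direction first (since it decomposes cleanly into two subcases reducible to known lower bounds) and then tackling the algorithmic direction as the substantial new contribution.

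\textbf{Hardness direction.} I would split into two cases. First, if $Q$ is self-join-free and not free-connex, then Theorem~\ref{theorem:known-access} directly rules out direct access in $\Comp{n \polylog n}{\polylog n}$ in any order, because consecutive direct accesses yield enumeration with polylogarithmic delay. Second, suppose $Q$ is free-connex (and self-join-free) but has a disruptive trio $(v_1, v_2, v_3)$ with respect to $L$. Let $L'$ be the prefix of $L$ ending just before $v_3$, and consider the CQ $Q'$ obtained from $Q$ by making free only the variables in $L'$ (so $v_1,v_2$ are free, $v_3$ becomes existential). The key claim is that $Q'$ is not free-connex: because $v_1$ and $v_2$ are non-neighbors in $Q$ but are both neighbors of $v_3$, the variable $v_3$ witnesses a $\free(Q')$-path (or creates the analogous obstruction in the presence of atoms extending beyond this trio), and removing $v_3$ from the free set cannot restore $\free(Q')$-connexity. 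Given a direct-access structure for $(Q,L)$, we can enumerate the distinct answers of $Q'$ with polylogarithmic delay by binary-searching in the sorted array to find, for each prefix on the $L'$-variables, the first index whose value changes on those prefix variables; this simulates enumeration of $Q'$ and contradicts Theorem~\ref{theorem:known-access}. Self-join-freeness is preserved in $Q'$, so the assumption applies.

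\textbf{Tractability direction.} For the algorithm, suppose $Q$ is free-connex and has no disruptive trio with respect to $L$. The plan is to first apply Proposition~\ref{prop:reduce-to-subtree}: pick an ext-$\free(Q)$-connex tree $T$ of an inclusive extension of $\calH(Q)$, reduce in linear time to a full CQ $Q'$ on the free-containing subtree $T'$, so it suffices to handle a full acyclic CQ whose join tree's vertices are exactly $\free(Q)$. The next step is to build, on top of $T'$, a \emph{layered} decomposition guided by $L$: process variables in $L$-order and, for each variable $v$, identify the atom(s) of $T'$ to which $v$ becomes the ``pivot'' for narrowing down partial assignments. The absence of a disruptive trio is exactly the property that guarantees this layering can be done consistently with the tree, because it forbids the pathological case where two non-neighbor variables precede their common neighbor (which would force us to combine independent branches before their join variable is fixed).

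\textbf{Data structure and access.} With the layering in hand, I would construct count statistics analogous to those of Carmeli et al.~\cite{oldRandomAccess}, but indexed by the $L$-ordered layers rather than a DFS preorder: at each layer, for every consistent partial assignment to the preceding $L$-variables, store the number of completions to full answers of $Q'$. Building this bottom-up over $T'$ takes $\bigO(n \log n)$ time (the log factor coming from sorting, as noted in Section~\ref{sec:complexity}). To answer a direct-access query for index $k$, walk down the layers: at each layer, binary-search the sorted active-domain values of the current variable using the stored counts to determine which branch contains the $k$-th answer, subtract accumulated counts, and recurse; this gives $\bigO(\log n)$ per access.

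\textbf{Main obstacle.} The substantial difficulty will be verifying that the layering construction actually works for every $L$ without a disruptive trio, including cases like $Q_3,Q_5,Q_6$ highlighted at the end of Section~\ref{sec:past}, where $L$ interleaves variables from different branches of every natural join tree. I expect this to require choosing the inclusive extension of $\calH(Q)$ carefully as a function of $L$ (not just of $Q$), and arguing that the absence of a disruptive trio suffices to make such an extension exist and to make the counts along layers well-defined and composable.
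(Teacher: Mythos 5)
Your overall plan coincides with the paper's proof: the hardness half combines the known intractability of non-free-connex CQs with a reduction that uses direct access plus binary search to enumerate the answers projected onto a prefix of $L$ (the paper's \Cref{lemma:enum-prefix-using-access,lemma:hardness}), and the tractable half reduces to a full acyclic CQ and then builds an $L$-driven layered join tree with subtree counts accessed by binary search per layer. Your choice of prefix (ending just before $v_3$) differs immaterially from the paper's (ending in $v_2$); either way $v_1,v_3,v_2$ is an $L'$-path, so the projected query is acyclic but not free-connex and \sparseBMM{} applies.

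Two points remain open in your sketch. First, the ``main obstacle'' you flag is exactly the paper's \Cref{lemma:trio-to-layered}, and it is resolved without any delicate $L$-dependent choice of extension: one builds the layered tree by induction on $i$, observing that the hyperedges containing $v_i$ (restricted to $\{v_1,\dots,v_i\}$) form an acyclic family, so by the Helly property for subtrees of a join tree, if every two of their variables were neighbors then one hyperedge would contain all the others; a failure of this containment yields two non-neighbors that are both neighbors of the later variable $v_i$, i.e., a disruptive trio. The maximal hyperedge is then attached as a leaf to the node containing it minus $v_i$. (One must also check that the projection-elimination step of \Cref{prop:reduce-to-subtree} preserves the absence of a disruptive trio, which the paper does in \Cref{lemma:CQ-to-full}.) Second, as literally stated, storing ``for every consistent partial assignment to the preceding $L$-variables the number of completions'' is not quasilinear: already for $Q_3(v_1,v_2,v_3,v_4)\datarule R(v_1,v_3),S(v_2,v_4)$ there can be $\Theta(n^2)$ distinct assignments to $(v_1,v_2)$ after two layers. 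The counts must instead be stored per tuple (number of completions within that node's subtree, bucketed by the parent's assignment), and at access time the count for a candidate value is recovered by multiplying the tuple's weight with the weights of the other roots of the current forest, as in \Cref{alg:lex-random-access}.
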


\begin{remark}
\label{rem:elim_order}
Assume we are given a full CQ, and the lexicographic order we want to achieve is $\angs{v_1, \ldots, v_m}$.
It was shown (in the context of ranked enumeration by lexicographic orders)
that the absence of
disruptive trios is equivalent 
to the 
existence of a reverse ($\alpha$-)elimination order
of the variables~\cite[Theorem 15]{bb:thesis}.
That is, we need there to exist an atom that contains $v_m$ and all of its neighbors (variables that share an atom with $v_m$), 
and if we remove $v_m$ from 
the query, $v_1,\ldots,v_{m-1}$ should recursively be a reverse elimination order. 
For the base case, when $m=1$, $v_1$ constitutes a reverse-elimination order.
\end{remark}

\begin{remark}
On the positive side of \Cref{thm:lex-dicotomy}, the preprocessing time is dominated by sorting the input relations, which we assume requires $\bigO(n \log n)$ time.
If we assume instead that sorting takes linear time 
(as assumed in some related work~\cite{bb:thesis,oldRandomAccess,grandjean96sorting}), 
then the time required for preprocessing is only $\bigO(n)$ instead of $\bigO(n \log n)$.
\end{remark}

In \Cref{sec:layered-alg}, we provide an algorithm for this problem for full acyclic CQs that have a particular join tree that we call \e{layered}. 
Then, we show how to find such a layered join tree whenever there is no disruptive trio in \Cref{sec:trio-to-layered}. 
In \Cref{sec:CQ-to-full}, we explain how to adapt our solution for CQs with projections, 
and in \Cref{sec:dichotomy} we prove a lower bound which establishes that our algorithm applies to \e{all} cases where direct access is tractable.

\subsection{Layer-Based Algorithm}\label{sec:layered-alg}

Before we explain the algorithm, we first define 
one of its main components.

A \e{layered join tree} is a join tree
where each node 
belongs to a layer. 
The layer number is the last position of any of its variables in the lexicographic order.
Intuitively, ``peeling'' off the outermost (largest) layers must result in 
a valid join tree 
(for a hypergraph with fewer variables).
To find such a join tree for a CQ $Q$,
we may have to introduce hyperedges that are contained in those of $\calH(Q)$ 
(this corresponds to taking the projection of a relation)
or remove hyperedges of $\calH(Q)$ that are contained in others
(this corresponds to filtering relations that contain a superset of the variables).
Thus, we define the layered join tree with respect to a hypergraph that is \emph{inclusion equivalent}
(recall the definition of an inclusion equivalent hypergraph from \cref{sec:basicNotions}).

\begin{definition}[Layered Join Tree]
\label{def:layered_join_tree}
Let $Q$ be a full acyclic CQ, and let $\lex=\angs{v_1, \ldots, v_f}$ be a lexicographic order.
A \emph{layered join tree} for $Q$ with respect to $\lex$ is a join tree of 
a hypergraph that is inclusion equivalent to $\calH(Q)$
where 
($1$) every node $\vertex$ of the tree is assigned to layer $\max\{i\mid v_i\in \vertex\}$, 
($2$) there is exactly one node 
for each layer, 
and ($3$) for all $j \leq f$ the induced subgraph with only the nodes that belong to the first $j$ layers is a tree.
\end{definition}

\begin{figure}[t]
\centering
\begin{subfigure}{.47\linewidth}
    \centering
    \includegraphics[scale=0.5]{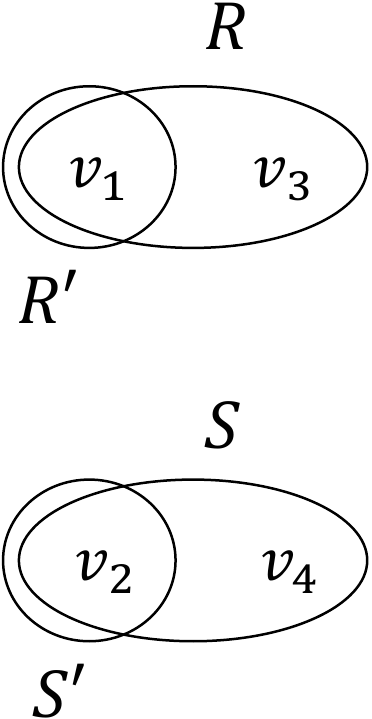}
    \caption{A hypergraph that is 
    inclusion equivalent to $\calH(Q_3)$.}
    \label{fig:InclExt}
\end{subfigure}%
\hfill
\begin{subfigure}{.47\linewidth}
    \centering
    \includegraphics[scale=0.5]{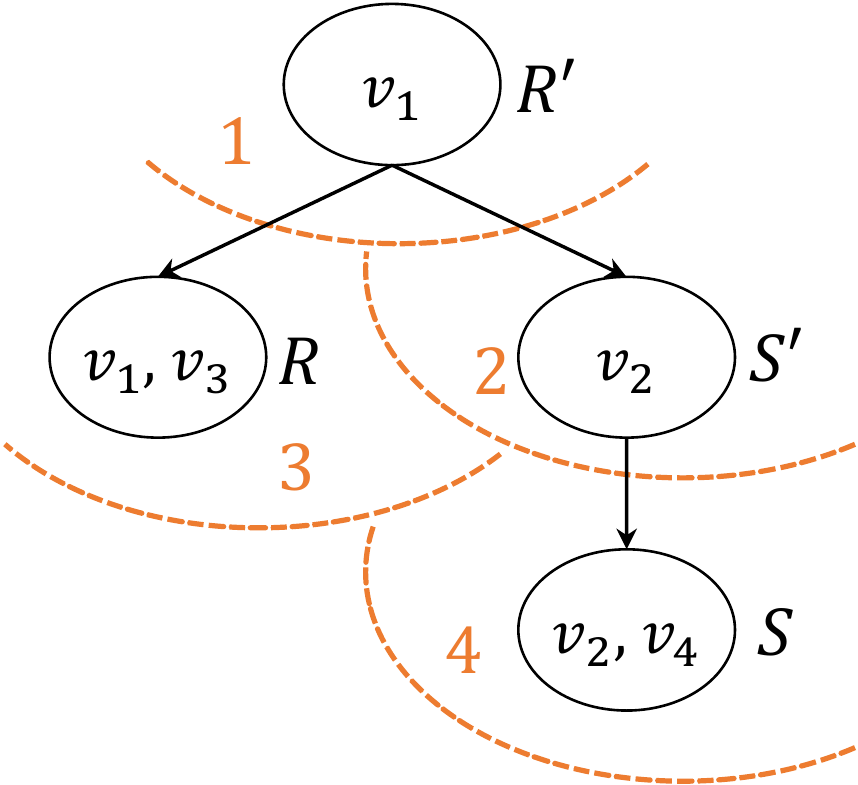}
    \caption{A layered join tree for $Q_3$ w.r.t. the lexicographic order.}
    \label{fig:layers}
\end{subfigure}
\caption{Constructing a layered join tree for the query $Q_3(v_1,v_2,v_3,v_4) \datarule R(v_1,v_3), S(v_2,v_4)$ and order $\angs{v_1,v_2,v_3,v_4}$.}
\label{fig:InclExt_layers}
\end{figure}

\begin{example}
\label{ex:layers}
Consider the CQ
$Q_3(v_1,v_2,v_3,v_4)\datarule R(v_1,v_3), S(v_2,v_4)$
and the lexicographic order $\angs{v_1,v_2,v_3,v_4}$.
To support that order, 
we first 
find an inclusion equivalent hypergraph,
shown in \Cref{fig:InclExt}.
Notice that we added two hyperedges that are strictly contained in the existing ones, and obtained a hypergraph corresponding to $R(v_1,v_3), R'(v_1),S(v_2,v_4),S'(v_2)$.
A layered join tree constructed from that hypergraph is depicted in \Cref{fig:layers}.
There are four layers, one for each node of the join tree.
The layer of the node containing $\{ v_1, v_3 \}$ is $3$ because 
$v_3$ appears after $v_1$ in the order and it is the third variable.
If we remove the last layer, then we obtain a layered join tree for the induced hypergraph where the last variable $v_4$ is removed. 
\end{example}

We now describe an algorithm that takes as an input a CQ $Q$, a lexicographic order $\lex$, and a corresponding layered join tree
and provides direct access to the query answers after a preprocessing phase.
For preprocessing, 
we leverage a construction from Carmeli et al.~\cite[Algorithm 2]{oldRandomAccess}
and apply it to our layered join tree.
For completeness, we briefly explain how it works below.
Subsequently, we describe the access phase that 
takes into account the layers of the tree 
to accommodate the provided lexicographic order.
We emphasize that the way we access the structure is different than that of the past work \cite{oldRandomAccess}, and that this allows support of lexicographic orders that were impossible for the previous access routine (e.g. the order in \Cref{ex:layers}).

\introparagraph{Preprocessing}
The preprocessing phase 
($1$) creates a relation for every node of the tree, 
($2$) removes dangling tuples, 
($3$) sorts the relations,
($4$) partitions the relations into buckets, 
and ($5$) uses dynamic programming on the tree to compute and store certain counts\footnote{The same count statistics are also leveraged in \cite[Sect.~4.2]{Zhao18sampling} in the context of sampling}.
After preprocessing, we are guaranteed that for all $i$, the node of layer $i$ has a corresponding relation where each 
tuple participates in at least one query answer; 
this relation is partitioned into buckets by the assignment of the variables preceding $i$. 
In each bucket, we sort the tuples lexicographically by $v_i$. 
Each tuple is given a weight that indicates the number of different answers this tuple agrees with when only joining 
its subtree.
The weight of each bucket is the sum of its tuple weights. 
We denote both by the function $\w$. 
Moreover, for every tuple $t$, we compute the sum of weights of the preceding tuples in the bucket, denoted by $\rng(t)$.
We use $\endIndex(t)$ for the 
sum that corresponds to
the tuple following $t$ in the same bucket; 
if $t$ is last, we set this to be the bucket weight.
If we think of the query answers in the subtree sorted in the order of $v_i$ values,
then $\rng$ and $\endIndex$ 
distribute the indices between $0$ and the bucket weight to tuples. 
The number of indices within the range of each tuple corresponds to its weight.

\begin{example}[Continued]
\label{ex:lex_preprocessing}
The result of the preprocessing phase on an example database for our query $Q_3$
is shown in \Cref{fig:lex_preprocessing}.
Notice that $R$ has been split into two buckets according to the values of its parent $R'$, 
one for value $a_1$ and one for $a_2$.
For tuple $(a_1) \in R'$, we have $\w((a_1)) = 8$ because this is the number of answers that agree on that value in its subtree:
the left subtree has $2$ such answers which can be combined with any of the $4$ possible answers of the right subtree.
The start index of tuple $(b_1, d_3) \in S$ is the sum of the previous weights within the bucket:
$\rng((b_1, d_3)) = \w((b_1, d_1)) + \w((b_1, d_2)) = 1 + 1 = 2$.
Not shown in the figure is that every bucket stores the sum of weights it contains.

\end{example}

\begin{figure}[t]
\centering
\includegraphics[scale=0.5]{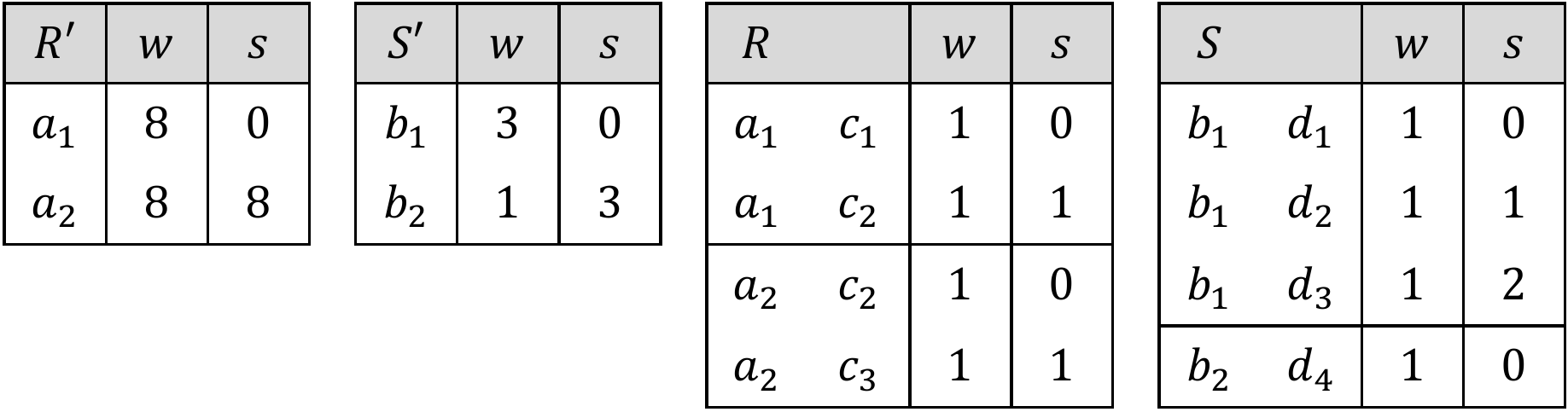}
\caption{\Cref{ex:lex_preprocessing}: The result of the preprocessing phase on $Q_3$, the layered join tree (\Cref{fig:layers}) and an example database. 
The weight and start index for each tuple are abbreviated in the figure as $w$ and $s$ respectively.}
\label{fig:lex_preprocessing}
\end{figure}

\introparagraph{Access}
The access phase works by going through the tree layer by layer.
When resolving a layer $i$, we select a tuple from its corresponding relation, 
which sets a value for the $i^{\textrm{th}}$ variable in $\lex$, 
and also determines a 
bucket for each child. 
Then, we conceptually erase the node of layer $i$ and its outgoing edges.

The access algorithm 
maintains a directed forest and an assignment to a prefix of the variables.
Each tree in the forest represents the answers obtained by joining its relations. 
Each root contains a single bucket that agrees with the already assigned values, 
thus every answer agrees on the prefix. 
Due to the running intersection property, 
different trees cannot share unassigned variables. 
As a consequence, any combination of answers from different trees can be 
added to the prefix assignment to form an answer to $Q$. 
The answers obtained this way are exactly the answers to $Q$ that agree with the already set assignment.
Since we start with a layered join tree, we are guaranteed that at each step, 
the next layer (which corresponds to the variable following the prefix for which we have an assignment) 
appears as a root in the forest.

Recall that from the preprocessing phase,
the weight of each root is the number of answers in its tree.
When we are at layer $i$, we have to take into account the weights of all the other roots in order to compute the number of query answers for a particular tuple.
More specifically,
the number of answers to $Q$ containing the already selected attributes (smaller than $i$) 
and some $v_i$ value contained in a tuple is 
found by multiplying the tuple weight with the weights of all other roots.
That is because 
the answers from all trees can be combined into a query answer.
Let $t$ be the selected tuple when resolving the $i^{\textrm{th}}$ layer. 
The number of answers to $Q$ that have a value of $\lex[i]$ smaller than that of $t$ and a value of $\lex[j]$ equal to that of $t$ for all $j<i$ is then:
\[\sum_{t'}{\left(\w(t')\prod_{r\in \text{roots}}{\w(r)}\right)}\]
where $t'$ ranges over tuples preceding $t$ in its bucket.
Denote by $\factor$ the product of all root weights.
Then we can rewrite as:
\begin{align*}
    \left(\sum_{t'}{\w(t')}\right)\left(\prod_{r\in \text{roots}}{\w(r)}\right)
    =
    \rng(t)\cdot\factor\,.
\end{align*}
Therefore, when resolving layer $i$ we select the last tuple $t$ such that the index we want to access is at least $\rng(t)\cdot\factor$.

\begin{algorithm}
  \If{$k\ge\w(\root)$}{
    \Return ``out-of-bound''
    }
  $\bucket[1] = \root$\;
  $\factor = \w(\root)$\;
  \For{i=1,\ldots,f}{
    $\factor =\factor / \w(\bucket[i])$\;
    pick $\tuple\in \bucket[i]$ s.t.
   ~$\rng(\tuple)\cdot\factor \leq k < \endIndex(\tuple)\cdot\factor$\;\label{line:select-tuple}
   $k = k - \rng(\tuple)\cdot\factor$\;
   \For{child $V$ of layer $i$}{
        get the bucket $b\in V$ agreeing with the selected tuples\;
        $\bucket[\layer(V)]=b$\;
        $\factor =\factor\cdot\w(b)$\;
   }
  }
  \Return the answer agreeing with the selected tuples
\caption{Lexicographic Direct-Access}
\label{alg:lex-random-access}
\end{algorithm}

\Cref{alg:lex-random-access} summarizes the process we described where $k$ is the index to be accessed and $f$ is the number of variables. 
Iteration $i$ resolves layer $i$.
Pointers to the selected buckets from the roots are kept in a $\bucket$ array. 
The product of the weights of all roots is kept in a $\factor$ variable.
In each iteration, the variable $k$ is updated to the index that should be accessed among 
the answers that agree with the already selected attribute values.
Note that $\bucket[i]$ is always initialized when accessed since layer $i$ is guaranteed to be a child of a smaller layer.

\begin{example}[Continued]
\label{ex:lex_access}
We demonstrate how the access algorithm works for index $k=12$.
When resolving $R'$, the tuple $(a_2)$ is chosen since $8\cdot 1 \le 12 < 16\cdot 1$; then, the single bucket in $S'$ and the bucket containing $a_2$ in $R$ are selected.
The next iteration resolves $S'$. When it reaches 
line~\ref{line:select-tuple}, 
$k = 12-8 = 4$ 
and $\factor = 2$.
As $0\cdot 2 \le 4 < 3\cdot 2$, the tuple $(b_1)$ is selected.
Next, $R$ is resolved, which we depict in \Cref{fig:lex_access}.
The current index is $k= 4-0 = 4$. 
The weights of the other roots (only $S$ here) gives us $\factor=3$.
To make our choice in $R$, we multiply the weights of the tuples by $\factor=3$.
Then, we find that the index $k$ we are looking for falls into the range of $(a_2,c_3)$ because $1\cdot 3 \le 4 < 2\cdot 3$.
Next, $S$ is resolved, $k=4-1\cdot 3=1$, and $\factor=1$.
As $1\cdot 1 \le 1 < 2\cdot 1$, the tuple $(b_1,d_2)$ is selected.
Overall, answer number $12$ (the $13^{\textrm{th}}$ answer) is $(a_2,b_1,c_3,d_2)$.
\end{example}

\begin{lemma}\label{lemma:alg}
Let $Q$ be a full acyclic CQ, and $\lex=\angs{v_1, \ldots, v_f}$ be a lexicographic order.
If there is a layered join tree for $Q$ with respect to $\lex$, then direct access 
is possible 
in $\Comp{n \log n}{\log n}$.
\end{lemma}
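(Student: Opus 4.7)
The plan is to analyze \Cref{alg:lex-random-access} together with the preprocessing described above and verify three things: (i) the preprocessing runs in time $\bigO(n \log n)$ and produces the data structures that the access routine relies on; (ii) each invocation of the access routine runs in $\bigO(\log n)$ time; and (iii) the tuple returned at index $k$ is indeed the $k$-th answer in the lexicographic order $\lex$.

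For (i), I would first show that the layered join tree can be turned into an instance in linear time: for each node we create its relation by projecting from an atom of $Q$ that contains it (such an atom exists because the underlying hypergraph is inclusion equivalent to $\calH(Q)$), then run a standard Yannakakis-style semi-join reduction to eliminate dangling tuples. The dynamic-programming pass that assigns weights proceeds bottom-up along the tree: a leaf's tuple weight is $1$, and an internal node's tuple weight is the product, over its children, of the weights of the bucket that agrees with the tuple's assignment to shared variables. Running intersection guarantees that ``agreement'' is well defined. The partition of each relation into buckets by the assignment to the variables of strictly smaller layer, together with the sort within each bucket by the variable defining that layer, is done by sorting each relation lexicographically on the appropriate attributes; this is the only non-linear step and costs $\bigO(n \log n)$. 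The cumulative sums $\rng$ and $\endIndex$ are then computed by a single scan per bucket.

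For (ii), the outer loop of \Cref{alg:lex-random-access} runs $f$ times, which is constant in data complexity. Inside the loop, the only potentially non-constant step is line~\ref{line:select-tuple}: finding the unique tuple $\tuple$ in $\bucket[i]$ such that $\rng(\tuple)\cdot\factor \le k < \endIndex(\tuple)\cdot\factor$. Since the tuples in each bucket are stored sorted by $v_i$ with monotonically increasing $\rng$ values, this can be done by binary search in $\bigO(\log n)$ time. Locating the child bucket that agrees with the selected tuple is a constant-time hash lookup using the pointers computed during preprocessing, so the whole access routine takes $\bigO(\log n)$ time.

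For (iii), I would prove by induction on the layer index $i$ that after iteration $i$, the partial assignment to $v_1,\ldots,v_i$ is exactly the assignment that the $k_0$-th answer (the original input index) makes to those variables. The induction uses two ingredients. First, the definition of a layered join tree ensures that, when we enter iteration $i$, the relation for layer $i$ has already been narrowed down to a single bucket $\bucket[i]$ that agrees with the committed assignment to $v_1,\ldots,v_{i-1}$. Second, the weight bookkeeping correctly counts answers: the number of completions of the current partial assignment that additionally set $v_i$ to the value in tuple $t$ equals $\w(t) \cdot \factor$, where $\factor$ is the product of the weights of the other roots; this follows from running intersection, because once $v_1,\ldots,v_{i-1}$ are fixed, the remaining subqueries rooted at the other roots are independent. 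Summing over tuples preceding $t$ in the $v_i$-order gives $\rng(t)\cdot\factor$ as the count of answers whose value of $\lex[j]$ equals the committed one for $j<i$ and whose value of $v_i$ strictly precedes that of $t$, which is exactly the quantity the selection criterion in line~\ref{line:select-tuple} compares with $k$. Thus the tuple chosen agrees with the $k$-th answer on $v_i$, and the residual index $k - \rng(\tuple)\cdot\factor$ correctly tracks the position of that answer among those consistent with the new prefix. The initial out-of-bound check $k \ge \w(\root)$ is justified because $\w(\root)$ times the product of the other root weights equals $|Q(I)|$, and this product is accumulated into $\factor$ before the loop (more precisely, the invariant $\factor \cdot \prod_{i'\ge i}\w(\bucket[i'])= |Q(I)|$ holds after the initialization, which one checks directly). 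The main subtlety, and the step I expect to require the most care, is verifying that running intersection together with the layering makes the roots' subtrees genuinely independent after committing the prefix, so that the product-of-weights formula for counting completions is valid; once this invariant is set up cleanly, the rest of the correctness argument is a direct induction.
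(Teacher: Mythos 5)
Your proof follows essentially the same route as the paper's: the paper likewise defers correctness to the preceding discussion of the weight bookkeeping (buckets, $\rng$/$\endIndex$, and the product-of-root-weights argument justified by the running intersection property), and obtains the $\Comp{n\log n}{\log n}$ bound from sorting in preprocessing plus the binary search in line~\ref{line:select-tuple}, all other operations being constant time in the RAM model. The only nit is your parenthetical invariant $\factor\cdot\prod_{i'\ge i}\w(\bucket[i'])=|Q(I)|$, which double-counts $\w(\bucket[1])$ at initialization --- the clean statement is that at the top of each iteration $\factor$ equals the product of the selected bucket weights over all current roots, which is the number of answers consistent with the committed prefix --- but this slip does not affect the argument.
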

\begin{proof}
The correctness of \Cref{alg:lex-random-access} follows from the discussion above.
For the time complexity, note that it
contains a constant number of operations (assuming the number of attributes $f$ is fixed). 
Line~\ref{line:select-tuple} can be done in logarithmic time using binary search, 
while all other operations only require constant time in the RAM model. 
Thus, we obtain direct access in logarithmic time per answer after the quasilinear preprocessing (dominated by sorting).
\end{proof}

\begin{figure}[t]
\centering
\includegraphics[scale=0.5]{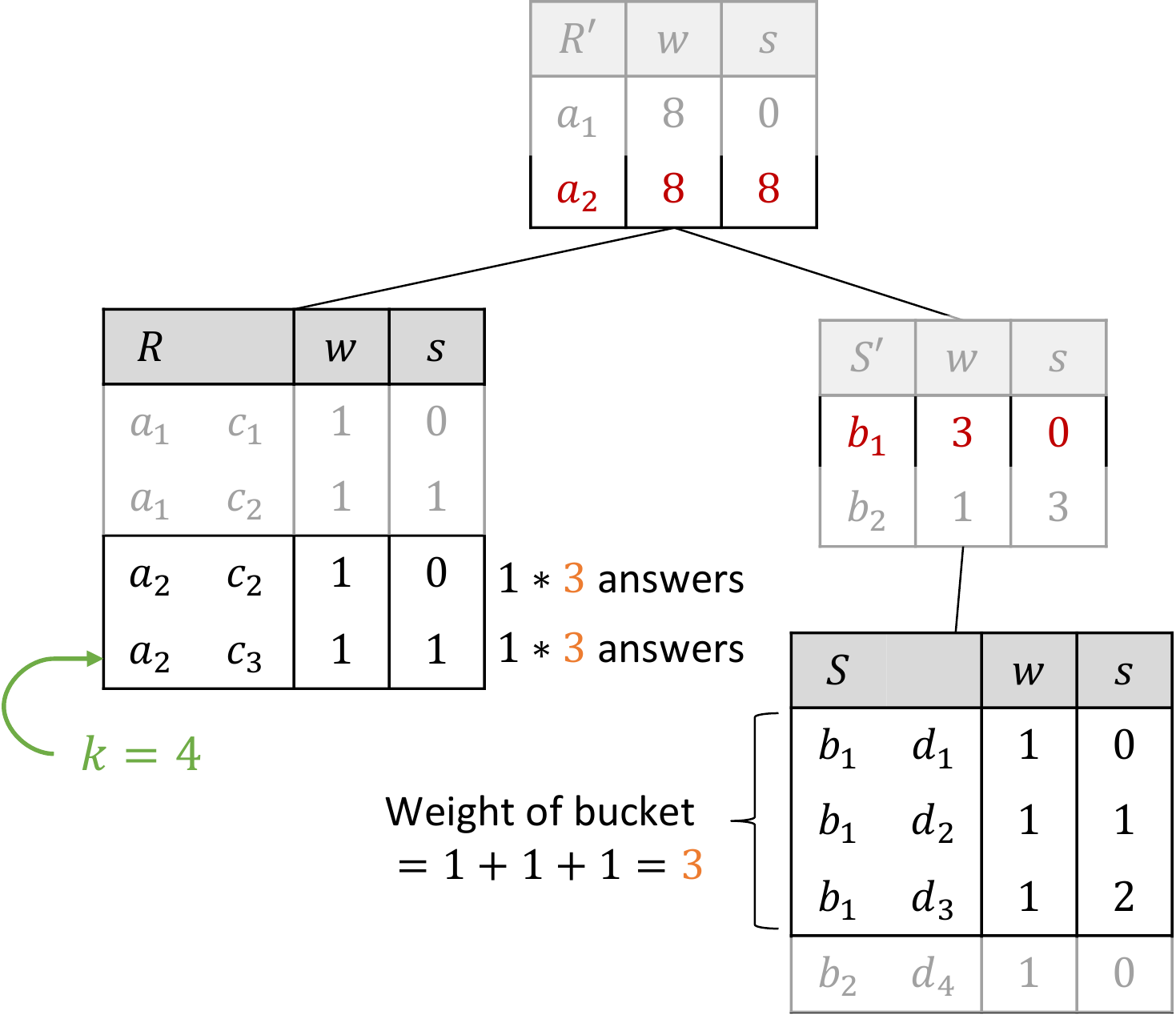}
\caption{\Cref{ex:lex_access}: Illustration of an iteration of the access phase where layer $3$ corresponding to $R$ is resolved.}
\label{fig:lex_access}
\end{figure}

\begin{remark}[Inverted access]
A straightforward adaptation of \Cref{alg:lex-random-access} can be used to achieve \e{inverted access}: 
given a query result as the input, we return its index according to the lexicographic order.
\Cref{alg:lex-inverted-access} is almost the same as \Cref{alg:lex-random-access} 
except that the choices in each iteration are made according to the given answer 
and the corresponding index is constructed (instead of the opposite). 
The algorithm runs in constant time per answer since every operation can be done within that time (unlike \Cref{alg:lex-random-access}, there is no need for binary search here).

Another adaptation of \Cref{alg:lex-inverted-access} can give us a form of inverted access for the cases when the given answer does not exist. 
That is, instead of returning ``not-an-answer'', we want to return the next answer in the lexicographic order.
The first time a tuple $\tuple$ is not found in Line \ref{line:tuple-not-found} of \Cref{alg:lex-inverted-access}, 
we select the first tuple in the bucket that is larger than $t$, 
and in all following iterations, we always select the first tuple in the bucket. 
If there is no tuple larger than $t$,
we revert the previous iteration and select the next tuple there (compared to what we selected before). 
If no such tuple exists, we again revert the previous iteration and so on. 
If there are no previous iterations, we were asked to access a tuple larger than the last answer, so we return an appropriate message. 
The algorithm described here takes logarithmic time, as we can use binary search to find the tuple following our target tuple in each bucket.
\end{remark}

\begin{algorithm}[t]
  $k = 0$\;
  $\bucket[1] = \root$\;
  $\factor = \w(\root)$\;
  \For{i=1,\ldots,f}{
  $\factor =\factor / \w(\bucket[i])$\;
  select $\tuple\in \bucket[i]$ agreeing with the answer\;
  \If{no such $\tuple$ exists}{ \label{line:tuple-not-found}
  \Return ``not-an-answer''
  }
  $k = k + \rng(\tuple)\cdot\factor$\;
      \For{child $V$ of layer $i$}{
          get the bucket $b\in V$ agreeing with the answer\;
          $\bucket[\layer(V)]=b$\;
          $\factor =\factor\cdot\w(b)$\;
    }
  }
  \Return $k$\;
\caption{Lexicographic Inverted-Access}
\label{alg:lex-inverted-access}
\end{algorithm}

\subsection{Finding Layered Join Trees}\label{sec:trio-to-layered}

We now have an algorithm that can be applied whenever we have a layered join tree. 
We next show that the existence of such a join tree relies on the disruptive trio condition we introduced earlier.
In particular, if no disruptive trio exists,
we are able to construct a layered join tree for full acyclic CQs.

\begin{lemma}\label{lemma:trio-to-layered}
Let $Q$ be a full acyclic CQ, and $\lex$ be a lexicographic order. 
If $Q$ does not have a disruptive trio with respect to $\lex$, 
then there is a layered join tree for $Q$ with respect to $\lex$.
\end{lemma}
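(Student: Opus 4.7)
The plan is to prove this by induction on the number $f$ of free variables, essentially mirroring the reverse elimination order viewpoint from \Cref{rem:elim_order}. The base case $f=1$ is trivial: take the inclusion-equivalent hypergraph consisting of the single hyperedge $\{v_1\}$, which gives a single-node join tree at layer $1$.

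For the inductive step, the key structural observation is that the absence of a disruptive trio forces the neighborhood of $v_f$ to be well-behaved. Specifically, for any two $v_a, v_b \in N(v_f)$, if they were non-neighbors they would form a disruptive trio $(v_a, v_b, v_f)$ since $v_f$ comes last in $\lex$, contradicting the hypothesis. Hence $N(v_f) \cup \{v_f\}$ is a clique in the primal graph of $\calH(Q)$. Since $Q$ is acyclic, $\calH(Q)$ is conformal, so this clique is contained in some hyperedge of $\calH(Q)$; call the clique itself $U$. I then let $Q'$ denote the query obtained by deleting $v_f$ from every atom, so that $\calH(Q') = \{e \setminus \{v_f\} \mid e \in \calH(Q)\}$. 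Acyclicity is preserved under vertex deletion, and since neighborhoods among $v_1, \ldots, v_{f-1}$ are unchanged, $Q'$ has no disruptive trio with respect to $\lex' = \angs{v_1, \ldots, v_{f-1}}$. The inductive hypothesis then yields a layered join tree $T'$ for $Q'$ with respect to $\lex'$, having exactly one node per layer $1, \ldots, f-1$.

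To build $T$, I attach a new node $U = N(v_f)\cup\{v_f\}$ as a leaf of $T'$. The attachment point $V^*$ must contain $N(v_f)$; this exists because $N(v_f)$ is still a clique in $\calH(Q')$'s primal graph, hence by conformality lies in some hyperedge of $\calH(Q')$, which by inclusion equivalence with the hypergraph of $T'$ is contained in some node of $T'$. Connecting $U$ to $V^*$ preserves the running intersection property: variables $u \in N(v_f)$ already form a connected subtree in $T'$ that includes $V^*$, and $v_f$ appears only in $U$. I verify inclusion equivalence between the node set of $T$ and $\calH(Q)$ by checking both directions: every atom of $Q$ either avoids $v_f$ (so it matches an atom of $Q'$ already covered by $T'$) or contains $v_f$ (so it is a subset of $U$), and conversely every node of $T$ sits inside some atom of $Q$. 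The layer conditions follow immediately: $U$ is the unique layer-$f$ node, the layers below come from $T'$, and the first $j$ layers form a subtree for every $j \leq f$ because this holds in $T'$ and adding $U$ at the end only extends the tree at layer $f$.

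The main obstacle will be the careful handling of the inclusion-equivalent hypergraph across the induction: I must ensure both that conformality applies when locating the attachment node $V^* \supseteq N(v_f)$ in $T'$ (which goes through $\calH(Q')$'s conformality plus $T'$'s own inclusion equivalence), and that adding $U$ does not break inclusion equivalence with $\calH(Q)$ (which hinges on every $v_f$-containing atom being a subset of $U$, guaranteed precisely by the clique property established from the no-disruptive-trio assumption). Everything else -- join-tree running intersection and the layering -- follows cleanly by construction.
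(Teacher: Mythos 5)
Your proof is correct and is essentially the mirror image of the paper's argument: the paper builds the layered join tree bottom-up, adding one layer per prefix $v_1,\ldots,v_i$ and using the Helly property of subtrees to find the maximal hyperedge $V_m$ containing $v_i$ and all its earlier neighbors, whereas you peel off $v_f$ top-down and obtain the same node $U=N(v_f)\cup\{v_f\}$ via conformality of acyclic hypergraphs — two standard, equivalent routes to the same key fact that the no-disruptive-trio condition makes the closed neighborhood of the last variable a clique covered by a single hyperedge. All the supporting steps you flag (preservation of acyclicity and of the no-trio condition under deleting $v_f$, existence of the attachment node $V^*$, inclusion equivalence, running intersection, and the layer conditions) check out.
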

\begin{proof}
We show by induction on $i$ that there exists a layered join tree for the hypergraph containing the hyperedges $\{V\cap\{v_1,\ldots,v_i\}\mid V\in\atoms(Q)\}$ with respect to the prefix of $\lex$ containing its first $i$ elements.
The induction base is the tree that contains the node $\{v_1\}$ and no edges.

In the inductive step,
we assume a layered join tree with $i-1$ layers for $\{V\cap\{v_1,\ldots,v_{i-1}\}\mid V\in\atoms(Q)\}$, 
and we build a layer on top of it.
Denote by $\calV$ the sets of $\{V\cap\{v_1,\ldots,v_i\}\mid V\in\atoms(Q)\}$ that contain $v_i$ (these are the sets that need to be included in the new layer).
First note that $\calV$ is acyclic. Indeed, by the running intersection property, the join tree for $\calH(Q)$ has a subtree with all the nodes that contain $v_i$. 
By taking this subtree and projecting out all variables that occur after $v_i$ in $\lex$, we get a 
join tree for 
an inclusion equivalent hypergraph to $\calV$,
and its existence proves that $\calV$ is acyclic.

We next claim that some set in $\calV$ contains all the others; that is, there exists $V_m\in\calV$ such that for all $V\in\calV$, we have that $V\subseteq V_m$.
Consider a join tree for $\calV$.
Every variable of $\calV$ defines a subtree induced by the nodes that contain this variable.
If two variables are neighbors, their subtrees share a node.
It is known that every collection of subtrees of a tree satisfies the \e{Helly property}~\cite{GOLUMBIC198081}: if every two subtrees share a node, then some node is shared by all subtrees.
In particular, since $\calV$ is acyclic, if every two variables of $\calV$ are neighbors, then some element of $\calV$ contains all variables that appear in (elements of) $\calV$.
Thus, if, by way of contradiction, there is no such $V_m$, there exist two non-neighboring variables $v_a$ and $v_b$ that appear in (elements of) $\calV$. Since $v_i$ appears in all elements of $\calV$, this means that there exist $V_a,V_b\in\calV$ with $\{v_a,v_i\}\subseteq V_a$ and $\{v_b,v_i\}\subseteq V_b$.
Since $v_a$ and $v_b$ are not neighbors, these three variables are a disruptive trio with respect to $\lex$: $v_a$ and $v_b$ are both neighbors of the later variable $v_i$.
The existence of a disruptive trio contradicts the assumption of the lemma we are proving, and so we conclude that there is $V_m\in\calV$ such that for all $V\in\calV$, we have that $V\subseteq V_m$.

With $V_m$ at hand, we can now add the additional layer to the tree given by the inductive hypothesis.
By the inductive hypothesis, 
the layered join tree with $i-1$ layers contains the hyperedge $V_m\cap\{v_1,\ldots,v_{i-1}\} = V_m\setminus\{v_i\}$.
We insert $V_m$ with an edge to the node containing $V_m\setminus\{v_i\}$.
This results in the join tree we need:
(1) the hyperedges $\{V\cap\{v_1,\ldots,v_i\}\mid V\in\atoms(Q)\}$ are all contained in nodes, since the ones that do not appear in the tree from the inductive hypothesis are contained in the new node;
(2) it is a tree since we add one leaf to an existing tree; and
(3) the running intersection property holds since the added node is connected to all of its variables that already appear in the tree.
\end{proof}

Lemmas~\ref{lemma:alg} and~\ref{lemma:trio-to-layered} give a direct-access algorithm for 
full acyclic CQs and lexicographic orders without disruptive trios.

\subsection{Supporting Projection}\label{sec:CQ-to-full}

Next, we show how to support CQs that have projections. 
A free-connex CQ can be efficiently reduced to a full acyclic CQ using \Cref{prop:reduce-to-subtree}. 
We next show that the resulting CQ contains no disruptive trio if the original CQ does not.

\begin{lemma}\label{lemma:CQ-to-full}
Given a database instance $I$, a free-connex CQ $Q$, and a lexicographic order $\lex$ with no disruptive trio with respect to $\lex$, 
we can compute 
in linear time a database instance $I'$
and a full acyclic CQ $Q'$ with no disruptive trio with respect to $\lex$ such that
$Q'(I') = Q(I)$,
$|I'| \leq |I|$,
and 
$Q'$ does not depend on $I$ or $I'$.
\end{lemma}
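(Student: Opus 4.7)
The plan is to reduce $Q$ to a full acyclic CQ via \Cref{prop:reduce-to-subtree} in a way that preserves the absence of a disruptive trio. First, since $Q$ is free-connex, there exists an ext-$\free(Q)$-connex tree for $\calH(Q)$: a join tree $T$ of an inclusive extension $\calH_{\mathrm{ext}}$ of $\calH(Q)$ together with a subtree $T'$ of $T$ whose node labels cover exactly $\free(Q)$. I apply \Cref{prop:reduce-to-subtree} using this $T$ and $T'$ to obtain, in linear time, a CQ $Q'$ whose atoms are the nodes of $T'$ and a database $I'$ of size at most $|I|$ with $Q'(I')=Q(I)$. Note that $Q'$ depends only on $T'$ (hence only on $Q$ and $\lex$, not on $I$). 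Because $\var(Q')$ equals the union of node labels of $T'$, which is $\free(Q)$, and we set $\free(Q')=\free(Q)$, the query $Q'$ is full. Moreover, $T'$ (being a subtree of the join tree $T$) is itself a join tree of $\calH(Q')$, since the restriction of any ``running-intersection'' subtree to $T'$ remains connected; hence $Q'$ is acyclic.

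The main obstacle is step of showing that $Q'$ inherits the absence of a disruptive trio. Suppose toward a contradiction that $v_1,v_2,v_3\in\free(Q)$ form a disruptive trio in $Q'$ with respect to $\lex$, so $v_3$ appears after $v_1$ and $v_2$ in $\lex$, $v_3$ is a neighbor of both $v_1$ and $v_2$ in $\calH(Q')$, and $v_1,v_2$ are non-neighbors in $\calH(Q')$. Every hyperedge of $\calH(Q')$ is a node of $T'$, hence a subset of a hyperedge of $\calH_{\mathrm{ext}}$, which in turn is a subset of a hyperedge of $\calH(Q)$. Therefore $v_3$ is a neighbor of both $v_1$ and $v_2$ in $\calH(Q)$ as well.

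It remains to argue that $v_1,v_2$ are non-neighbors in $\calH(Q)$, which together with the previous paragraph yields a disruptive trio in $Q$ (w.r.t.\ $\lex$), contradicting the hypothesis of the lemma. The key step is a Helly-property argument on three subtrees of $T$. Assume for contradiction that $v_1,v_2$ do share an atom of $Q$, hence a hyperedge of $\calH_{\mathrm{ext}}$, i.e., a node of $T$. By the running-intersection property, the set $T_{v_i}$ of nodes of $T$ containing $v_i$ is a subtree, for $i\in\{1,2\}$, and $T_{v_1}\cap T_{v_2}\neq\emptyset$. Since $v_1,v_2\in\free(Q)$ and the nodes of $T'$ cover exactly $\free(Q)$, both $T_{v_1}\cap T'$ and $T_{v_2}\cap T'$ are nonempty. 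Applying the Helly property of subtrees to the three pairwise-intersecting subtrees $T_{v_1}, T_{v_2}, T'$ yields a common node, i.e., a node of $T'$ containing both $v_1$ and $v_2$, so $v_1$ and $v_2$ are neighbors in $\calH(Q')$. This contradicts the assumption that they are non-neighbors in $Q'$, completing the argument.
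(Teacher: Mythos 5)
Your proposal is correct and follows essentially the same route as the paper: reduce $Q$ to the full acyclic CQ on the nodes of the $\free(Q)$-subtree $T'$ via \Cref{prop:reduce-to-subtree}, and show the disruptive-trio condition is preserved because two free variables are neighbors in $\calH(Q)$ iff they share a node of $T'$. The only cosmetic difference is that you justify this neighbor-preservation step by the Helly property for the three subtrees $T_{v_1},T_{v_2},T'$, whereas the paper argues directly via the running-intersection property along the path from a common node of $T$ into $T'$; both are valid and the Helly property is already invoked elsewhere in the paper.
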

\begin{proof}
Let $Q$ be a free-connex CQ, and let $T$ be an ext-$\free(Q)$-connex tree for $Q$ where $T'$ is the subtree of $T$ that contains exactly the free variables.

First, we claim that two free variables are neighbors in $T$ iff they are neighbors in $T'$.
The ``if'' direction is immediate since $T'$ is contained in $T$. We show the other direction.
Let $u$ and $v$ be free variables of $Q$ that are neighbors in $T$. That is, there is a node $V_T$ in $T$ that contains them both.
Consider the unique path from $V$ to any node in $T'$ such that only the last node on the path, which we denote $V_{T'}$, is in $T'$. Since both variables appear in $T'$ and in $V$, by the running intersection property, both variables appear in $V_{T'}$.
Thus, $u$ and $v$ are also neighbors in $T'$.

Since the definition of disruptive trios depends only on neighboring pairs of free variables, an immediate consequence of the claim from the previous paragraph is that there is a disruptive trio in $T$ iff there is a disruptive trio in $T'$.
Next, we can simply use \Cref{prop:reduce-to-subtree} to reduce $Q$ to the full acyclic CQ where the atoms are exactly the nodes of $T'$.
\end{proof}

By combining \Cref{lemma:alg,lemma:trio-to-layered,lemma:CQ-to-full},
we conclude an efficient algorithm for free-connex CQs and orders with no disruptive trios.
The next lemma summarizes our results so far.

\begin{lemma}\label{lemma:easy-CQs}
Let $Q$ be a free-connex CQ, and $\lex$ be a lexicographic order. 
If $Q$ does not have a disruptive trio with respect to $\lex$, 
direct access by $\lex$ is possible 
in $\Comp{n \log n}{\log n}$.
\end{lemma}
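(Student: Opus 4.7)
The plan is to compose the three lemmas already established in this subsection, since they essentially package the three stages of a complete algorithm. Given input $Q$, $\lex$, and a database $I$, the first step I would take is to apply \Cref{lemma:CQ-to-full}: since $Q$ is free-connex and has no disruptive trio with respect to $\lex$, this lemma yields in linear time a full acyclic CQ $Q'$ and a database instance $I'$ with $|I'|\le |I|$, $Q'(I')=Q(I)$, and such that $Q'$ still has no disruptive trio with respect to $\lex$. Crucially, $Q'$ does not depend on $I$ or $I'$, so this reduction is part of preprocessing and introduces only an additive $\bigO(n)$ overhead.

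Second, since $Q'$ is a full acyclic CQ without a disruptive trio with respect to $\lex$, \Cref{lemma:trio-to-layered} guarantees the existence of a layered join tree for $Q'$ with respect to $\lex$. This tree can be computed from $Q'$ and $\lex$ alone in constant time in data complexity, so it does not affect the data-complexity bounds.

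Third, with a layered join tree in hand I would invoke \Cref{lemma:alg} on $Q'$ and $I'$, which provides direct access to the answers of $Q'$ on $I'$ in $\Comp{n\log n}{\log n}$. Since $Q'(I')=Q(I)$, an access to the answer at index $k$ of $Q'$ on $I'$ (sorted by $\lex$) yields exactly the answer at index $k$ of $Q$ on $I$ sorted by $\lex$. Adding the linear-time reduction from step one to the $\bigO(n\log n)$ preprocessing of \Cref{lemma:alg} preserves the $\bigO(n\log n)$ preprocessing bound, and the per-access bound of $\bigO(\log n)$ carries over unchanged.

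There is no real obstacle here; the content of the lemma lies entirely in the three ingredients already proved. The only thing worth being careful about is that the lexicographic order $\lex$ is defined on $\free(Q)$ and that $\free(Q')=\free(Q)=\var(Q')$ in the reduction of \Cref{lemma:CQ-to-full}, so the same order $\lex$ applies verbatim to $Q'$ without renaming. With this observation, the chain \Cref{lemma:CQ-to-full} $\to$ \Cref{lemma:trio-to-layered} $\to$ \Cref{lemma:alg} directly yields the claimed $\Comp{n\log n}{\log n}$ bound, completing the proof.
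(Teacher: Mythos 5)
Your proposal is correct and matches the paper's own argument, which likewise obtains \Cref{lemma:easy-CQs} by chaining \Cref{lemma:CQ-to-full}, \Cref{lemma:trio-to-layered}, and \Cref{lemma:alg} in exactly this order. The observation that $\free(Q')=\free(Q)=\var(Q')$, so $\lex$ transfers verbatim, is a sound and welcome detail.
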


\subsection{Lower Bound for Conjunctive Queries}\label{sec:dichotomy}

Next, we show that our algorithm supports all tractable cases (for self-join-free CQs); we prove that all unsupported cases are intractable.
We base our hardness results on the known hardness of enumeration for non-free-connex CQs~\cite{bagan2008computing,bb:thesis}
through a reduction that uses direct access to enumerate the answers 
projected on a prefix of the variables.

\begin{lemma}\label{lemma:enum-prefix-using-access}
Let $Q$ be a self-join-free CQ, $\lex$ be a lexicographic order,
and $Q'$ be the same as $Q$ but with free variables $L'$ for some prefix $L'$ of $L$.
If direct access for $Q$ by $L$ is possible in $\Comp{p(n)}{f(n)}$ for some functions $p, f$, then enumeration of the answers to $Q'$ is possible in $\Comp{p(n)}{f(n) \log n}$. 
\end{lemma}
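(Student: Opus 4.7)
The plan is to simulate enumeration of $Q'(I)$ by walking through the sorted array of answers to $Q(I)$ in blocks. Since $L'$ is a prefix of $L$, any two answers to $Q$ that agree on the variables of $L'$ are adjacent in the $L$-sorted order, and conversely the $L'$-projection is monotone in that order. Hence the sorted list of $Q$-answers decomposes into maximal contiguous blocks of equal $L'$-projection, and these blocks are in one-to-one correspondence with the distinct answers of $Q'(I)$.

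The enumeration algorithm runs the preprocessing of the assumed direct-access data structure in time $p(n)$, and then repeatedly does the following. Maintain a cursor index $k$, initially $0$. If $k$ is out of bounds, stop. Otherwise, use direct access to retrieve the answer $q_k$ at position $k$, output its projection $\tau = q_k[L']$, and then locate the first index $k^\ast > k$ whose answer projects to something strictly greater than $\tau$ in the $L'$-order (or detect that no such index exists). Set $k \leftarrow k^\ast$ and loop. Each output is thus produced at the cost of one direct access plus the search for $k^\ast$.

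To locate $k^\ast$, use exponential-then-binary search starting from $k$: probe indices $k+1, k+2, k+4, k+8, \ldots$ via direct access, stopping once we either exceed the bounds or find an index whose $L'$-projection exceeds $\tau$; then binary search inside the last interval. The total number of answers to $Q$ is at most $n^{|\atoms(Q)|}$, which is polynomial in $n$ because $|\atoms(Q)|$ is a constant, so both the exponential and the binary phases perform $\bigO(\log n)$ probes. Each probe is a single direct access of cost $f(n)$, giving delay $\bigO(f(n)\log n)$ between consecutive outputs of $Q'$. No duplicate answer is ever emitted, since we jump past an entire block before the next output.

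The only subtle point, which I would flag as the main thing to verify carefully, is that exponential search indeed gives $\bigO(\log n)$ probes between consecutive outputs (as opposed to $\bigO(\log n)$ probes amortized): this is immediate because each jump $k \mapsto k^\ast$ is bounded by the polynomial $n^{|\atoms(Q)|}$, so the exponential phase uses at most $\bigO(\log n)$ doublings before the bracketing interval is found, and the subsequent binary search adds at most another $\bigO(\log n)$ probes.
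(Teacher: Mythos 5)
Your proposal is correct and matches the paper's proof: both run the preprocessing once, then repeatedly output the $L'$-projection of the current answer and use $\bigO(\log n)$ direct-access probes (binary search over the polynomially bounded index range) to jump to the start of the next block of answers with a different $L'$-assignment. Your exponential-then-binary variant and the explicit contiguity observation are minor refinements of the same argument.
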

\begin{proof}
We show how to enumerate the unique assignments of the free variables of $Q'$ 
given the direct access algorithm for $Q$.
First we perform the preprocessing step in $\bigO(p(n))$.
Then, we perform the following starting with $i=0$ and until there are no more answers.
We access the answer at index $i$ and print its assignment to the variables $L'$. 
Then, we set $i$ to be the index of the next answer which assigns $L'$ to different values and repeat.
Finding the next index can be done with a logarithmic number of direct access calls using binary search.
\end{proof}

We now exploit that for CQs with disruptive trios, 
we can always find a prefix that is not connex.
Therefore, enumerating the query answers projected on that prefix via direct access
leads to the enumeration of a non-free-connex CQ, where existing lower bounds apply.

\begin{lemma}\label{lemma:hardness}
Let $Q$ be a self-join-free acyclic CQ, and $\lex$ be a lexicographic order.
If $Q$ has a disruptive trio with respect to $\lex$, 
then direct access by $\lex$ is not possible
in $\Comp{n \polylog n}{\polylog n}$,
assuming \sparseBMM{}.\footnote{In fact, this lemma holds also for cyclic CQs, as it can be shown that Boolean matrix multiplication can be encoded in any CQ that contains a free-path regardless of its acyclicity. However, this is not formally stated in previous work, and we prefer not to complicate the proof with the technical details of the reduction. 
We chose here to limit the statement to acyclic CQs as cyclic CQs are already known to be hard if we assume \hyperclique{}. 
The direct reduction that applies also to cyclic CQs can be found in the conference version of this article \cite{carmeli21direct}.}
\end{lemma}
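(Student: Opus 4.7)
The plan is to reduce from the established intractability of enumeration for self-join-free acyclic non-free-connex CQs (which holds under \sparseBMM{}, as noted in \Cref{sec:past}), using \Cref{lemma:enum-prefix-using-access} as the bridge. The idea is that a disruptive trio allows us to identify a prefix $L'$ of $L$ for which the query restricted to free variables $L'$ is provably non-free-connex, so enumerating its answers is already hard.

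Concretely, let $v_1,v_2,v_3$ be a disruptive trio of $Q$ with respect to $L$; by definition, $v_3$ occurs after both $v_1$ and $v_2$ in $L$. I would let $L'$ be the prefix of $L$ consisting of all variables strictly preceding $v_3$, so that $v_1,v_2\in L'$ but $v_3\notin L'$. Define $Q'$ to be the CQ obtained from $Q$ by keeping exactly the same body and taking $L'$ as its free variables. Then $Q'$ is self-join-free and acyclic (both properties depend only on the body and hypergraph, which are unchanged).

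The key structural step is to show that $Q'$ is not free-connex. Since $\calH(Q')=\calH(Q)$, I would exhibit an $\free(Q')$-path in $\calH(Q')$: the sequence $(v_1,v_3,v_2)$. By the definition of a disruptive trio, $v_3$ is a neighbor of both $v_1$ and $v_2$, while $v_1$ and $v_2$ are not neighbors in $\calH(Q)$. Hence the only non-succeeding pair on the path is $(v_1,v_2)$, which cannot form a chord, so the path is chordless. Since $v_1,v_2\in\free(Q')$ and $v_3\notin\free(Q')$, this is an $\free(Q')$-path, and by the characterization recalled in \Cref{sec:prelim}, $\calH(Q')$ is not $\free(Q')$-connex, i.e., $Q'$ is not free-connex.

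Finally, I would close the reduction as follows. Suppose, for contradiction, that direct access for $Q$ by $L$ is achievable in $\Comp{n\polylog n}{\polylog n}$. Applying \Cref{lemma:enum-prefix-using-access} with this prefix $L'$ yields enumeration of $Q'$ with preprocessing $\bigO(n\polylog n)$ and delay $\bigO(\polylog n)$. But $Q'$ is a self-join-free acyclic non-free-connex CQ, so this contradicts the known enumeration lower bound under \sparseBMM{} (see \Cref{sec:past}). I do not expect significant obstacles: all the heavy lifting is done by \Cref{lemma:enum-prefix-using-access} and the existing enumeration dichotomy; the only careful point is verifying that the witness path $(v_1,v_3,v_2)$ is chordless, which follows immediately from the non-neighbor part of the disruptive-trio definition.
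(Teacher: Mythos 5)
Your proof is correct and follows essentially the same route as the paper's: reduce to enumeration of the query with free variables restricted to a prefix via \Cref{lemma:enum-prefix-using-access}, exhibit $(v_1,v_3,v_2)$ as a chordless $L'$-path witnessing that the restricted query is acyclic but not free-connex, and invoke the known enumeration lower bound under \sparseBMM{}. The only cosmetic difference is your choice of $L'$ (all variables strictly preceding $v_3$) versus the paper's (the prefix ending at the later of $v_1,v_2$); both are valid prefixes containing $v_1,v_2$ and excluding $v_3$, so the argument is unchanged.
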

\begin{proof}
Let $v_1,v_2,v_3$ be a disruptive trio in $\lex$.
We take $L'$ to be the prefix of $L$ that ends in $v_2$. 
Then, $v_1,v_3,v_2$ is an $L'$-path or in other words,
the hypergraph of $Q$ is not $L'$-connex. 
Now, we define a new CQ $Q'$ so that it has the same body as $Q$ 
but its free variables are $L'$.
Thus, $Q'$ is acyclic but not free-connex.
Assuming that direct access for $Q$ is possible in $\Comp{n \polylog n}{\polylog n}$,
we use \Cref{lemma:enum-prefix-using-access} to enumerate the answers of $Q'$ in $\Comp{n \polylog n}{\polylog n}$,
which is known to contradict \sparseBMM{}~\cite{bagan2008computing} .
\end{proof}

By combining 
\Cref{lemma:easy-CQs} and \Cref{lemma:hardness} together with the known hardness results for non-free-connex CQs (\Cref{theorem:known-access}), we prove the dichotomy given in \Cref{thm:lex-dicotomy}: 
direct access by a lexicographic order for a self-join-free CQ is possible with quasilinear preprocessing and polylogarithmic time per answer
if and only if the query is free-connex and does not have a disruptive trio with respect to the required order.
\section{Direct Access by Partial Lexicographic Orders}
\label{sec:partial}

We now investigate the case where the desired lexicographic order is \e{partial}, i.e., it contains only some of the free variables.
This means that there is no particular order requirement for the rest of the variables.
One way to achieve direct access to a partial order is to complete it into a full lexicographic order and then leverage the results of the previous section.
If such completion is impossible, we have to consider cases where tie-breaking between the non-ordered variables is done in an arbitrary way.
However, we will show in this section that the tractable partial orders 
are precisely those 
that can be completed into a full lexicographic order.
In particular, we will prove the following dichotomy which also gives an easy-to-detect criterion for the tractability of direct access.

\begin{theorem}\label{thm:partial-dichotomy}
Let $Q$ be a CQ and $\lex$ be a partial lexicographic order.
\begin{itemize}
    \item If $Q$ is free-connex and $\lex$-connex and does not have a disruptive trio with respect to $\lex$, 
    then direct access by $\lex$ is possible 
    in $\Comp{n \log n}{\log n}$.
    \item Otherwise, if $Q$ is also self-join-free, then direct access by $\lex$ is not possible 
    in $\Comp{n \polylog n}{\polylog n}$,
    assuming \sparseBMM{} and \hyperclique{}.
\end{itemize}
\end{theorem}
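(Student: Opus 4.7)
The plan is to prove the dichotomy by (i) reducing the positive side to the already-established full-lexicographic case of \Cref{thm:lex-dicotomy}, and (ii) reducing each failure mode on the negative side to known enumeration hardness via the direct-access-to-enumeration reduction of \Cref{lemma:enum-prefix-using-access}.

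For the positive side, the main step is to prove the following extension claim: if $Q$ is free-connex and $L$-connex and has no disruptive trio with respect to $L$, then there exists a full lexicographic order $L^+$ over $\free(Q)$ that extends $L$ (agrees with $L$ on its variables and their order) and still has no disruptive trio with respect to $L^+$. Once this claim is established, we simply apply \Cref{thm:lex-dicotomy} to $(Q,L^+)$ to obtain direct access in $\Comp{n\log n}{\log n}$; any access routine that orders answers by $L^+$ also orders them by $L$, since $L$ is a prefix-constraint of $L^+$, so it serves as the required access structure for $(Q,L)$.

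To prove the extension claim, I would first use $L$-connexity to observe that no variable $v_3\notin L$ can be a neighbor of two non-neighboring variables $v_1,v_2\in L$: otherwise $(v_1,v_3,v_2)$ is a chordless path with endpoints in $L$ and interior outside $L$, i.e., an $L$-path, contradicting $L$-connexity. Combined with the assumption that $Q$ has no disruptive trio with respect to $L$, this means that no variable appearing after the $L$-prefix in $L^+$ can already create a disruptive trio whose ``latest'' vertex is that new variable and whose other two vertices both lie in $L$. What remains is to order the variables in $\free(Q)\setminus L$ so that no new disruptive trio is introduced among them or between them and $L$. For this I would start from an ext-$L$-connex tree $T$ of an inclusive extension of $Q$ (which exists since $Q$ is free-connex and $L$-connex, hence acyclic and $L$-connex simultaneously; note $T$ can be chosen to additionally cover $\free(Q)$ by restricting to the subtree spanning $\free(Q)$ via \Cref{prop:reduce-to-subtree}-style considerations), fix the subtree $T'$ that contains exactly $L$, and append the variables in $\free(Q)\setminus L$ to $L$ by a reverse elimination order constructed layer-by-layer outward from $T'$ along $T$, exactly as in the inductive construction of \Cref{lemma:trio-to-layered}. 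At each step, the newly appended variable is a leaf of the residual tree over the unprocessed vertices, so all of its neighbors appearing later in $L^+$ lie in a common atom, ruling out a disruptive trio. I expect this inductive extension to be the main technical step and the chief obstacle, because one has to carefully argue that placing a variable $v_3\in\free(Q)\setminus L$ after $L$ never creates a trio with two earlier variables that are not themselves neighbors—this is where the absence of $L$-paths is used crucially.

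For the negative side, I split into three subcases. If $Q$ is not free-connex and self-join-free, \Cref{theorem:known-access} already rules out direct access in $\Comp{n\polylog n}{\polylog n}$ under \sparseBMM{} and \hyperclique{}. If $Q$ has a disruptive trio $v_1,v_2,v_3$ with respect to $L$, I replicate the proof of \Cref{lemma:hardness}: take $L'$ to be the prefix of $L$ ending at the later of $v_1,v_2$; then $(v_1,v_3,v_2)$ is an $L'$-path in the hypergraph of the self-join-free acyclic CQ $Q'$ with the body of $Q$ and free variables $L'$, so $Q'$ is not free-connex; by \Cref{lemma:enum-prefix-using-access} direct access for $Q$ by $L$ would enumerate $Q'$ with polylog delay after quasilinear preprocessing, contradicting \sparseBMM{}. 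Finally, if $Q$ is free-connex but not $L$-connex, then the hypergraph of $Q$ admits an $L$-path, which means the self-join-free acyclic CQ $Q_L$ with the body of $Q$ and free variables $L$ is acyclic but not free-connex; applying \Cref{lemma:enum-prefix-using-access} with $L'=L$ again reduces its enumeration to direct access for $(Q,L)$, contradicting \sparseBMM{}. Combining the three subcases yields the lower bound and completes the dichotomy.
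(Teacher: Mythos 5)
Your proposal follows essentially the same route as the paper: the positive side is the paper's \Cref{lemma:partial-completion} (complete $L$ to a trio-free full order by traversing a join tree that nests an $L$-subtree inside a $\free(Q)$-subtree, then invoke \Cref{thm:lex-dicotomy}), and the negative side is exactly the paper's case split via \Cref{lemma:enum-prefix-using-access} and \Cref{lemma:enum-nonfc-using-access}. Two small points: the invariant you need when appending a new variable is that all of its \emph{earlier} neighbors lie in a common node (the one currently being handled), not its later ones; and the existence of a single join tree carrying both nested subtrees is itself a nontrivial step that the paper isolates as \Cref{prop:two-connex}.
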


\begin{example}
Consider the CQ $Q \datarule R(x,y),S(y,z)$.
If the free variables are exactly $x$ and $z$, then the query is not free-connex, and so it is intractable.
Next assume that all variables are free.
If $\lex=\angs{x,z}$, then the query is not $\lex$-connex, and so it is intractable.
If $\lex=\angs{x,z,y}$, then $x,z,y$ is a disruptive trio, 
thus the query is intractable.
However, if $\lex=\angs{x,y,z}$ or $\lex=\angs{z,y}$, then the query is free-connex, $\lex$-connex and has no disruptive trio, so it is tractable.
\end{example}

\subsection{Tractable Cases}

For the positive side, we can solve our problem efficiently if the CQ is free-connex and there is a completion of the lexicographic order to all free variables with no disruptive trio. 
\Cref{lemma:partial-completion} identifies these cases with a connexity criterion. 
To prove it, we first need a 
way to combine two different connexity properties.
The proof of the following proposition uses ideas from a proof of the characterization of free-connex CQs 
in terms of the acyclicity of the hypergraph obtained by including a hyperedge with the free variables~\cite{berkholz2020tutorial}.

\begin{proposition}\label{prop:two-connex}
If a CQ $Q$ is both $\lex_1$-connex and $\lex_2$-connex where $\lex_2\subseteq \lex_1$, 
then there exists a join tree $T$ of an inclusive extension of $Q$ with a subtree $T_1$ containing exactly the variables $\lex_1$ and a subtree $T_2$ of $T_1$ contains exactly the variables $\lex_2$.
\end{proposition}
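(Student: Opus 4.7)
My plan rests on the equivalent characterization, recalled in \Cref{sec:basicNotions}, that a hypergraph $\calH$ is $S$-connex if and only if $\calH$ is acyclic and $\calH \cup \{S\}$ is acyclic. Under this characterization, the hypotheses say that both $\calH(Q) \cup \{L_1\}$ and $\calH(Q) \cup \{L_2\}$ are acyclic. The first step is to derive the joint acyclicity of $\calH(Q) \cup \{L_1, L_2\}$: start from a join tree $T'$ of $\calH(Q) \cup \{L_1\}$ and attach a new leaf $L_2$ adjacent to the node $L_1$. Because $L_2 \subseteq L_1$, every variable in $L_2$ already lies in its parent node $L_1$, so the running intersection property is preserved and the resulting tree $T^*$ is a join tree of $\calH(Q) \cup \{L_1, L_2\}$ in which $L_2$ is a leaf whose only neighbor is $L_1$.

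In $T^*$ the desired nested structure is already visible at the level of the augmented hypergraph: the two-node subtree $\{L_1, L_2\}$ has vertex union $L_1 \cup L_2 = L_1$, and the one-node subtree $\{L_2\}$ has vertex union $L_2$. The obstacle is that $T^*$ is a join tree of $\calH(Q) \cup \{L_1, L_2\}$, not of an inclusive extension of $\calH(Q)$, since the ``virtual'' hyperedges $L_1$ and $L_2$ need not be contained in any edge of $\calH(Q)$.

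The second step---and the technical core---is to eliminate these virtual hyperedges while preserving the nested structure. Following the idea behind the characterization that acyclicity of $\calH \cup \{F\}$ implies the existence of an ext-$F$-connex tree, I would replace each virtual node $L_i$ by a subtree of sub-edges of the form $u \cap L_i$, where $u$ ranges over the neighbors of $L_i$ in $T^*$ that are genuine edges of $\calH(Q)$. Each such $u \cap L_i$ is a subset of $u \in \calH(Q)$ and is therefore a legitimate edge of an inclusive extension. The coverage identity $\bigcup_u (u \cap L_i) = L_i$ is the key point: every variable $v \in L_i \subseteq \var(Q)$ appears in some edge of $\calH(Q)$, so by running intersection in $T^*$ the subtree of nodes containing $v$ reaches some neighbor $u$ of $L_i$ that is a genuine edge, guaranteeing $v \in u \cap L_i$. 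Performing this replacement in a nested fashion---first expanding $L_1$ into a subtree of sub-edges $u \cap L_1$, then expanding $L_2$ inside it using the further sub-edges $u \cap L_2 \subseteq u \cap L_1$ for the same neighbors $u$---yields the required join tree $T$ of an inclusive extension of $\calH(Q)$ with nested subtrees $T_2 \subseteq T_1 \subseteq T$ of vertex unions $L_2$ and $L_1$.

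The main obstacle I expect is the reconnection step: after removing a virtual node $L_i$, the sub-edges $u \cap L_i$ must be organized into a subtree satisfying the running intersection property and having vertex union exactly $L_i$, and this must be done for $L_1$ and $L_2$ in a compatible way so that $T_2 \subseteq T_1$ in the final tree. Showing that such a compatible organization exists will rely on both $L_1$-connexity and $L_2$-connexity together, and in particular on the fact that $L_2 \subseteq L_1$, which makes every sub-edge $u \cap L_2$ used for $T_2$ a further restriction of a sub-edge $u \cap L_1$ already used for $T_1$, so that $T_2$ can be placed along the same spine of sub-edges that realizes $T_1$.
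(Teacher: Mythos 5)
Your first step is fine, but the second step --- eliminating the virtual nodes --- is where the entire content of the proposition lives, and your plan for it does not go through. The clearest way to see this: your tree $T^*$ is built using only $L_1$-connexity and the containment $L_2 \subseteq L_1$ (attaching $L_2$ as a leaf under $L_1$ never invokes $L_2$-connexity). If the virtual nodes of $T^*$ could always be expanded into sub-edges in the way you describe, the proposition would therefore hold without assuming $L_2$-connexity at all, and that is false. Take $Q(x,y,z) \datarule R(x,y), S(y,z)$ with $L_1 = \{x,y,z\}$ and $L_2 = \{x,z\}$: $Q$ is $L_1$-connex but not $L_2$-connex (the chordless path $x,y,z$ is an $\{x,z\}$-path), so no join tree of an inclusive extension has a connected subtree whose variables are exactly $\{x,z\}$. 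Concretely, in your $T^*$ the only neighbor of the virtual leaf $L_2$ is the virtual node $L_1$, so you are forced to realize $T_2$ out of the sub-edges $u \cap L_2$ for the genuine neighbors $u$ of $L_1$; in the example these are $\{x\}$ and $\{z\}$, whose union is indeed $L_2$ but which cannot be joined into a \emph{connected} subtree without passing through a node containing $y$. So the ``compatible organization'' you defer to the last paragraph is not a technicality: it is exactly the point at which $L_2$-connexity must enter the argument, and your sketch never uses it. (A further, independent concern: even for $L_1$ alone, it is not justified that the family $\{u\cap L_1 : u \in N(L_1)\}$ is acyclic, so ``organize the sub-edges into a subtree satisfying running intersection'' itself requires proof.)

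The paper's proof avoids all of this by consuming $L_2$-connexity up front: it takes an ext-$L_2$-connex tree $T_2$ (whose existence is exactly what $L_2$-connexity provides), deletes from its nodes all variables outside $L_1$ to obtain a join tree $T_2[L_1]$ whose variable union is exactly $L_1$ and which still contains a subtree with variables exactly $L_2$ (since $L_2\subseteq L_1$, that subtree survives the deletion), and then substitutes $T_2[L_1]$ wholesale for the virtual head node in a join tree of $\calH(Q)\cup\{L_1\}$, hooking each former neighbor $V$ of the head to a node of $T_2[L_1]$ equal to $V \cap L_1$. If you want to salvage your route, you should similarly import a ready-made ext-$L_2$-connex tree as the replacement structure rather than trying to rebuild the replacement subtrees from restrictions of the neighbors of the virtual nodes.
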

\begin{proof}
We describe a construction of the required tree. \Cref{fig:inclusive-connexity} demonstrates our construction.
We use two different characterizations of connexity.
Since $Q$ is $\lex_2$-connex, it has an ext-$\lex_2$-connex tree $T_2$.
Since $Q$ is $\lex_1$-connex, there is a join-tree $T_1$ for the atoms of $Q$ and its head.
Let $T_2[\lex_1]$ be $T_2$ where the variables that are not in $\lex_1$ are deleted from all nodes. That is, for every node $V \in T_2$, its variables are replaced with $\var(V) \cap \lex_1$.
Denote by $\calV$ all neighbors of the head in $T_1$, 
and denote by $T_1^-$ the graph $T_1$ after the deletion of the head node.
Taking both $T_2[\lex_1]$ and $T_1^-$ and connecting every node $V_1 \in \calV$ with a node $V_2$ of $T_2[\lex_1]$ such that $\var(V_1) \cap \lex_1 = \var(V_2)$ gives us the tree we want.
Such a node exists in $T_2[\lex_1]$ since every node of $T_1^-$ represents an atom of $Q$, and every atom of $Q$ is contained in some node of $T_2$.
The subtree $T_2[\lex_1]$ contains exactly $V_1$, and since this subtree comes from an ext-$\lex_2$-connex tree, it has a subtree containing exactly $\lex_1$.
It is easy to verify that the result is a tree, and we can show that the running intersection property holds in the united graph since it holds for $T_1$ and $T_2$.
\end{proof}

\begin{figure}[t]
  \input{posprop.pspdftex}
  \caption{Example for the construction from \Cref{prop:two-connex} for the CQ $Q(x,y,z)\datarule R_1(x,y,a),R_2(y,z,b),R_3(b,c),R_4(y,z,d)$ with $L_1=\{x,y,z\}$ and $L_2=\{y\}$.}
  \label{fig:inclusive-connexity}
\end{figure}

We are now in a position to show the following:

\begin{lemma}\label{lemma:partial-completion}
Let $Q$ be a CQ and $\lex$ be a partial lexicographic order. 
If $Q$ is free-connex and $\lex$-connex and does not have a disruptive trio with respect to $\lex$, 
then there is an ordering $\lex^+$ of $\free(Q)$ that starts with $\lex$ such that $Q$ has no disruptive trio with respect to $\lex^+$.
\end{lemma}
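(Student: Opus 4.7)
My plan is to construct $L^+$ explicitly by appending to $L$ an ordering of $U := \free(Q) \setminus \var(L)$ obtained from a depth-first traversal of a suitable join tree, and then verify the absence of disruptive trios by a short case analysis.

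I begin by applying \Cref{prop:two-connex} with $L_1 = \free(Q)$ and $L_2 = \var(L)$ (both connexity hypotheses are in scope) to obtain a join tree $T$ of an inclusive extension of $\calH(Q)$ together with nested subtrees $T_2 \subseteq T_1 \subseteq T$ whose variable sets are exactly $\var(L)$ and $\free(Q)$, respectively. A consequence that is easy to extract from the join-tree structure of $T$ (using that $T_1$ is a connected subtree covering exactly $\free(Q)$, so that the $T$-path from any atom $e\in\atoms(Q)$ to $T_1$ enters $T_1$ at a single node that must contain all free variables of $e$ by running intersection) is that the hypergraph whose hyperedges are the nodes of $T_1$ is inclusion-equivalent to $\calH_{\free}(Q)$; in particular, any two free variables that are neighbors in $\calH(Q)$ co-occur in some node of $T_1$.

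Next, I root $T_1$ at an arbitrary node of $T_2$ and perform a DFS of $T_1$ that prioritizes neighbors in $T_2$ over neighbors in $T_1 \setminus T_2$, so that all of $T_2$ is explored before any node of $T_1 \setminus T_2$. I define $L^+$ as $L$ followed by the variables of $U$ listed in the order in which they are first encountered during this DFS (ties within a single node broken arbitrarily). To show that $L^+$ has no disruptive trio, suppose for contradiction that $(v_1, v_2, v_3)$ is one. If $v_3 \in \var(L)$, then the placement of all $U$-variables after $\var(L)$ in $L^+$ forces $v_1, v_2 \in \var(L)$ as well, producing a disruptive trio for $L$, contrary to hypothesis. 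Otherwise $v_3 \in U$; let $N_3$ be the $T_1$-node at which $v_3$ is first introduced. The subtree $T^{v_3}$ of $T_1$-nodes containing $v_3$ is connected by running intersection, and $N_3$ is its root-most node with respect to the chosen rooting, so $T^{v_3}$ lies entirely in the descendants of $N_3$. Since $v_1,v_2$ are neighbors of $v_3$ in $\calH(Q)$, each co-occurs with $v_3$ in some node of $T^{v_3}$, while each also precedes $v_3$ in $L^+$ and hence appears in some node visited strictly before $N_3$, which by the DFS-from-$T_2$ convention lies outside the subtree rooted at $N_3$. The unique $T_1$-path from such an earlier occurrence into $T^{v_3}$ must pass through $N_3$, so running intersection places both $v_1$ and $v_2$ in $N_3$. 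But then $v_1,v_2$ co-occur in a node of $T_1$, hence are neighbors in $\calH(Q)$, contradicting the definition of a disruptive trio.

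The main obstacle is the final running-intersection step, which crucially requires that every node visited before $N_3$ lie outside the subtree rooted at $N_3$; this is exactly what the DFS-from-$T_2$ convention guarantees and is the reason for processing $T_2$ first. A secondary subtlety, addressed in the second paragraph, is the interchangeability between ``neighbors in $\calH(Q)$'' and ``co-occurrence in some node of $T_1$''; without this we could not convert the neighborship of $v_1,v_2$ to $v_3$ in $\calH(Q)$ into membership in a common $T_1$-node.
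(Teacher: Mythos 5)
Your proof is correct and follows essentially the same route as the paper's: both apply \Cref{prop:two-connex} to get the nested subtrees, extend $L$ by the order in which the remaining free variables are first encountered during a traversal of the free-variable subtree that starts from the $L$-subtree, and then use the running intersection property to force every earlier-placed neighbor of a newly introduced variable into the node that introduces it, so that any alleged disruptive trio would make $v_1,v_2$ neighbors. The only differences are cosmetic (you use an explicit rooted DFS and a separator argument where the paper uses a generic frontier expansion with an induction on prefixes, and you should note the degenerate case where $L$ is empty and $T_2$ has no node to root at, which the paper handles by inserting an empty node).
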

\begin{proof}
According to \Cref{prop:two-connex},
there is a join tree $T$ 
(of an inclusive extension of $Q$)
with a subtree $T_{\free}$ containing exactly the free variables, 
and a subtree $T_\lex$ of $T_{\free}$ containing exactly the $\lex$ variables.
We assume that $T_\lex$ contains at least one node; otherwise (this can only happen in case $\lex$ is empty), we can introduce a node with no variables to all of $T$, $T_{\free}$ and $T_\lex$ and connect it to any one node of $T_{\free}$.
We describe a process of extending $\lex$ while traversing $T_{\free}$.
Consider the nodes of $T_\lex$ as handled, and initialize $\lex^+=\lex$.
Then, repeatedly handle a neighbor of a handled node until all nodes are handled. 
When handling a node, append to $\lex^+$ all of its variables that are not already there.
We prove by induction that $Q$ has no disruptive trio w.r.t any prefix of $\lex^+$.
The base case is guaranteed by the premises of this lemma since $\lex$ (hence all of its prefixes) has no disruptive trio.

Let $v_\varidx$ be a new variable added to a prefix $v_1,\ldots,v_{\varidx-1}$ of $\lex^+$.
Let $T^+$ be the subtree of $T_{\free}$ with the handled nodes when adding $v_\varidx$ to $\lex^+$ and let $V\not\in T^+$ be the node being handled.
Note that, since $v_\varidx$ is being added, $v_\varidx \in V$ but $v_\varidx$ is not in any node of $T^+$.

We first claim that every neighbor $v_i$ of $v_\varidx$ with $i<\varidx$ is in $V$.
Our arguments are illustrated in \Cref{fig:lex-completion}.
Since $v_i$ and $v_\varidx$ are neighbors, they appear together in a node $V_{i,\varidx}$ outside of $T^+$. 
Let $V_i$ be a node in $T^+$ containing $v_i$ (such a node exists since $v_i$ appears before $v_\varidx$ in $\lex^+$). 
Consider the path from $V_{i,\varidx}$ to $V_{i}$. 
Let $V_{\ell}$ be the last node of this path not in $T^+$. If $V_{\ell} \neq V$,
the path between $V_{\ell}$ and $V$ goes only through nodes of $T^+$ (except for the end-points).
Thus, concatenating the path from $V_{i,\varidx}$ to $V_{\ell}$ with the path from $V_{\ell}$ to $V$ results in a simple path. 
By the running intersection property, all nodes on this path contain $v_\varidx$. In particular, the node following $V_{\ell}$  contains $v_\varidx$ in contradiction to the fact that $v_\varidx$ does not appear in $T^+$. Therefore, $V_{\ell}= V$. By the running intersection property, since $V$ is on the path between $V_i$ and $V_{i,\varidx}$, we have that $V$ contains $v_i$.

\begin{figure}[t]
  \parbox{1.5in}{
    \input{partial2.pspdftex}
    \vskip1em
    
    We get a contradiction in the case where $V\neq V_\ell$.
  }
  \,\vrule\quad
  \parbox{1.5in}{
    \input{partial1.pspdftex}
    \vskip1em
    
    If $v_i$ is a neighbor of $v_p$ with $i<n$, then $v_i\in V$.
    }
  \caption{The induction step in \Cref{lemma:partial-completion}}
  \label{fig:lex-completion}
\end{figure}

We now prove the induction step. We know by the inductive hypothesis that $v_1,\ldots,v_{\varidx-1}$ have no disruptive trio. Assume by way of contradiction that appending $v_\varidx$ introduces a disruptive trio. 
Then, there are two variables $v_i,v_j$ with $i<j<\varidx$ such that $v_i,v_\varidx$ are neighbors, $v_j,v_\varidx$ are neighbors, but $v_i,v_j$ are not neighbors.
As we proved, since $v_i$ and $v_j$ are neighbors of $v_\varidx$ preceding it, we have that all three of them appear in the handled node $V$. 
This is a contradiction to the fact that $v_i$ and $v_j$ are not neighbors.
\end{proof}

The positive side of \Cref{thm:partial-dichotomy} is obtained by combining \Cref{lemma:partial-completion} with \Cref{thm:lex-dicotomy}.

\subsection{Intractable Cases}\label{sec:partial-intractable}

For the negative part, we prove a generalization of \Cref{lemma:hardness}.
Recall that according to \Cref{lemma:enum-prefix-using-access}, 
we can use lexicographic direct access to enumerate the answers to a CQ with a prefix of the ordered free variables. 
Similarly to \Cref{sec:dichotomy}, our goal is to find a 
``bad'' prefix that does not allow efficient enumeration.
For non-$L$-connex CQs, this is easy since $L$ itself is such a prefix.

\begin{lemma}\label{lemma:enum-nonfc-using-access}
Let $Q$ be an acyclic self-join free CQ and $L$ be a partial lexicographic order. 
If $Q$ has a disruptive trio or $Q$ is not $L$-connex, then there exists a self-join-free acyclic non-free-connex CQ $Q'$ such that: if direct access for $Q$ is possible in $\Comp{n \polylog n}{\polylog n}$, then enumeration for $Q'$ is possible in $\Comp{n \polylog n}{\polylog n}$.
\end{lemma}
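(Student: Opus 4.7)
The plan is to reduce to \Cref{lemma:enum-prefix-using-access} by identifying, in each hypothesized intractable case, a prefix $L'$ of $L$ such that the CQ obtained from $Q$ by restricting the free variables to $L'$ is acyclic, self-join-free, and not free-connex. Once such a prefix is found, setting $Q'$ to be this restricted CQ immediately yields the lemma: acyclicity and self-join-freeness are inherited from $Q$, while non-free-connexity is exactly what makes $Q'$ hard to enumerate.

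I would split the argument into the two hypotheses. First, if $Q$ is not $L$-connex, the natural choice is $L' = L$ itself, and I define $Q'$ to have free variables $L$. Then $Q'$ is $\free(Q')$-connex iff $Q$ is $L$-connex, so $Q'$ is not free-connex. Second, if $Q$ has a disruptive trio $v_1, v_2, v_3$ with respect to $L$, where without loss of generality $v_1$ precedes $v_2$ in $L$ and both precede $v_3$, I take $L'$ to be the prefix of $L$ ending in $v_2$. Because $v_1$ and $v_2$ are non-neighbors while $v_3 \notin L'$ is a common neighbor of both, the sequence $v_1, v_3, v_2$ is a chordless $L'$-path in $\calH(Q)$. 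By the characterization of connexity via $S$-paths recalled in \Cref{sec:basicNotions}, $Q$ is not $L'$-connex, so the CQ $Q'$ with the same body as $Q$ and free variables $L'$ is acyclic but not free-connex.

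Having produced $Q'$ in either case, I apply \Cref{lemma:enum-prefix-using-access}: since $L'$ is a prefix of $L$, an $\Comp{n\polylog n}{\polylog n}$ direct-access algorithm for $Q$ by $L$ converts to an $\Comp{n\polylog n}{\polylog n \cdot \log n} = \Comp{n\polylog n}{\polylog n}$ enumeration algorithm for $Q'$, which is what the lemma claims. Self-join-freeness of $Q'$ follows trivially from self-join-freeness of $Q$ since the body is unchanged.

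The main delicate point is the disruptive-trio case: one needs to verify that the path $v_1, v_3, v_2$ is actually \emph{chordless} (equivalently, that $v_1$ and $v_2$ do not appear together in any atom), which is immediate from the definition of a disruptive trio, and that $v_3 \notin L'$, which holds because $v_3$ appears strictly after $v_2$ in $L$ and $L'$ ends at $v_2$. Everything else is routine bookkeeping combining \Cref{lemma:enum-prefix-using-access} with the $S$-path characterization of $S$-connexity.
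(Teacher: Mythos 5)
Your proposal is correct and follows essentially the same route as the paper's proof: take $L'=L$ when $Q$ is not $L$-connex, and otherwise take $L'$ to be the prefix of $L$ ending at the later of the two non-neighbors in the disruptive trio, observe that $v_1,v_3,v_2$ forms an $L'$-path so the restricted query is acyclic but not free-connex, and then invoke \Cref{lemma:enum-prefix-using-access}. The extra checks you flag (chordlessness and $v_3\notin L'$) are exactly the right sanity checks and are immediate from the definitions, as you note.
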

\begin{proof}
If $Q$ is not $L$-connex, we use \Cref{lemma:enum-prefix-using-access} with $L'=L$. If $L$ has a disruptive trio $v_1,v_2,v_3$, we take $L'$ to be the prefix of $L$ that ends in $v_2$. 
Then, $v_1,v_3,v_2$ is an $L'$-path, meaning that the body of $Q$ is not $L'$-connex.
Thus, we can use \Cref{lemma:enum-prefix-using-access} in that case too.
\end{proof}

It is known that, assuming $\sparseBMM{}$, self-join-free non-free-connex CQs cannot be answered with polylogarithmic time per answer after quasilinear preprocessing time. Thus, we conclude from \Cref{lemma:enum-nonfc-using-access} that self-join-free acyclic CQs with disruptive trios or that are not $L$-connex do not have partial lexicographic direct access within these time bounds either.
The case that $Q$ is cyclic is hard since even finding any answer for cyclic CQs is not possible efficiently assuming \hyperclique{}.

\section{Direct Access by Sum of Weights}
\label{sec:sum_direct}

We now consider direct access for the more general orderings based on SUM
(the \e{sum} of free-variable weights). 
As with lexicographic orderings,
we are able to exhaustively classify tractability for the self-join-free CQs, 
even those with projections.
We will show that direct access for SUM is significantly harder and
tractable only for a small class of queries.

\subsection{Overview of Results}

The main result of this section is a dichotomy for direct access by SUM orders:

\begin{theorem}[Direct Access by SUM]
\label{th:ra_sum_dichotomy}
Let $Q$ be a CQ.
\begin{itemize}
    \item If $Q$ is acyclic and an atom of $Q$ contains all the free variables,
    then direct access by SUM is possible 
    in $\Comp{n \log n}{1}$.
    \item Otherwise, if $Q$ is also self-join-free, 
     direct access by SUM is not possible 
     in $\Comp{n \polylog n}{\polylog n}$,
     assuming \ThreeSUM{} and \hyperclique{}.
\end{itemize}
\end{theorem}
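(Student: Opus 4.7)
The plan is to prove the dichotomy by handling the positive and negative sides separately, splitting the negative side further into a cyclic subcase and an acyclic subcase.

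For the positive side, suppose $Q$ is acyclic and some atom $e^*$ contains every free variable. I would apply a Yannakakis-style semi-join reduction along a join tree of $Q$ to prune, in $O(n)$ time, each relation to the tuples that extend to some global answer. Since $\free(Q) \subseteq \var(e^*)$, the answer set $Q(I)$ is exactly the projection of the pruned $e^*$ onto $\free(Q)$, with duplicates removed. I would then compute the SUM-weight of each answer in linear total time, sort the resulting list by weight in $O(n \log n)$, and store it as an array; the $k$-th answer is the $k$-th array entry, giving $\Comp{n \log n}{1}$ direct access.

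For the negative side, I would first dispense with the cyclic subcase: if $Q$ is self-join-free and cyclic, even deciding whether $Q(I) \neq \emptyset$ takes super-quasilinear time under \hyperclique{}, so no direct-access scheme within the claimed bounds can exist. The remaining case is that $Q$ is acyclic, self-join-free, and no atom contains all free variables. This subsumes the non-free-connex case, because if an atom $e^*$ contained $\free(Q)$, then augmenting $\calH(Q)$ with the hyperedge $\free(Q)$ (a subset of $\var(e^*)$) would yield an inclusive extension whose join tree witnesses $\free$-connexity. The structural lemma I would then prove is a Helly-type claim: in an acyclic hypergraph the set of atoms containing each variable forms a subtree of the join tree, and subtrees of a tree satisfy the Helly property; hence if every pair of free variables co-occurred in some atom, all free variables would lie in one common atom, contradicting the hypothesis. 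Consequently, there exist two free variables $v_1, v_2$ sharing no common atom.

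I would then reduce from \ThreeSUM{}. Given sets $A, B, C$ of size $m$, I would build in $O(m)$ time a database $I$ over the schema of $Q$ so that (i) $v_1$ ranges precisely over $A$ and $v_2$ precisely over $B$, (ii) every other variable is pinned to a single dummy constant (using fresh constants for atom-local variables and a shared ``link'' constant for variables appearing in multiple atoms), and (iii) the answer set $Q(I)$ is in bijection with $A \times B$. Setting the attribute weights $w_{v_1}(a) = a$, $w_{v_2}(b) = b$, and all other weights to zero makes the weight of the answer corresponding to $(a,b)$ equal to $a + b$. If direct access by SUM were possible in $\Comp{n \polylog n}{\polylog n}$, then for each $c \in C$ I could binary-search the sorted answer list for weight $-c$ using $O(\log m)$ direct accesses of $\polylog m$ time each, yielding an $O(m \polylog m)$ algorithm for \ThreeSUM{} and the desired contradiction. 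The main obstacle I anticipate is precisely the database construction: ensuring that each pair $(a,b) \in A \times B$ arises as exactly one answer — no multiplicities and no spurious cross-combinations — for every acyclic self-join-free $Q$ satisfying the hypothesis, despite potentially intricate join structure, existentially quantified variables, and projection onto $\free(Q)$. The fact that $v_1$ and $v_2$ live in disjoint atoms of a self-join-free CQ is what makes the decoupling possible, but verifying the bijection cleanly in full generality is the technical heart of the argument.
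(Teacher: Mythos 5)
Your proposal is correct and follows essentially the same route as the paper: semi-join reduction plus sorting for the positive side, \hyperclique{} for the cyclic case, and a \ThreeSUM{} reduction that exploits two independent free variables (whose existence, when no atom covers $\free(Q)$, the paper likewise derives from the join tree and the running-intersection property) together with binary-search weight lookups, with the bijection onto $A\times B$ justified exactly as you anticipate via self-join-freeness and the fact that the two variables never co-occur in an atom. The only difference is that you handle every hard acyclic case with the two-variable construction and $m$ lookups, whereas the paper additionally treats three or more independent free variables separately to obtain a sharper fine-grained lower bound ($\Comp{n^{2-\epsilon}}{n^{2-\epsilon}}$ via a single lookup); for the theorem as stated your unified reduction suffices.
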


For the positive part of the above theorem, we will see that we are able to materialize the query answers and keep them in
a sorted array that supports direct access in constant time.
The proof of the negative part requires the query answers to express
certain combinations of weights. If the query contains \e{independent} free variables,
then its answers may contain all possible combinations of their corresponding attribute
weights. 
We will thus rely on
this independence measure 
to identify hard cases.

\begin{definition}[Independent free variables]
	\label{def:indVar}
A set of vertices $V_i \subseteq V$ of a hypergraph $\calH(V, E)$ is called 
independent iff no pair of these vertices appears in the same hyperedge, i.e.,
$|V_i \cap e| \leq 1$ for all $e \in E$.
For a CQ $Q$, we denote by $\freeind(Q)$
the maximum number of variables among $\free(Q)$ that are independent in $\calH(Q)$.
\end{definition}

Intuitively, we can construct a database instance 
where each independent free variable is assigned to 
$n$ different domain values with $n$ different weights.
By appropriately choosing the assignment of the other variables,
all possible $n^{\freeind(Q)}$ combinations of these weights will appear in the query answers.
Providing direct access then implies 
that we can retrieve these sums in ranked order.
We later use this to show that direct access on certain CQs allows us to solve \ThreeSUM{} efficiently.

\begin{example}
\label{ex:3sumConstruction}
For $Q(x, y, z) \datarule R(x,y), S(y,z), T(z,u)$, 
we have $\freeind(Q) = 2$, 
namely for variables $\{x, z\}$. 
Let the binary relation $R$ be $[1, n] \times \{ 0 \}$, i.e.,
the cross product between the set of values from $1$ to $n$ with the single value $0$.
If we also set
$S = \{ 0 \} \times [1, n]$ and
$T = [1, n] \times \{ 0 \}$,
then the query answers are the $n^2$ assignments of $(x, y, z)$ to $[1, n] \times [1, n] \times \{ 0 \}$.
The $n$ values of $x$ and $z$ can be respectively assigned to any real-valued weights
such that direct access on $Q$ retrieves their $i^\textrm{th}$ sum in ranked order.
\end{example}

Our independence measure $\freeind(Q)$ is related to the classification of \Cref{th:ra_sum_dichotomy} in the following way:

\begin{lemma}
\label{lem:freeind}
For an acyclic CQ $Q$, an atom contains all the free variables iff $\freeind(Q) \leq 1$.
\end{lemma}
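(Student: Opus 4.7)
The plan is to prove both directions separately. The forward direction is immediate: if some atom $e$ of $Q$ contains every free variable, then any two free variables co-occur in $e$, so no two-element subset of $\free(Q)$ can be independent, and hence $\freeind(Q)\leq 1$. Note that acyclicity is not needed here.

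For the converse direction, assume $Q$ is acyclic and $\freeind(Q)\leq 1$. The cases $|\free(Q)|\leq 1$ are trivial (pick any atom containing the unique free variable if one exists; if $\free(Q)=\emptyset$ every atom vacuously contains all free variables). Assume then $|\free(Q)|\geq 2$. The condition $\freeind(Q)\leq 1$ means no two free variables form an independent set, that is, for every pair $v_1,v_2\in\free(Q)$ there exists an atom of $Q$ containing both.

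The main step is to turn pairwise co-occurrence into a common atom, and the natural tool (already used in \Cref{lemma:trio-to-layered}) is the Helly property of subtrees of a tree. Fix a join tree $T$ of $\calH(Q)$, which exists by acyclicity. By the running intersection property, for each variable $v\in\var(Q)$ the set of nodes of $T$ whose label contains $v$ forms a connected subtree $T_v$. For free variables $v_1,v_2$, the atom witnessing their non-independence is a node of $T$ belonging to both $T_{v_1}$ and $T_{v_2}$, so the subtrees $T_{v_1}$ and $T_{v_2}$ share a node. Since pairwise intersections of subtrees of a tree are nonempty, the Helly property~\cite{GOLUMBIC198081} implies that the whole collection $\{T_v:v\in\free(Q)\}$ shares a common node $e\in T$. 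This node $e$ is an atom of $Q$, and by construction it contains every free variable, completing the proof.

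The only real obstacle is recognizing that the pairwise co-occurrence condition combined with acyclicity is exactly the setting of the Helly property; once this is observed, both directions are short. This also explains why acyclicity is essential in the statement: without it, one could have a ``triangle'' of three free variables pairwise sharing atoms but with no atom containing all three.
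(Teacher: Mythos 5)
Your proof is correct. The forward direction matches the paper's (both note it is immediate from the definition of independence), but your converse takes a genuinely different route. The paper argues by maximality: it picks a join-tree node $V$ with the maximum number of free variables, supposes a free variable $y$ is missing, and uses the running intersection property along the path from $V$ to the connected set of nodes containing $y$ to find a node containing $y$ together with all free variables of $V$, contradicting maximality. You instead observe that $\freeind(Q)\leq 1$ says exactly that the subtrees $\{T_v : v\in\free(Q)\}$ pairwise intersect, and invoke the Helly property for subtrees of a tree to get a common node. The two arguments rest on the same underlying structure (the Helly property is essentially a packaged form of the running-intersection path argument the paper carries out by hand), but yours is shorter and reuses a tool the paper already cites in the proof of \Cref{lemma:trio-to-layered}, whereas the paper's version is self-contained and does not need the Helly citation here. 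Your closing remark about the triangle query correctly identifies why acyclicity is needed for the converse. One cosmetic point: the common node produced by Helly is a hyperedge of $\calH(Q)$, i.e., the variable set of some atom of $Q$, which is exactly what the lemma asks for, so no gap arises there.
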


\begin{proof}

The ``only if'' part of
$\freeind(Q) > 1$ follows immediately from \cref{def:indVar}.

For $\freeind(Q) = 1$ and acyclic query $Q$, 
we prove that there is an atom $R_f(\mathbf{X}_f)$ which contains all the free variables.
First note that for $|\free(Q)| = 1$ this is trivially true. For $|\free(Q)| > 1$,
let $V$ 
be a node in the join tree 
(corresponding to some atom of $Q$)
that contains the maximum number of free variables
and assume for the sake of contradiction that there exists a free variable $y$
with $y \notin V$. We use $\calV_y$ to denote the set of nodes in the join tree that
contain variable $y$; thus $V \notin \calV_y$. 
From $Q$ being acyclic follows that
the nodes in $\calV_y$ form a connected graph and there exists a node $V'$
that lies on every path from $V$ to a node in $\calV_y$.
Since $\freeind(Q) = 1$, each variable $x \in V$ must appear together with
$y$ in some query atom, implying that $x$ appears in some node $V'' \in \calV_y$.
From that and the running intersection property follows that $x$ must
also appear in $V'$ since $V'$ lies on the path from $V$ to any such $V''$.
Hence $V'$ contains $y$ and all the $V$ variables,
violating the maximality assumption for $V$.

For $\freeind(Q) = 0$, $Q$ is a Boolean query
and any atom trivially contains the empty set.
\end{proof}

Therefore, the dichotomy of \Cref{th:ra_sum_dichotomy} can equivalently be stated using $\freeind(Q) \leq 1$ as a criterion.
We chose to use the other criterion (all free variables contained in one atom) in the statement of our theorem statement as it is more straightforward to check.
In the next section, we proceed to prove our theorem by showing intractability for all queries with $\freeind(Q) > 1$
and a straight-forward algorithm for $\freeind(Q) \leq 1$.

\subsection{Proofs}

\begin{figure}[t]
\centering
\begin{tabular}[t]{ c | c  c c }
Query condition			& Direct access 		& Complexity 								& Reason \\
\hline                                  	
acyclic $\freeind(Q) = 1$		& possible in			& $\Comp{n \log n}{1}$ 						& \Cref{lem:rra_1}		\\			
acyclic $\freeind(Q) = 2$		& not possible in		& $\Comp{n^{2-\epsilon}}{n^{1-\epsilon}}$	& \ThreeSUM{}		\\
acyclic $\freeind(Q) \geq 3$	& not possible in 		& $\Comp{n^{2-\epsilon}}{n^{2-\epsilon}}$  	& \ThreeSUM{}		\\
cyclic							& not possible in 		& $\Comp{n \polylog n}{\polylog n}$			& \hyperclique{}
\end{tabular}			
\caption{Possibility of direct access by sum of weights for acyclic self-join-free conjunctive queries.}
\label{fig:directAccessSum}
\end{figure}

For the hardness results, we rely mainly on the \ThreeSUM{} hypothesis.  
To more easily relate our direct-access problem to \ThreeSUM{},
which asks for the existence of a particular sum of weights,
it is useful to define an auxiliary problem:

\begin{definition}[weight lookup]
Given a CQ $Q$ and a weight function $w$ over its possible answers, 
\emph{weight lookup} takes as an input a database $I$ and
$\lambda \in \R$, 
and returns the first index of a query answer $q \in Q(I)$
with $w(q) = \lambda$ in the array of answers sorted by $w$
or ``none'' if no such answer exists.
\end{definition}

The following lemma associates direct access with weight lookup via binary search on the query answers:

\begin{lemma}\label{lem:inverted_sum}
For a CQ $Q$, if the $k^{\textrm{th}}$ query answer ordered by a weight function $w$
can be
directly accessed in $\bigO(T_d(n))$ time for every $k$, 
then weight lookup for $Q$ and $w$ can be performed
in $\bigO(T_d(n) \log n)$.
\end{lemma}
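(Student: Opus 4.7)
The plan is a textbook binary search over the sorted array of answers, using the direct-access primitive as an oracle. Fix a target weight $\lambda \in \R$. Since the (implicit) array $A$ of answers to $Q(I)$ is sorted by $w$, the set of indices $k$ with $w(A[k]) \geq \lambda$ forms a suffix of $A$, and the set of indices with $w(A[k]) = \lambda$ is a contiguous interval at the beginning of that suffix. Hence the answer to weight lookup is simply the leftmost index $k^\ast$ such that $w(A[k^\ast]) \geq \lambda$, provided $w(A[k^\ast]) = \lambda$, and ``none'' otherwise.

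The algorithm is then: first bound the length $N$ of $A$ by a quantity whose logarithm is $O(\log n)$. Since the query $Q$ is treated as a constant, we have $N \leq |Q(I)| \leq n^{|\var(Q)|}$, so $\log N = O(\log n)$. Then perform a standard binary search on the index range $[1, n^{|\var(Q)|}]$: at each step, probe an index $k$ using the direct-access structure, which in $O(T_d(n))$ time returns either an answer $q$ (in which case we compare $w(q)$ to $\lambda$) or ``out-of-bound'' (treated as $+\infty$). Maintain the invariant that the leftmost index with weight $\geq \lambda$ lies in the current interval, and narrow the interval in the usual way. After $O(\log n)$ probes, the interval has collapsed to a single index $k^\ast$. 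One final probe tells us whether $w(A[k^\ast]) = \lambda$: if so, return $k^\ast$; otherwise return ``none''.

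The total number of direct-access calls is $O(\log n)$, each costing $O(T_d(n))$, yielding the claimed $O(T_d(n) \log n)$ bound. The only subtlety, which is mild, is handling the out-of-bound case cleanly so that the binary search is robust to probing beyond the end of $A$; treating out-of-bound as weight $+\infty$ keeps the monotonicity the search relies on. No other step requires nontrivial work.
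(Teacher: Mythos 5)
Your proposal is correct and follows exactly the paper's argument: a binary search over the (implicitly) sorted array of answers, using the direct-access primitive as an oracle, with the number of probes bounded by $O(\log n)$ because $|Q(I)|$ is polynomial in $n$. The only difference is that you spell out the details (suffix structure, out-of-bound handling) that the paper leaves implicit.
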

\begin{proof}
We use binary search on the sorted array of query answers.
Each direct access returns a query answer whose weight can be computed in $\bigO(1)$.
Thus, in a logarithmic number of accesses we can find the first occurrence of the
desired weight.   Since the number of answers is polynomial in $n$, the number
of accesses is $\bigO(\log n)$ and each one takes $\bigO(T_d(n))$ time.
\end{proof}

\Cref{lem:inverted_sum} implies that whenever we are able to support efficient
direct access on the sorted array of query answers, weight lookup increases
time complexity only by a logarithmic factor, i.e., it is also efficient.
The main idea behind our reductions is that via weight lookups on a CQ with an
appropriately constructed database,  we can decide the existence of a zero-sum
triplet over three distinct sets of numbers, thus hardness follows from \ThreeSUM{}.
First, we consider the case of three independent variables that are free. 
These three variables are able to simulate a three-way Cartesian product in the query
answers. This allows us to directly encode the \ThreeSUM{} triplets using
attribute weights, obtaining a lower bound for direct access.

\begin{lemma}\label{lem:3sum_to_rra_geq3}
If a CQ $Q$ is self-join-free and $\freeind(Q) \geq 3$, 
then direct access by SUM is not possible
in $\Comp{n^{2-\epsilon}}{n^{2-\epsilon}}$
for any $\epsilon > 0$ assuming \ThreeSUM{}.
\end{lemma}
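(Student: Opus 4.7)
The plan is a fine-grained reduction from \ThreeSUM{} via the bridge provided by \Cref{lem:inverted_sum}. Suppose for contradiction that direct access by SUM for $Q$ is possible in $\Comp{n^{2-\epsilon}}{n^{2-\epsilon}}$ for some $\epsilon > 0$. Then by \Cref{lem:inverted_sum}, weight lookup for $Q$ can be solved in $\bigO(n^{2-\epsilon} \log n)$, which is $\bigO(n^{2-\epsilon'})$ for any $\epsilon' < \epsilon$. It therefore suffices to show that any \ThreeSUM{} instance of size $n$ can be solved by a single weight lookup on a database instance of size $\bigO(n)$ together with $\bigO(n)$ extra work.

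To build the reduction, I would fix three free variables $x_1, x_2, x_3$ witnessing $\freeind(Q) \geq 3$, so that no atom of $Q$ contains two of them. Given a \ThreeSUM{} instance with sets $A_1, A_2, A_3$ of size $n$ each, I would construct a database $I$ as follows: for each atom $R(\mathbf{Y})$ of $Q$, if $x_i \in \mathbf{Y}$ for the unique $i \in \{1,2,3\}$ with $x_i \in \mathbf{Y}$, add one tuple for every $a \in A_i$ that sets $x_i$ to $a$ and every other position of $R(\mathbf{Y})$ to a fresh constant $\star \notin A_1 \cup A_2 \cup A_3$; if no $x_i$ is in $\mathbf{Y}$, add the single all-$\star$ tuple. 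The size of $I$ is $\bigO(n)$ and it is built in $\bigO(n)$ time. The key structural claim is that $Q(I)$ consists exactly of the $|A_1|\cdot|A_2|\cdot|A_3|=n^3$ answers in which $(x_1,x_2,x_3)$ ranges over $A_1 \times A_2 \times A_3$ and every other free variable equals $\star$; this follows from independence, because no atom is forced to simultaneously pin two of $x_1, x_2, x_3$, so the three choices can be made independently, while all other variables are uniquely forced to $\star$ in every tuple of every atom.

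Next, I would assign weights $w_{x_i}(a) = a$ for $a \in A_i$, $w_{x_i}(\star) = 0$, and $w_y \equiv 0$ for every other free variable $y$. Under these weights, the answer corresponding to $(a,b,c) \in A_1\times A_2 \times A_3$ has weight $a+b+c$. Performing a weight lookup at $\lambda = 0$ then decides whether $A_1, A_2, A_3$ admit a zero-sum triple, in total time $\bigO(n) + \bigO(n^{2-\epsilon} \log n) = \bigO(n^{2-\epsilon'})$, contradicting \ThreeSUM{}.

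The main obstacle is the structural claim that $Q(I) = A_1 \times A_2 \times A_3$ (up to the fixed $\star$ coordinates). All three free variables $x_1, x_2, x_3$ can genuinely vary independently only because no atom contains two of them; otherwise an atom would couple two of the $x_i$'s and prevent the full Cartesian product from appearing. I would verify this claim carefully, noting that the query need not be acyclic for the reduction to go through: the constructed $I$ is a ``cross-product-like'' instance on which satisfiability of the body reduces to picking, for each shared non-$x_i$ variable, the consistent value $\star$. Everything else, including plugging the resulting running time against \ThreeSUM{}, is routine.
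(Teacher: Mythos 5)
Your proposal is correct and follows essentially the same route as the paper's proof: the same cross-product database construction on three independent free variables (the paper uses domain $[1,n]$ with weights $w_x(i)=A[i]$ where you embed the \ThreeSUM{} values directly, an immaterial difference), the same single weight lookup at $\lambda=0$ via \Cref{lem:inverted_sum}, and the same timing analysis. The only point the paper makes more explicitly is that self-join-freeness is what guarantees each relation contains at most one of the three independent variables, which is implicit in your per-atom construction.
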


\begin{proof}
Assume for the sake of contradiction that the lemma does not hold. We show that
this would imply an $O(n^{2-\epsilon})$-time algorithm for \ThreeSUM{}.
To this end, consider an instance of \ThreeSUM{} with integer sets
$A$, $B$, and $C$ of size $n$, given as arrays.
We reduce \ThreeSUM{} to direct access over the appropriate query and input
instance by using a construction similar to \Cref{ex:3sumConstruction}.
Let $x$, $y$, and $z$ be free and independent variables of $Q$, 
which exist because $\freeind(Q) \geq 3$. 
We create a database instance where $x$, $y$, and $z$ take on each value in $[1, n]$,
while all the other attributes have value $0$. 
This ensures that $Q$
has exactly $n^3$ answers---one for each $(x,y,z)$ combination in $[1,n]^3$,
no matter the number of atoms and 
the variables they contain.
To see this, note that
since $x$, $y$, and $z$ are independent, 
no pair of them appears together in an atom.
Also, since $Q$ is self-join-free, 
each relation appears once in the query,
hence contains at most one of $x$, $y$, and $z$.
Thus each relation either contains $1$ tuple (if neither $x$, $y$, nor $z$ is present)
or $n$ tuples (if one of $x$, $y$, or $z$ is present).
No matter on which attributes these relations
are joined (including Cartesian products), the output result is always the ``same''
set $[1,n]^3 \times \{ 0 \}^f$ of size $n^3$, where $f$ is the number of free
variables other than $x$, $y$, and $z$.
(We use the term ``same'' loosely for the sake of simplicity. Clearly,
for different values of $f$ the query-result schema changes, e.g., consider
\cref{ex:3sumConstruction} with $z$ removed from the head. However, this only
affects the number of additional $0$s in each of the $n^3$ answer tuples, therefore
it does not impact our construction.)

For the reduction from \ThreeSUM{}, weights are assigned to the attribute values as
$w_x(i) = A[i]$, $w_y(i) = B[i]$, $w_z(i) = C[i]$, $i \in [1, n]$, and
$w_u(0) = 0$ for all other attributes $u$.
By our weight assignment, the weights of the answers are
$A[i] + B[j] + C[k]$, $i, j, k \in [1, n]$, and thus
in one-to-one correspondence with the possible value combinations in the \ThreeSUM{} problem.
We first perform the preprocessing for direct access in $O(n^{2-\epsilon})$, which
enables direct access to any position in the sorted array of query answers in
$O(n^{2-\epsilon})$. By \Cref{lem:inverted_sum}, weight lookup for a query result
with zero weight is possible in $O(n^{2-\epsilon} \log n)$.
Thus, we answer the original \ThreeSUM{} problem in $O(n^{2-\epsilon'})$
for any $0 < \epsilon' < \epsilon$, violating the \ThreeSUM{} hypothesis.
\end{proof}

For queries that do not have three independent free variables, we need a slightly different
construction. We show next that two variables are sufficient to encode partial
\ThreeSUM{} solutions (i.e., pairs of elements), enabling a full solution of \ThreeSUM{}
via weight lookups. This yields a weaker lower bound than \Cref{lem:3sum_to_rra_geq3},
but still is sufficient to prove intractability according to our yardstick.

\begin{lemma}
\label{lem:3sum_to_rra_2}
If a CQ $Q$ is self-join-free and $\freeind(Q) = 2$, 
then direct access by SUM is not possible
in $\Comp{n^{2-\epsilon}}{n^{1-\epsilon}}$
for any $\epsilon > 0$ assuming \ThreeSUM{}.
\end{lemma}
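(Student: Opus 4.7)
The plan is to adapt the reduction from \Cref{lem:3sum_to_rra_geq3} to the setting with only two independent free variables. Since we have only two, we will not be able to encode the entire $A+B+C$ table in the query answers; instead, we will encode the $n^2$ pairwise sums $A[i]+B[j]$ and then solve \ThreeSUM{} by performing $n$ separate weight lookups, one for each $c\in C$.

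Concretely, assume for contradiction that direct access by SUM for $Q$ is possible in $\Comp{n^{2-\epsilon}}{n^{1-\epsilon}}$ and fix a \ThreeSUM{} instance with integer sets $A,B,C$ of size $n$. Let $x,y\in\free(Q)$ be two independent free variables (which exist since $\freeind(Q)=2$). Build a database instance by assigning $x$ and $y$ each the values in $[1,n]$ and setting every other attribute value to $0$; weights are $w_x(i)=A[i]$, $w_y(j)=B[j]$, and $0$ for all other attribute values. Exactly as in the proof of \Cref{lem:3sum_to_rra_geq3}, because $Q$ is self-join-free and $\{x,y\}$ is independent, every atom contains at most one of $x$ and $y$, so each relation has either $1$ or $n$ tuples, and the set of query answers is in bijection with $[1,n]^2$ with total weight $w(q_{i,j})=A[i]+B[j]$.

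Next, run the preprocessing for direct access in $O(n^{2-\epsilon})$ time. Then, for each $c\in C$, invoke weight lookup with target $-c$; by \Cref{lem:inverted_sum}, each such lookup costs $O(n^{1-\epsilon}\log n)$. A zero-sum triple exists iff some lookup returns a valid index. The overall running time is
\[
O(n^{2-\epsilon}) + n\cdot O(n^{1-\epsilon}\log n) = O(n^{2-\epsilon}\log n),
\]
which is $O(n^{2-\epsilon'})$ for any $\epsilon'<\epsilon$, contradicting \ThreeSUM{}.

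The main obstacle, as in the previous lemma, is ensuring that the construction really yields all $n^2$ weight combinations $A[i]+B[j]$ as answers, regardless of how $x$ and $y$ are distributed among the atoms and how those atoms are joined with the rest of the query. This is handled by the same independence and self-join-freeness argument used in the proof of \Cref{lem:3sum_to_rra_geq3}: every atom contributes either a single $0$-tuple or a full $n$-tuple column on one of $x$ or $y$, so the join result is always the ``same'' cross product $[1,n]^2\times\{0\}^f$ up to extra $0$-padding on the remaining free variables, which does not affect the answer weights.
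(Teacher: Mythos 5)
Your proposal is correct and follows essentially the same route as the paper's proof: the same database construction with two independent free variables encoding the $n^2$ sums $A[i]+B[j]$, followed by $n$ weight lookups (one per element of $C$) via \Cref{lem:inverted_sum}, yielding the same $O(n^{2-\epsilon}\log n)$ total and the same contradiction with \ThreeSUM{}. No gaps.
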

\begin{proof}
We show that a counterexample query would violate
the \ThreeSUM{} hypothesis.
Let $A$, $B$, and $C$ be three integer arrays of a \ThreeSUM{} instance of size $n$.
We construct a database instance with attribute weights like in the proof of
\Cref{lem:3sum_to_rra_geq3}, but now with only 2 free and independent variables
$x$ and $y$. Hence the weights of the $n^2$ query results are in one-to-one
correspondence with the corresponding sums $A[i]+B[j]$, $i, j \in [1,n]$.
We run the preprocessing phase for direct access in $O(n^{2-\epsilon})$,
which allows us to access the sorted array of query results in $O(n^{1-\epsilon})$.
For each value $C[k]$ in $C$, we perform a weight lookup on $Q$ for weight $-C[k]$,
which takes time $O(n^{1-\epsilon} \log n)$ (\Cref{lem:inverted_sum}). 
If that returns a valid index, then there exists a pair $(i, j)$ of $A$ and $B$ with sum 
$A[i] + B[j] = -C[k]$, which implies $A[i] + B[j] + C[k] = 0$; otherwise no such pair exists.
Since there are $n$ values in $C$, total time complexity is
$\bigO(n \cdot n^{1-\epsilon} \log n) = \bigO(n^{2-\epsilon} \log n)$.
This procedure solves \ThreeSUM{} in $O(n^{2-\epsilon'})$
for any $0 < \epsilon' < \epsilon$, violating the \ThreeSUM{} hypothesis.
\end{proof}

A special case of \Cref{lem:3sum_to_rra_2} is closely related to
the problem of selection in $X+Y$ \cite{johnson78xy},
where we want to access the $k^{\mathrm{th}}$ smallest sum of pairs between two sets $X$ and $Y$.
This is equivalent to accessing the answers to $Q_{XY}(x, y) \datarule R(x), S(y)$
by a SUM order.
It has been shown that if $X$ and $Y$ are given sorted, 
then selection (single access) is possible even in \emph{linear}
time~\cite{frederickson84selection,mirazaian85xy}.
Thus, for $Q_{XY}$ direct access by SUM is possible in $\Comp{n \log n}{n}$
if we sort the relations during the preprocessing phase.
Compared to our $\Comp{n^{2-\epsilon}}{1-\epsilon}$ lower bound (see also \Cref{fig:directAccessSum}),
notice that even though the preprocessing of this algorithm is lower (asymptotically), the access time is not sublinear ($epsilon=0$).

So far, we have covered all self-join-free CQs with $\freeind(Q) > 1$, which, by \Cref{lem:freeind}, proves the negative part of \Cref{th:ra_sum_dichotomy}.
Next, we show that the remaining acyclic CQs (those with $\freeind(Q) \leq 1$ or equivalently, an atom containing all the free variables) 
are tractable.
For these queries,
a single relation contains all the answers, so direct access
can easily be supported by reducing, projecting, and sorting that relation.

\begin{lemma}
\label{lem:rra_1}
If a CQ $Q$ is acyclic and an atom contains all the free variables,
then direct access by SUM is possible 
in $\Comp{n \log n}{1}$.
\end{lemma}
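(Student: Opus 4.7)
The plan is to materialize the entire set of distinct query answers in the preprocessing phase, sort them by weight, and store them in an array so that the $k^{\text{th}}$ answer is retrieved by a single array lookup in constant time. Since $Q$ is acyclic and some atom $R_f(\mathbf{X}_f)$ contains all free variables, the set $Q(I)$ is, after pruning dangling tuples, precisely the projection $\pi_{\free(Q)}(R_f)$, whose size is at most $|R_f|\le n$. So materialization is affordable within the $\bigO(n\log n)$ budget.

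Concretely, I would proceed in the following steps. First, apply Yannakakis's semi-join reduction along a join tree of $Q$ to remove every tuple from every relation that does not participate in any homomorphism; this is a standard linear-time (or $\bigO(n\log n)$ under our sort-based RAM assumption) preprocessing step. After this pruning, every tuple $t \in R_f$ extends to at least one homomorphism, so $Q(I) = \pi_{\free(Q)}(R_f)$ as a set (any repeated answer produced by multiple homomorphisms is identified with a single tuple in the projection). Second, compute $\pi_{\free(Q)}(R_f)$ and deduplicate by sorting, in $\bigO(n \log n)$. Third, for each distinct answer $q$, compute its weight $w_Q(q) = \sum_{x\in\free(Q)} w_x(q(x))$ in constant time per answer, since $|\free(Q)|$ is a query-size constant. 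Fourth, sort the (at most $n$) pairs $(q, w_Q(q))$ by weight in $\bigO(n \log n)$ and store them in an array $A$. Fifth, during the access phase, given index $k$, return $A[k]$ (or ``out-of-bound'' if $k$ exceeds the array length) in $\bigO(1)$.

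Correctness hinges on the equality $Q(I) = \pi_{\free(Q)}(R_f^{\text{reduced}})$ as a set, which follows from acyclicity plus the fact that $\free(Q)\subseteq \var(R_f)$: the semi-join reduction guarantees that every tuple of every relation participates in some full homomorphism, so every $t\in R_f$ yields at least one answer (its projection onto $\free(Q)$), and conversely every answer is the projection of some homomorphism and hence of some $t\in R_f$. The total preprocessing time is dominated by the semi-join reduction and the two sorts, all $\bigO(n\log n)$, and direct access is the cost of an array index, i.e., $\bigO(1)$.

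I do not foresee a real obstacle here beyond invoking the well-known semi-join machinery correctly; the only mild subtlety is to remember that answers are a set (so deduplication is needed) and that weights are attached to free-variable values, which lets us compute $w_Q(q)$ directly from $q$ without reference to the original homomorphism.
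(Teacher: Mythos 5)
Your proposal matches the paper's proof essentially exactly: both apply the Yannakakis semi-join reduction to eliminate dangling tuples, project $R_f$ onto the free variables, sort the answers by their sum of weights in $\bigO(n \log n)$, and store them in an array for constant-time access. The extra care you take about deduplication and about why $Q(I)$ equals the projection of the reduced $R_f$ is a welcome elaboration of the same argument.
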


\begin{proof}
Since all free variables appear in one atom $R_f(\mathbf{X}_f)$, we can apply a linear-time 
semi-join reduction as in the Yannakakis algorithm~\cite{Yannakakis}
to remove the dangling tuples, 
and then compute the query answers
by projecting $R$ on the
free variables. Then, we sort the query answers by the sum of weights, which takes total
time $\bigO(n \log n)$ for preprocessing.
We maintain the sorted answers in an array,
which enables constant-time direct access
to individual answers in ranked order.
\end{proof}

We now combine these lemmas with the fact that Boolean self-join-free cyclic CQs cannot be answered in $\bigO(n \polylog n)$ time assuming \hyperclique{},
completing the proof of \Cref{th:ra_sum_dichotomy}.

\section{Selection by Lexicographic Orders}
\label{sec:lex_selection}

We next investigate the tractability of a simpler version of the problem:
When is \emph{selection}, i.e., direct access to a \emph{single} query answer,
possible in quasilinear time?
In this section, we answer this question for lexicographic orders and in \Cref{sec:sum_selection} we move to the case of SUM.
Unlike direct-access, we show that selection can be efficiently achieved for any lexicographic order, as long as the query is free-connex.
Our main result in this setting is summarized below:

\begin{theorem}[Selection by LEX]\label{thm:lex-selection-dicotomy}
Let $Q$ be a CQ and $\lex$ be a partial lexicographic order.
\begin{itemize}
    \item If $Q$ is free-connex, 
    then selection by $\lex$ is possible 
    in $\Comp{1}{n}$.
	
    \item Otherwise, if $Q$ is also self-join-free, then selection by $\lex$ is not possible 
    in $\Comp{1}{n \polylog n}$,
    assuming \seth{} and \hyperclique{}.
\end{itemize}
\end{theorem}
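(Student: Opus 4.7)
Assume $Q$ is free-connex with partial lexicographic order $L=\angs{v_1,\ldots,v_\ell}$. I would determine the value of each $v_i$ in the $k$-th answer greedily, one variable at a time. The first step is to reduce, via \Cref{prop:reduce-to-subtree} in linear time, to an equivalent full acyclic CQ $Q'$; this preserves $\free(Q)$ and therefore $L$. Then, processing $i=1,\ldots,\ell$ while maintaining a prefix $(a_1^*,\ldots,a_{i-1}^*)$ and an updated rank $k$: filter all relations to tuples consistent with the prefix (linear time), run a Yannakakis-style bottom-up dynamic program on a join tree of $Q'$ to obtain, for each candidate value $a$ in the active domain of $v_i$, the count $c(a)$ of full answers with the fixed prefix and $v_i=a$; then apply a weighted-selection (quickselect-style) subroutine to find the unique $a_i^*$ with $\sum_{a<a_i^*}c(a)<k\le\sum_{a\le a_i^*}c(a)$, and update $k\leftarrow k-\sum_{a<a_i^*}c(a)$. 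If $k$ exceeds the total count at the first step we return ``out-of-bound''; otherwise, after all $\ell$ rounds, any consistent completion of the non-ordered free variables (obtained by a standard Yannakakis pass) yields a valid $k$-th answer. Each round runs in $O(n)$ and $\ell$ is a query-size constant, so the total time is $O(n)$, matching $\Comp{1}{n}$.

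\textbf{Intractable direction.} Suppose $Q$ is self-join-free and not free-connex. If $Q$ is cyclic, then Boolean evaluation of $Q$ is impossible in $O(n\polylog n)$ under \hyperclique{} (see \Cref{sec:hypotheses}), yet a single selection call decides whether any answer exists---contradiction. If $Q$ is acyclic but not free-connex, I reduce from \emph{counting}: given any selection routine running in $O(n\polylog n)$, we can binary-search for $|Q(I)|$ by calling the routine at $O(\log n)$ indices in the polynomially-bounded range $[0,n^{|\free(Q)|}]$ and testing whether each returns ``out-of-bound''. This would yield an $O(n\polylog n)$ counting algorithm, contradicting the known $\Omega(n^{2-\varepsilon})$ lower bound on counting for self-join-free acyclic non-free-connex CQs under \seth{} (cited in \Cref{sec:hypotheses}).

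\textbf{Main obstacle.} The crux of the positive side is that, for a free-connex CQ, after filtering by a partial assignment we can still obtain in linear time the group counts of surviving answers grouped by any chosen free variable $v_i$---even when $v_i$ lies in an interior atom that is not ``first'' in any layered join tree. This succeeds because Yannakakis-style dynamic programming provides, for every tuple, the number of full extensions consistent with it in its subtree; aggregating these values at any atom that contains $v_i$ gives the required group counts in a single pass. Compared to the direct-access setting, the key relaxation is that we never commit to a fixed layered decomposition that must serve all values of $k$: each $v_i$ is handled by a fresh filter-and-count sweep, which is exactly why free-connexity alone---without absence of a disruptive trio and without $L$-connexity---suffices for quasilinear selection, even though these extra conditions were required for quasilinear preprocessing with polylogarithmic access in the direct-access dichotomy.
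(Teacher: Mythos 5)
Your proposal is correct and follows essentially the same route as the paper: the positive side is the same iterative ``compute the per-value histogram, weighted-select the next value, filter, and recurse'' algorithm (the paper's \Cref{lemma:count-histogram} plus the per-variable loop), and the negative side is the identical reduction from counting via binary search over indices, combined with the \seth{}-based counting lower bound for acyclic non-free-connex CQs and \hyperclique{} for cyclic ones. The one detail to tighten is that bottom-up subtree counts give full-answer counts grouped by $v_i$ only if the join tree is rooted at a node containing $v_i$ (the paper attaches a fresh unary root node holding just $v_i$); since you rerun the sweep for each variable anyway, re-rooting per round fixes this at no cost.
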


Our theorem shows that when we limit ourselves to the problem of selection,
the tractability of the problem depends only on the query structure and 
is independent of the lexicographic order.

\begin{example}
Recall that direct access by $L$ is intractable for $Q(v_1, v_2, v_3) \datarule R(v_1, v_3), S(v_3, v_2)$ 
with $L$ being the lexicographic order $\angs{v_1, v_2, v_3}$ 
or the partial lexicographic order $\angs{v_1, v_2}$.
The former contains a disruptive trio while the latter is not $L$-connex.
However, selection is tractable in both cases. 
Still, if we project out the middle variable $v_3$ and the head of the CQ is $Q(v_1, v_2)$, 
then the CQ is not free-connex and thus, selection becomes intractable for any lexicographic order.
\end{example}

For the negative part of \cref{thm:lex-selection-dicotomy},
we reduce the problem of selection to that of counting query answers.

\begin{lemma}\label{lem:lex-selection-to-counting}
For a CQ $Q$, if selection by some ranking function is possible in $\Comp{1}{f(n)}$ for a function $f$,
then counting the answers to the CQ $Q$ is possible in $\bigO(f(n) \log n)$.
\end{lemma}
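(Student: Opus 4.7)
The plan is to reduce counting to selection via binary search over the (implicit) sorted array of answers. The key observation is that the total number of answers $|Q(I)|$ is bounded by a polynomial in $n$: since the query has a constant number of variables, we have $|Q(I)| \leq n^{|\var(Q)|}$, hence $\log |Q(I)| = \bigO(\log n)$.

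With this bound in hand, I would proceed as follows. Using the selection oracle as a black box, perform binary search on the integer interval $[0, n^{|\var(Q)|}]$ to locate the largest index $k^\star$ such that selection at index $k^\star$ returns a valid query answer (as opposed to ``out-of-bound''). Then $|Q(I)| = k^\star + 1$, or $|Q(I)| = 0$ if even index $0$ returns ``out-of-bound''. The binary search terminates after $\bigO(\log(n^{|\var(Q)|})) = \bigO(\log n)$ iterations since $|\var(Q)|$ is a constant in data complexity, and each iteration invokes the selection procedure once at cost $\bigO(f(n))$. The total running time is therefore $\bigO(f(n) \log n)$, as claimed.

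There is no significant obstacle here. The only subtlety worth noting is that the ranking function used by the selection procedure is irrelevant: for counting it does not matter in which order the answers are arranged, only that the sorted array has length exactly $|Q(I)|$ and that indices beyond this length are flagged as out-of-bound. Thus the reduction works uniformly for any total order over $Q(I)$, which matches the generality of the lemma statement.
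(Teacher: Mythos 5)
Your proposal is correct and follows essentially the same route as the paper: binary search with the selection oracle over a polynomially bounded index range to find the boundary between valid answers and ``out-of-bound,'' yielding $\bigO(\log n)$ selection calls. The only cosmetic difference is that you bound $|Q(I)|$ by $n^{|\var(Q)|}$ while the paper uses $n^{\ell}$ for $\ell$ the number of atoms; both are polynomial in $n$ with a constant exponent, so the argument is unaffected.
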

\begin{proof}
We reduce the counting problem to the selection problem under any ranking function.
If the number of relations in the query is $\ell$, then an upper bound on the number of answers is $n^\ell$.
We use selection to determine whether any index contains an answer.
With binary search 
on the range of indices $[0, n^\ell)$,
we can find the smallest index that does not correspond to an answer.
This process requires only $\bigO(\log n^\ell) = \bigO(\log n)$ selections since $\ell$ is constant.
\end{proof}

We can now exploit
lower bounds based on \seth{}.
The proof does not rely on the properties of lexicographic orders and 
thus captures any possible ordering of the query answers.

\begin{lemma}\label{lem:lex-selection-intractability}
If a self-join-free CQ $Q$ is not free-connex, 
then selection by any ranking function is not possible 
in \Comp{1}{n \polylog n} assuming \seth{} and \hyperclique{}.
\end{lemma}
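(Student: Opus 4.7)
My plan is to reduce selection to counting via Lemma~\ref{lem:lex-selection-to-counting}, and then invoke the two hardness hypotheses separately for the cyclic and the acyclic-but-not-free-connex cases.

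First, suppose toward contradiction that selection by some ranking function is possible in $\Comp{1}{n \polylog n}$. Applying Lemma~\ref{lem:lex-selection-to-counting} with $f(n) = n \polylog n$, we obtain an algorithm that counts the answers to $Q$ in time $\bigO(n \polylog n \cdot \log n) = \bigO(n \polylog n)$. In particular, this also decides whether $Q$ has at least one answer in the same time, simply by checking whether the count is nonzero.

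Second, I split on the structure of $Q$. If $Q$ is cyclic and self-join-free, then by the known consequence of \hyperclique{} (discussed in Section~\ref{sec:hypotheses}), even the Boolean evaluation of $Q$ cannot be done in $\bigO(n \polylog n)$, so the counting algorithm above already yields a contradiction. If instead $Q$ is acyclic but not free-connex, the \seth{}-based lower bound cited in Section~\ref{sec:hypotheses} says counting the answers to $Q$ requires $\Omega(n^{2-\epsilon'})$ time for every constant $\epsilon' > 0$; since $n \polylog n = o(n^{2-\epsilon'})$ for any $\epsilon' < 1$, this again contradicts the hypothetical counting algorithm. In both cases we obtain the desired contradiction, establishing the lemma.

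The proof is almost entirely a packaging exercise on top of Lemma~\ref{lem:lex-selection-to-counting}, so no real obstacle is expected; the only point to be careful about is that the ranking function is arbitrary in the statement, but Lemma~\ref{lem:lex-selection-to-counting} was itself stated for an arbitrary ranking function, so this generality propagates through without change.
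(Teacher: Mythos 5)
Your proposal is correct and follows essentially the same route as the paper: reduce counting to selection via Lemma~\ref{lem:lex-selection-to-counting}, then contradict the \seth{}-based counting lower bound for acyclic non-free-connex CQs and the \hyperclique{}-based hardness of Boolean evaluation for cyclic CQs. The only (harmless) addition is your explicit remark that counting subsumes Boolean evaluation in the cyclic case, which the paper leaves implicit.
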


\begin{proof}
We use the fact that, assuming \seth{}, the answers to a self-join-free and acyclic non-free-connex CQ 
cannot be counted in $\bigO(n^{2-\epsilon})$ for any constant $\epsilon$ \cite{stefan}.
By \Cref{lem:lex-selection-to-counting},
if selection is possible in $\bigO(n \polylog n)$,
then we can also count the number of query answers in $\bigO(n \polylog n)$,
contradicting our hypothesis.
Cyclic CQs are covered by the hardness of Boolean self-join-free cyclic CQs based on \hyperclique{},
completing the proof.
\end{proof}

For the remainder of this section, we give a selection algorithm that together with 
\cref{lem:lex-selection-intractability}
completes the proof of \cref{thm:lex-selection-dicotomy}

\subsection{Lexicographic Selection Algorithm}
\label{sec:lex_selection_algorithm}

We first claim that, for any free variable in a free-connex CQ, we can efficiently compute the histogram of its assignments in the query answers.
This is essentially equivalent to a group-by query that groups the query answers 
based on a single variable, and then
counts how many answers fall within each group.

\begin{lemma}\label{lemma:count-histogram}
Let $Q$ be a free-connex CQ and $v \in \free(Q)$. 
Given an input database $I$, we can compute in linear time
how many answers in $Q(I)$ assign $c$ to $v$ for each value $c$
in the active domain of $v$.
\end{lemma}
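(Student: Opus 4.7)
The plan is to reduce to a full acyclic CQ and then run a standard bottom-up counting procedure on a join tree rooted at an atom that contains $v$. More precisely, since $Q$ is free-connex, it admits an ext-$\free(Q)$-connex tree $T$ with a subtree $T'$ whose vertex set is exactly $\free(Q)$. Applying \Cref{prop:reduce-to-subtree} with this $T$ and $T'$, we obtain in linear time a database $I'$ over the schema of a full acyclic CQ $Q'$ (whose atoms are the nodes of $T'$) such that $Q'(I') = Q(I)$. Since $v \in \free(Q)$, the variable $v$ appears as a variable of $Q'$, and by the running intersection property the atoms of $Q'$ that contain $v$ form a connected subtree of $T'$; pick any such atom and re-root $T'$ at it to obtain a rooted join tree whose root atom $R_r$ contains $v$.

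Next, I would run the standard Yannakakis full semi-join reduction on $T'$ in linear time so that every surviving tuple of every relation participates in at least one answer of $Q'$. After that, perform a single bottom-up pass to compute, for every node $u$ of $T'$ and every tuple $t$ in its (reduced) relation $R_u$, a weight $W_u(t)$ defined as $1$ if $u$ is a leaf and otherwise as
\[
W_u(t) \;=\; \prod_{u' \text{ child of } u} \;\sum_{\substack{t' \in R_{u'} \\ t' \text{ agrees with } t \text{ on } \var(u)\cap\var(u')}} W_{u'}(t').
\]
By the running intersection property together with fullness of $Q'$, for any tuple $t$ of the root relation the value $W_r(t)$ is exactly the number of homomorphisms (equivalently, answers) of $Q'$ whose projection onto $\var(r)$ equals $t$. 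This pass is standard and can be executed in linear time by presorting each relation once on the relevant join columns and then iterating with pointer-style scans grouped by the shared attributes (again in the $\bigO(n\log n)$ sense allowed by our computational model).

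Finally, since the root atom $R_r$ contains $v$, the desired histogram is obtained in one additional scan: for each value $c$ in the active domain of $v$ we output
\[
N(c) \;=\; \sum_{t \in R_r\,:\,t[v]=c} W_r(t),
\]
which by the discussion above equals the number of answers of $Q(I) = Q'(I')$ that assign $v$ to $c$. Each step above is linear in $|I|$, so the overall running time is linear. The only subtle point is guaranteeing that the algorithm has a join tree whose root contains $v$; this is exactly where free-connexity is used, via \Cref{prop:reduce-to-subtree} and the fact that $v$ being free forces $v$ into $T'$. No step is truly an obstacle once this reduction is in place.
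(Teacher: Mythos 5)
Your proposal is correct and follows essentially the same route as the paper: reduce to a full acyclic CQ via \Cref{prop:reduce-to-subtree} and then run the bottom-up subtree-counting pass on a join tree arranged so that $v$ sits at the root (the paper's only cosmetic difference is that it attaches a fresh unary root node holding the active domain of $v$, so the histogram is read off directly from the root weights instead of via your final aggregation over $R_r$ grouped by $t[v]$). One small caveat: your presorting step yields $\bigO(n\log n)$ rather than the claimed linear time, which matters downstream for the $\Comp{1}{n}$ selection bound; replace the sort-based grouping with the linear-time lookup tables permitted by the paper's RAM model and the bound is restored.
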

\begin{proof}
Following \Cref{prop:reduce-to-subtree}, we can transform the problem to an equivalent problem with a full acyclic CQ $Q'$. 
We then take a join-tree for $Q'$, identify a node $V_p$ containing $v$, and introduce a new node $V_r$ as a neighbor of $V_p$. 
We associate $V_r$ with the single variable $v$, 
assign $V_r$ with a unary relation that contains the active domain of $v$, and set $V_r$ to be the root of the tree. 
Then, we follow the preprocessing explained in \Cref{sec:layered-alg} over this tree. 
By the end of this preprocessing, each tuple is given a weight that indicates the number of different answers that this tuple agrees with when only joining its subtree. 
Thus, the weights for $V_r$ will contain the desired values.
\end{proof}

This count guides our selection algorithm
as it iteratively chooses an assignment for the next variable in the lexicographic order.
Comparing the desired index with the count, it chooses an appropriate value for 
the next variable,
filters the remaining relations according to the chosen value,
and continues with the next variable.

\begin{lemma}
Let $Q$ be a free-connex CQ and $\lex$ be a partial lexicographic order.
Then, selection by $\lex$ is possible in $\Comp{1}{n}$.
\end{lemma}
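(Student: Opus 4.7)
The plan is to fix the values of the free variables one by one, in the order given by $\lex$, using the histogram of \Cref{lemma:count-histogram} at each step to route the target index $k$ through the counts. Selection by a partial lexicographic order leaves the tie-breaking among answers that agree on the $\lex$-variables unspecified, so we are free to extend $\lex$ arbitrarily to a total lexicographic order $\lex^+ = \angs{v_1, \ldots, v_f}$ over $\free(Q)$ and perform selection by $\lex^+$: any answer produced is a legitimate answer at index $k$ under the partial $\lex$.

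Concretely, maintain a working database $I_{i-1}$ and target index $k_{i-1}$, initialized with $I_0 = I$ and $k_0 = k$. At step $i$, invoke \Cref{lemma:count-histogram} on the (still free-connex) CQ $Q$ over $I_{i-1}$ to obtain, for every value $c$ in the active domain of $v_i$, the number $N_c$ of answers in $Q(I_{i-1})$ that assign $v_i$ to $c$. View these as weighted items under the order on $\dom$; using the linear-time weighted selection of Blum et al.~\cite{blum73select}, find the unique value $c_i$ satisfying $\sum_{c < c_i} N_c \le k_{i-1} < \sum_{c \le c_i} N_c$ (and return ``out-of-bound'' if $k \ge \sum_c N_c$ at step $1$). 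Update $k_i = k_{i-1} - \sum_{c < c_i} N_c$, and obtain $I_i$ from $I_{i-1}$ by restricting every relation that mentions $v_i$ to its tuples with $v_i = c_i$, followed by one Yannakakis semi-join pass to remove newly dangling tuples. After $f$ steps, return the tuple $(c_1, \ldots, c_f)$.

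Each iteration runs in $O(n)$ time: the histogram by \Cref{lemma:count-histogram}, the weighted selection by median-of-medians, and the filtering plus semi-join pass by a single linear scan of the relations. Because $f$ is a constant, the total running time is $O(n)$, which gives the claimed $\Comp{1}{n}$ bound.

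The main obstacle is to confirm that the free-connex hypothesis is preserved from one step to the next, so that \Cref{lemma:count-histogram} remains applicable. This holds because substituting a free variable by a constant only deletes that vertex from every hyperedge of $\calH(Q)$, which preserves acyclicity and turns any ext-$\free(Q)$-connex tree into an ext-$(\free(Q) \setminus \{v_1, \ldots, v_{i-1}\})$-connex tree of the reduced hypergraph. Correctness follows by induction on $i$: the invariant $0 \le k_i < |Q(I_i)|$ is exactly what the choice of $c_i$ enforces, and at the final step it guarantees $|Q(I_f)| \ge 1$, so the tuple $(c_1, \ldots, c_f)$ is indeed realized by a homomorphism from $Q$ to $I$ and is thus a query answer.
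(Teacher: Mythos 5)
Your proposal is correct and follows essentially the same route as the paper's proof: complete $\lex$ arbitrarily to a total order, iteratively compute the per-value answer counts via \Cref{lemma:count-histogram}, locate the target value with a linear-time weighted selection (so as to avoid sorting the active domain), filter the relations, and recurse on the updated index. Your additional remarks on why free-connexity is preserved after fixing a variable and why the invariant guarantees a realizable answer are sound elaborations of steps the paper leaves implicit.
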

\begin{proof}
Let $\angs{v_1,\ldots,v_m}$ be a completion of $\lex$ to a full lexicographic order,
and let $k$ be the index we want to access.
We perform the following starting with $i=1$.
Let $c_1,\ldots,c_m$ be the ordered values in the active domain of $v_i$
(the algorithm does not sort them because that would already take $\bigO(n \log n)$).
We use \Cref{lemma:count-histogram} to count, for each $c_r$, 
the number of answers that assign $c_r$ to $v_i$, denoted by $\w(c_r)$. 
Then, we find $j$ such that 
$\sum_{r=1}^{j-1}{\w(c_r)}\le k < \sum_{r=1}^{j}{\w(c_r)}$ and 
select the value $c_j$ for $v_i$.
This computation can be done in $\bigO(n)$ without sorting
if we use a weighted selection algorithm~\cite{johnson78xy}.
We proceed to filter all relations according to the $v_i = c_j$ assignment, update $k$ to $k-\sum_{r=1}^{j-1}{\w(c_r)}$, 
and continue iteratively with $i+1$.
In each iteration, the value for another variable is determined, where $\sum_{r=1}^{j-1}{\w(c_r)}$ answers contain a strictly smaller value for the variable, 
and the next iterations break the tie between the $\w(c_j)$ answers that have this value.
For the running time, each iteration takes linear time and we have a constant number of iterations (one iteration for every free variable).
\end{proof}

\section{Selection by Sum of Weights}
\label{sec:sum_selection}

We now move on to the problem of selection by SUM order.
Given that direct access by this order with quasilinear preprocessing
and polylogarithmic delay is possible only in very few cases,
it is a natural question to ask how the tractability landscape changes when considering the simpler task of \emph{selection}.

\subsection{Overview of Results}

We show that the simplifications move only a narrow class of queries
to the tractable side.
For example, the 2-path query $Q_2(x, y, z) \datarule R(x, y), S(y, z)$
is tractable for selection, even though it is not for direct access.
On the other hand, the 3-path query
$Q_3(x, y, z, u) \datarule R(x, y), S(y, z), T(z, u)$ remains intractable.
Given that $Q_2$ and $Q_3$ both have two free and independent variables,
a different criterion than that of \Cref{sec:sum_direct}
($\freeind(Q)$ or number of atoms containing the free variables) 
is needed for classification.
To this end, we use hypergraph $\calH_\free(Q)$.
Recall that it is the restriction of the query hypergraph $\calH(Q)$ to the free variables, 
i.e., all the other variables are removed.

\begin{definition}[Maximal Hyperedges]
For a hypergraph
$\calH = (V, E)$, we denote the \e{number of maximal hyperedges} w.r.t. containment
by $\mh(\calH )$, i.e.,
$\mh(\calH ) = | \{ e \in E \;|\; \nexists e' \in E: e \subset e' \}|$.
The number of maximal hyperedges of a query $Q$ is $\mh(Q) = \mh(\calH(Q))$ 
and the number of 
free-maximal hyperedges of $Q$ is $\mhfree(Q) = \mh(\calH_\free(Q))$.
\end{definition}

\begin{example}
For $Q(x, z, w) \datarule R(x, y), S(y, z), T(z, w), U(x)$, we have
$\mh(Q)=3$ because $U$ is contained in $R$ 
and $\mhfree(Q) = 2$ because after removing the existentially-quantified $y$,
the remainder of the $S$-hyperedge is contained in $T$.
\end{example}

\begin{remark}
For any CQ $Q$ we have $\freeind(Q) \leq \mhfree(Q)$. This follows from the fact
that each independent variable must appear in a maximal hyperedge and that
each hyperedge cannot contain more than 1 independent variable by definition.
Note also that the condition $\freeind(Q) \leq 1$ is equivalent to 
$\mhfree(Q) \leq 1$, 
giving us a third possible way to express the criterion of \Cref{th:ra_sum_dichotomy} for direct access. 
\end{remark}

We summarize the results of this section in the following theorem, which classifies CQs $Q$ based on $\mhfree(Q)$:

\begin{theorem}[Selection by SUM]\label{thm:sum-selection-dicotomy}
Let $Q$ be a CQ.
\begin{itemize}
    \item If $Q$ is free-connex and $\mhfree(Q) \leq 2$,
    then selection by SUM is possible in 
    $\Comp{1}{n \log n}$.
    \item Otherwise, if $Q$ is also self-join-free, 
    then selection by SUM is not possible in %
    $\Comp{1}{n \polylog n}$.
    assuming \ThreeSUM{}, \hyperclique{}, and \seth{}.
\end{itemize}
\end{theorem}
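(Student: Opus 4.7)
For the upper bound, given a free-connex CQ $Q$ with $\mhfree(Q) \leq 2$, I would first apply \Cref{prop:reduce-to-subtree} to obtain, in linear time, an equivalent full acyclic CQ $Q'$ whose hypergraph is inclusion-equivalent to $\calH_{\free}(Q)$. After a standard Yannakakis semi-join pass that filters and discards atoms that are contained in others, $Q'$ has at most $\mhfree(Q)$ atoms. The case $\mhfree(Q) \leq 1$ reduces to one-dimensional selection on a single relation, solvable in linear time by any classical median-of-medians algorithm. For $\mhfree(Q) = 2$, the two remaining atoms $R_1$ and $R_2$ are joined on their shared variables; partitioning both relations by the values of those shared variables decomposes the answer set into a disjoint union of Cartesian-product blocks, each of which is a sorted matrix of weight sums. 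Running the Frederickson--Johnson sorted-matrix selection algorithm across these blocks (after computing per-block aggregate counts and using a weighted selection step to pinpoint the relevant block for the inner call) yields the $k$-th smallest weight in linear time. Sorting the tuple weights dominates, giving the claimed $\Comp{1}{n\log n}$ bound; this approach is oblivious to self-joins since it operates directly on tuple weights after the reduction.

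\textbf{Hardness direction.} For the lower bound I would split into three subcases. First, if $Q$ is not free-connex, \Cref{lem:lex-selection-intractability} (whose proof does not use the ordering being lexicographic) already excludes a $\Comp{1}{n \polylog n}$ selection algorithm under \seth{}. Second, if $Q$ is self-join-free and cyclic, the \hyperclique{} hypothesis rules out even a Boolean evaluation of $Q$ in $O(n \polylog n)$ time, so selection is a fortiori not that fast. The core subcase is when $Q$ is free-connex, acyclic, self-join-free, and $\mhfree(Q) \geq 3$. Here I would reduce from \ThreeSUM{}: given input sets $A, B, C$ of size $n$, I would construct a database $I$ of size $O(n)$ and an attribute-weight assignment so that the weights realized by the answers of $Q(I)$ encode enough of $A+B+C$ to decide whether $0 \in A+B+C$ via weight lookup. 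Combining this with a \Cref{lem:inverted_sum}-style binary search turns $O(\log n)$ selection calls, each of cost $O(n \polylog n)$ under the hypothetical algorithm, into a \ThreeSUM{} algorithm running in $O(n \polylog n) = o(n^{2-\epsilon})$ time, contradicting the \ThreeSUM{} hypothesis.

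\textbf{Main obstacle.} The delicate step is the \ThreeSUM{} reduction in the case $\mhfree(Q) \geq 3$, especially when $\freeind(Q) < 3$ (as for the 3-path query $Q(w,x,y,z) \datarule R(w,x), S(x,y), T(y,z)$, where $\mhfree = 3$ yet $\freeind = 2$). In this regime the direct three-independent-variables construction of \Cref{lem:3sum_to_rra_geq3} is not available: an $O(n)$-size database cannot realize all $n^3$ product triples as distinct answers. The plan is to use the three maximal free-hyperedges to isolate, inside $\calH_{\free}(Q)$, a canonical sub-structure (a ``3-path-like'' chain of three maximal free-edges, which must exist because $\calH_{\free}(Q)$ is acyclic with three maximal edges) along which the weights corresponding to $A$, $B$, and $C$ can be routed through the joins. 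One would carefully assign weights and dummy constants so that each \ThreeSUM{} triple $(a,b,c)$ is realized by exactly one query answer of weight $a+b+c$, while keeping the database linear. Making this encoding tight enough to yield the contradiction with \ThreeSUM{}---and verifying that the construction extends uniformly to every free-connex self-join-free CQ with $\mhfree \geq 3$, not just the 3-path---is the main technical hurdle, after which the dichotomy of \Cref{thm:sum-selection-dicotomy} follows by combining the above cases.
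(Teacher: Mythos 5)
Your tractability direction and the first two hardness subcases match the paper's proof essentially step for step (reduction to a full CQ with $\mh=\mhfree$, maximal contraction, Frederickson--Johnson on a union of sorted matrices; \seth{} via counting for non-free-connex, \hyperclique{} for cyclic). The problem is the subcase you yourself flag as the main hurdle: free-connex, acyclic, self-join-free, $\mhfree(Q)\geq 3$ but $\freeind(Q)<3$. The \ThreeSUM{} reduction you sketch there cannot be made to work. For such a query (e.g.\ the 3-path), a database of size $O(n)$ yields at most $O(n^2)$ answers, so there is no way to realize each of the $n^3$ triples $(a,b,c)$ as a distinct answer of weight $a+b+c$ --- which you acknowledge, but your stated plan still requires exactly that. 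The natural fallback, encoding only the $n^2$ pairwise sums $A[i]+B[j]$ and iterating a weight lookup over each $-C[k]$ (as in \Cref{lem:3sum_to_rra_2}), also fails for \emph{selection}: each weight lookup costs $\bigO(\log n)$ selection calls, each of cost $\Omega(n)$, so $n$ lookups cost $\Omega(n^2)$ and yield no contradiction with the \ThreeSUM{} hypothesis. That trade-off is precisely why \ThreeSUM{} gives a selection lower bound only when $\freeind(Q)\geq 3$ (\Cref{col:selection_geq3}), where a \emph{single} zero-weight lookup suffices.

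The paper closes this case with a different hypothesis entirely: it reduces from the Boolean triangle query under \hyperclique{}. \Cref{lem:3path_new} shows (via a nontrivial case analysis using acyclicity and the Helly property --- more work than your ``must exist because $\calH_{\free}(Q)$ is acyclic with three maximal edges'') that the maximal contraction contains a chordless path of four variables $x$--$y$--$z$--$u$. One then places the triangle instance on the three atoms witnessing this path, identifies the domains of $x$ and $u$, and sets $w_x(i)=i$, $w_u(i)=-i$ with all other weights zero; a query answer has weight zero iff its $x$- and $u$-values coincide, i.e.\ iff the triangle closes. A single weight lookup for $0$, costing $\bigO(n\polylog n)$ by \Cref{lem:inverted_sum}, then decides triangle existence, contradicting \hyperclique{}. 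So the missing idea in your proposal is to switch the source problem from \ThreeSUM{} to triangle detection in this regime; without that switch the hardness argument for queries such as $Q(w,x,y,z)\datarule R(w,x),S(x,y),T(y,z)$ does not go through.
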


\begin{example}
For the query $Q_2(x, y, z) \datarule R(x, y), S(y, z)$ we have already shown in \Cref{sec:sum_direct} that direct access by SUM is intractable. 
However, given that it has two maximal hyperedges, 
only one access (or a constant number of them) is in fact possible in $\bigO(n \log n)$.
The situation does not change for $Q_3'(x, y, z) \datarule R(x, y), S(y, z), T(z, u)$
because the hyperedge of $T$ is contained in $S$ in the free-restricted hypergraph.
However, $Q_3(x, y, z, u) \datarule R(x, y), S(y, z), T(z, u)$ which keeps the variable $u$ in the answers 
is intractable for selection because now $T$ corresponds to a free-maximal hyperedge.
\end{example}

Before proving \Cref{thm:sum-selection-dicotomy}, we first introduce some necessary
concepts and prove a useful lemma.

\smallsection{Absorbed atoms and variables}
We say that an atom $e$ is \emph{absorbed} by an atom $e' \neq e$ if 
$V \subseteq V'$ where $V$ and $V'$ are their sets of variables respectively.
Additionally, we say that a variable $v$ is absorbed by a variable $u \neq v$ if 
(1) they appear in exactly the same atoms and 
(2) it is not the case that $v$ is free and $u$ is not free.
As evident from \Cref{thm:sum-selection-dicotomy}, 
adding to a query atoms or variables that are absorbed by existing ones does not affect the complexity of selection.
We prove this claim first and use it later in our analysis in order to treat queries that contain absorbed atoms or variables.

\begin{definition}[Maximal Contraction]
A query $Q'$ is a \emph{contraction} of $Q$ if we can obtain $Q'$ by iteratively removing 
absorbed atoms and variables, one at a time.
$Q^m$ is a \emph{maximal contraction} of $Q$ if it is a contraction and there is no contraction of $Q^m$.
\end{definition}

Note that the number of atoms of a maximal contraction $Q^m$ of $Q$ is $\mh(Q)$.

\begin{example}
Consider $Q(x, y, z) \datarule R(x, u, y), S(y), T(y, z), U(x, u, y)$.
Here, $S(y)$ is absorbed by $R(x, u, y)$ and $U(x, u, y)$, and the latter two absorb each other. 
Additionally, the free variable $x$ absorbs $u$ since these two variables appear together in $R$ and $U$.
Thus, a maximal contraction of $Q$ is $Q^m(x, y, z) \datarule R(x, y), T(y, z)$, which is unique up to renaming.
The number of maximal hyperdges of $Q$ is $\mh(Q) = 2$.
\end{example}

\begin{lemma}
\label{lem:absorbed}
Selection for a CQ $Q$ by SUM is possible in $\Comp{1}{T_S(n)}$ if
selection for a maximal contraction $Q^m$ of $Q$ by SUM is possible in $\Comp{1}{T_S(n)}$.
The converse is also true if $Q$ is self-join-free.
\end{lemma}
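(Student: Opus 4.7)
The plan is to prove both directions by showing that each single contraction step---removing one absorbed atom or one absorbed variable---induces a linear-time, SUM-weight-preserving reduction in the appropriate direction. Since a maximal contraction is obtained by a constant number of such steps, iterating the one-step reductions gives the full result without any asymptotic loss. I would therefore introduce two elementary transformations and argue each separately.

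For the forward direction, given an instance of $Q$ I reduce to an instance of its contraction as follows. If an atom $e$ is absorbed by an atom $e'$ (so $\var(e)\subseteq\var(e')$), I perform a semi-join: filter $e'$ to keep only tuples whose projection onto $\var(e)$ appears in $e$, and then drop $e$ from the query. If a variable $v$ is absorbed by $u$, I rewrite every atom containing them both by replacing the pair $(u\text{-value}, v\text{-value})$ with a single composite value in $u$'s position and removing $v$'s position; the weight function is updated so that, when $v$ is free (in which case $u$ must also be free by the non-demotion clause of the definition), the composite value receives weight $w_u+w_v$, and otherwise only $w_u$. Each step is linear-time, induces a bijection between the answer sets that preserves SUM weights, and is easily inverted on a single answer. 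Hence running the assumed $\Comp{1}{T_S(n)}$ selection on the contracted instance and decoding the composite components yields the $k$-th SUM-ranked answer of $Q$ within the same time budget.

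For the converse, assuming $Q$ is self-join-free, I reverse each step in linear time. To reinsert an absorbed atom $e$ absorbed by $e'$, I materialize the relation for $e$ as the projection of the relation for $e'$ onto the variables of $e$; self-join-freeness is what makes this safe, because otherwise a single physical relation could be forced into several, potentially conflicting, absorption roles simultaneously. To reinsert an absorbed variable $v$ absorbed by $u$, I simply duplicate the $u$-value into a fresh $v$-column in every relevant relation, and---if $v$ is free---set $w_v$ to the constant zero function, leaving $w_u$ unchanged. The produced instance has size at most linear in $|I^m|$, its SUM-weighted answer set is in weight-preserving bijection with $Q^m(I^m)$, and the decoding is trivial, so an $\Comp{1}{T_S(n)}$ selection algorithm for $Q$ yields one for $Q^m$.

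The main obstacle I expect is bookkeeping for absorbed \emph{free} variables, since both $v$ and $u$ contribute to the SUM and both must appear in the returned answer. Using composite domain values with additive weights is the cleanest fix, but one must verify that this stays within the RAM model (constant-time comparisons and weight lookups on composite values) and that each reduction step runs in linear time even when several absorbed variables are collapsed into a single position. A secondary subtlety is articulating exactly why self-join-freeness is needed for the converse: the cleanest route is to note that in a self-join-free query each atom corresponds to a distinct relational symbol, so reinserting an atom by projection cannot clash with any existing constraint on the same relation, whereas with self-joins one would need multiple, possibly inconsistent, views of one relation.
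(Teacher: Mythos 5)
Your proposal is correct and follows essentially the same route as the paper's proof: semi-join filtering to drop absorbed atoms and value-packing with additive weights to drop absorbed variables in the forward direction, and projection-based reinsertion of atoms (justified by self-join-freeness) plus a zero-weight dummy column for reinserted variables in the converse. The only differences are cosmetic (the paper fills the reinserted column with a constant $\bot$ rather than a copy of the $u$-value, and handles the weight of a packed pair uniformly by setting non-free weights to zero); both choices are equally valid.
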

\begin{proof}
For the ``if'' direction, we use selection on $Q^m$ to solve selection on $Q$.
We can remove absorbed atoms from $Q$
after making sure that the tuples in the database satisfy those atoms.
Thus, to remove an atom $S(\mathbf{Y})$ which is absorbed by $R(\mathbf{X})$,
we filter the relation $R$ based on the tuples of $S$.
To remove a variable $v$ that is absorbed by $u$, in all relations
that contain both $u$ and $v$ we remove $v$ and replace the $u$-values by values
that represent the pair $(u,v)$ and assign to it the weight $w(u) + w(v)$.
(Note that we assign $w(z) = 0$ for all variables $z$ that are not free.)
After separating the packed variables, $Q^m$ over the modified database has the same answers as $Q$ 
over the original one
and the weights are preserved.

For the ``only if'' direction, 
we create an extended database where the answers to $Q$ 
are the same as those of $Q^m$ over the original database.
For each step of the contraction, we make a modification of the database.
If an atom $S(\mathbf{Y})$ was removed because it was absorbed by another atom $R(\mathbf{X})$,
then we create the relation $S$ by copying $\pi_{\mathbf{Y}} (R)$.
Note that we are allowed to create $S$ without restrictions because $Q$ has no self-joins, 
hence the database does not already contain the relation.
If a variable $v$ was removed because it was absorbed by another variable $u$,
then we extend all the relations that $u$ appears in with another attribute $v$ that takes the constant $\bot$ value everywhere and has weight $w_v(\bot) = 0$.
After projecting away the new variable, this construction does not change the query answers or their weights.

The above reductions take linear time, which is dominated by $T_S(n)$
since $T_S(n)$ is trivially in $\Omega(n)$ for the selection problem.
\end{proof}

To prove \Cref{thm:sum-selection-dicotomy} we first limit our attention to
the class of full CQs (for them $\mh(Q)=\mhfree(Q)$)
and prove the positive part in \cref{sec:sum_selection_tractable} 
and the negative part in \cref{sec:sum_selection_intractable}.
We then extend those results to more general CQs with projections in \cref{sec:sum_selection_projections}.

\subsection{Tractability Proofs for Full CQs}
\label{sec:sum_selection_tractable}

In this section, we provide tractability results for full CQs with $\mh(Q) \leq 2$.
First, we consider the trivial case of $\mh(Q) = 1$ where the maximal contraction of
$Q$ has only one atom. The lemma below is a direct consequence of the linear-time
array selection algorithm of Blum et al.~\cite{blum73select}.

\begin{lemma}
\label{lem:selection_1}
For a full CQ $Q$ with $\mh(Q) = 1$, selection by SUM is possible in 
$\Comp{1}{n}$.
\end{lemma}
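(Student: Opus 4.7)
The plan is to reduce the problem to linear-time selection on a single relation by leveraging \Cref{lem:absorbed}. Since $\mh(Q) = 1$, the maximal contraction $Q^m$ of $Q$ consists of a single atom, say $R^m(\mathbf{X})$. By \Cref{lem:absorbed}, it suffices to solve selection by SUM for $Q^m$ in $\Comp{1}{n}$, because the reduction from $Q$ to $Q^m$ (filtering the remaining relation by the absorbed atoms and packing absorbed variables with summed weights) is performed in linear time and preserves both the set of answers and their weights.

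Next, I would argue that the answers to $Q^m$ on the modified database are in one-to-one correspondence with the tuples of the single relation $R^m$, because $Q^m$ is full (all variables are free) and has only one atom, so no join or projection is needed. In linear time I can compute, for each tuple $t \in R^m$, its weight $w(t) = \sum_{x \in \var(R^m)} w_x(t[x])$, storing the pairs $(t, w(t))$ in an array of size $|R^m| \le n$.

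The final step is to invoke the classical deterministic linear-time selection algorithm of Blum, Floyd, Pratt, Rivest, and Tarjan~\cite{blum73select} on this array of weighted tuples, asking for the element of rank $k$ under the ordering by $w(t)$; ties can be broken arbitrarily since the order $\preceq$ is only required to be consistent with $w_Q$. This returns the desired query answer after a single access in $\bigO(n)$ total time, with no preprocessing, yielding the claimed $\Comp{1}{n}$ bound.

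There is no real obstacle here: the only subtlety is verifying that the reduction in \Cref{lem:absorbed} indeed gives an equivalent weighted selection problem over a single relation when $\mh(Q)=1$, which follows directly from the definition of a maximal contraction together with the weight-preserving bookkeeping for absorbed variables already described in the proof of \Cref{lem:absorbed}.
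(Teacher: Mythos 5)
Your proof is correct and follows essentially the same route as the paper: reduce to the single-atom maximal contraction via \Cref{lem:absorbed}, turn attribute weights into tuple weights, and apply the linear-time selection algorithm of Blum et al. The only cosmetic difference is that the paper contracts all the way down to $Q(x)\datarule R(x)$ (a single unary atom), whereas you stop at a single atom with possibly several variables and sum the attribute weights explicitly; both are fine.
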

\begin{proof}
By \Cref{lem:absorbed}, it suffices to solve selection on the query
$Q(x) \datarule R(x)$, 
which is a maximal contraction of all queries with $\mh(Q) = 1$, up to renaming.
Trivially, the weights of the single attribute can also be viewed as tuple weights.
Thus, applying linear-time selection \cite{blum73select} on the tuples of $R$ 
gives us the $k^\textrm{th}$ smallest query answer.
\end{proof}

For the $\mh(Q) = 2$ case, we rely on an algorithm by
Frederickson and Johnson~\cite{frederickson84selection},
which generalizes selection on the X+Y problem.
If the two sets $X$ and $Y$ are given sorted, then the pairwise sums can be represented as a sorted matrix.
A \emph{sorted matrix} $M$ contains a sequence of non-decreasing elements
in every row and every column.
For the $X+Y$ problem, a cell $M[i, j]$ contains the sum $X[i] + Y[j]$.
Even though the matrix $M$ has quadratically many cells, 
there is no need to construct it in advance given that we can compute each cell in constant time.
Selection on a union of such matrices $\{ M_1, \ldots, M_\ell \}$ asks for the $k^\textrm{th}$ smallest element among the cells of all matrices.

\begin{theorem}[\cite{frederickson84selection}]
\label{th:sorted_matrices}
Selection on a union of sorted matrices $\{ M_1, \ldots, M_\ell \}$,
where $M_m$ has dimension $p_m \times q_m$ with $p_m \geq q_m$, is possible in
time $\bigO(\sum_{m=1}^\ell q_m \log (2 p_m / q_m))$.
\end{theorem}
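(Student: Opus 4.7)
The plan is to first handle the single-matrix case and then lift the argument to a union. For a sorted $p \times q$ matrix $M$ (with $p \geq q$), the basic primitive I would use is \emph{rank counting}: given a value $v$, count the number of entries of $M$ that are $\leq v$ by walking a monotone staircase from the top-right corner, moving left whenever $v$ is exceeded and moving down otherwise, which runs in $\bigO(p+q)$ time. Naively binary-searching the values would be too expensive, so the algorithm must avoid paying $\Theta(p)$ per rank query and must drive $k$-selection by eliminating many cells at once.

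First I would decompose $M$ along its long dimension into $\lceil p/q \rceil$ sorted blocks of size at most $q \times q$. Selection in a single $q \times q$ sorted matrix can be done in $\bigO(q)$ time: hierarchically partition it into $2 \times 2$ subblocks, take one representative per subblock (e.g.\ the minimum of its four corners), find the \emph{weighted median} of these $\Theta(q^2)$ representatives in linear time, and then use one staircase rank count to eliminate all subblocks whose values are known to lie entirely below or entirely above the pivot. A constant fraction of the remaining cells is discarded at each round, yielding a geometric series that sums to $\bigO(q)$.

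For the full $p \times q$ case I would run the same elimination in parallel across all $\lceil p/q \rceil$ blocks, with one pivot per block and a single global weighted median used to advance $k$. The number of rounds is controlled by how many halvings are needed to shrink the candidate set to $\bigO(q)$ cells, giving $\bigO(\log(2p/q))$ rounds of $\bigO(q)$ staircase work, for a total of $\bigO(q \log(2p/q))$. For the union of matrices $M_1, \ldots, M_\ell$, I would maintain a single running offset of $k$ and execute the rounds in lockstep: in each round pick a pivot inside every matrix, take the weighted median of the per-matrix pivots weighted by surviving subblock sizes, count ranks via $\ell$ independent staircases, and prune. Because the work on $M_m$ is independent and bounded by $\bigO(q_m \log(2 p_m / q_m))$, summing gives $\bigO(\sum_m q_m \log(2 p_m / q_m))$.

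The main obstacle will be the amortised accounting: proving that a constant fraction of each $M_m$ is eliminated per round, so that only $\bigO(\log(2 p_m / q_m))$ rounds touch $M_m$ rather than $\bigO(\log p_m)$, while at the same time sharing the global $k$-bookkeeping across all $\ell$ matrices. Getting the correct hierarchical $2\times 2$ decomposition and verifying that the weighted-median pivot balances the elimination across asymmetric dimensions is where Frederickson and Johnson's argument is delicate; the counting bound follows easily once that invariant is established.
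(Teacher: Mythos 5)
First, note that the paper does not prove this statement at all: it is imported verbatim from Frederickson and Johnson~\cite{frederickson84selection} and used as a black box, so there is no in-paper argument to compare against. Judged on its own terms, your reconstruction has the right outer shell (split each tall matrix into $\lceil p/q\rceil$ square blocks; prune cells against a weighted-median pivot; charge $\bigO(q_m\log(2p_m/q_m))$ to matrix $M_m$), but the core of the argument is wrong in a way that breaks the bound. Your single-matrix routine partitions a $q\times q$ matrix \emph{bottom-up} into $\Theta(q^2)$ subblocks of size $2\times 2$ and computes a weighted median of $\Theta(q^2)$ representatives ``in linear time'' --- linear in the number of representatives is already $\Theta(q^2)$ work in the very first round, so the claimed geometric series starts at $\Theta(q^2)$ and sums to $\Theta(q^2)$, not $\bigO(q)$. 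The actual algorithm is \emph{top-down}: it starts with each square block as a single candidate cell, repeatedly quarters only the \emph{surviving} cells, and prunes after each split; the entire theorem rests on a counting lemma showing that the number of surviving cells stays small enough that the total number of cells ever touched (and hence the total weighted-median and representative-evaluation work) telescopes to the stated bound. You acknowledge this lemma is ``where the argument is delicate,'' but your decomposition cannot even be an input to it, because with $\Theta(q^2)$ cells alive at the outset there is nothing left to save.

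The same gap propagates to the tall and union cases. You propose ``one staircase rank count'' per round: a monotone staircase walk over $M_m$ costs $\Theta(p_m+q_m)$, so doing it once per round per matrix gives $\sum_m (p_m+q_m)\log(2p_m/q_m)$, which exceeds $\sum_m q_m\log(2p_m/q_m)$ whenever some $p_m\gg q_m$ --- exactly the regime the $\log(2p_m/q_m)$ factor is designed for. In Frederickson--Johnson, rank information is obtained from the corner entries of the surviving cells only (each cell's minimum and maximum certify how many of its entries lie below or above the pivot), never from a global staircase; that is what keeps the per-round cost proportional to the number of surviving cells rather than to the matrix dimensions. Finally, ``a constant fraction of the remaining cells is discarded at each round'' is not something a weighted-median pivot gives you for \emph{cells} (as opposed to elements); it is precisely the content of the counting lemma you defer, and it is proved from the antichain structure of incomparable cells in a sorted matrix, not from generic median-pivoting. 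As written, the proposal establishes an $\bigO(\sum_m(p_m+q_m)\log p_m)$-type bound at best, not the theorem.
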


Leveraging this algorithm, we provide our next positive result:

\begin{lemma}\label{lem:sum-selection-binary}
For a full CQ $Q$ with $\mh(Q) = 2$, 
selection by SUM is possible in 
$\Comp{1}{n \log n}$.
\end{lemma}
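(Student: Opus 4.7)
By \Cref{lem:absorbed}, it suffices to solve selection for a maximal contraction $Q^m$ of $Q$, which in this case has exactly two atoms, say $R_1(\mathbf{X}_1)$ and $R_2(\mathbf{X}_2)$, neither absorbing the other. The plan is to reduce selection on $Q^m$ to selection on a union of sorted matrices and then invoke \Cref{th:sorted_matrices}.

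First I would perform a linear-time semi-join reduction (à la Yannakakis) so that every remaining tuple in $R_1$ and $R_2$ joins with at least one tuple on the other side; this does not change the set of answers or their weights. Next, I convert the attribute weights to tuple weights: assign every free variable to exactly one atom in which it appears, and let $w(t)$ for $t \in R_i$ be the sum of $w_x(t[x])$ over the variables assigned to $R_i$. Then for any answer $q$ formed by joining $t_1 \in R_1$ and $t_2 \in R_2$, we have $w_Q(q) = w(t_1) + w(t_2)$.

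Let $\mathbf{Y} = \mathbf{X}_1 \cap \mathbf{X}_2$ be the shared variables. I partition $R_1$ and $R_2$ into buckets according to the assignment to $\mathbf{Y}$: for each join value $v$ in the active domain of $\mathbf{Y}$, let $A_v \subseteq R_1$ and $B_v \subseteq R_2$ be the tuples agreeing with $v$. After removing dangling tuples, the set of answers is exactly $\bigcup_v (A_v \times B_v)$, and the weight of the answer $(t_1, t_2) \in A_v \times B_v$ is $w(t_1) + w(t_2)$. I sort each $A_v$ and each $B_v$ by $w$ in total time $\bigO(n \log n)$. For each $v$, the pairwise sums then form a sorted matrix $M_v$ of dimension $|A_v| \times |B_v|$ (orient so rows are the larger side), which need not be materialized: each entry can be computed in $\bigO(1)$ from the sorted arrays. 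Finally, I apply the Frederickson--Johnson selection algorithm (\Cref{th:sorted_matrices}) on the union of all $M_v$ to find the $k^{\text{th}}$ smallest answer weight and report the corresponding answer.

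The main thing to verify is the running time. Letting $p_v \geq q_v$ be the dimensions of $M_v$ and noting $\sum_v (p_v + q_v) \leq 2n$, \Cref{th:sorted_matrices} yields a bound of $\bigO\!\left(\sum_v q_v \log(2 p_v / q_v)\right) \leq \bigO\!\left(\sum_v q_v \log(2n)\right) = \bigO(n \log n)$, which dominates the sorting and semi-join steps. Since selection does not distinguish preprocessing from access, the total cost is $\Comp{1}{n \log n}$, as required. The main subtlety is the bucket-wise decomposition into sorted matrices and the aggregate analysis of the Frederickson--Johnson bound; the rest is straightforward bookkeeping that respects \Cref{lem:absorbed}.
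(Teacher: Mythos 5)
Your proof is correct and follows essentially the same route as the paper's: reduce to the maximal contraction via \Cref{lem:absorbed}, convert attribute weights to tuple weights, bucket the two relations by their shared variables, sort within buckets, and invoke the Frederickson--Johnson bound of \Cref{th:sorted_matrices}. The only differences are cosmetic --- you work with a generic two-atom contraction instead of the two canonical forms $R(x),S(z)$ and $R(x,y),S(y,z)$, add an explicit semi-join step, and bound $\log(2p_v/q_v)$ by $\log(2n)$ rather than by $2p_v/q_v$ --- all of which still yield the claimed $\Comp{1}{n\log n}$.
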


\begin{proof}
The maximal contraction of full CQs with $\mh(Q) = 2$ is
$Q_1(x, z) \datarule R(x), S(z)$
or $Q_2(x, y, z) \datarule R(x, y), S(y, z)$, up to renaming.
Thus by \Cref{lem:absorbed}, it is enough to prove an $\bigO(n \log n)$ bound for these two queries.
As before, we turn the attribute weights into tuple weights.
For $Q_1$, the attribute weights are trivially tuple weights
and for $Q_2$, we assign each
attribute weight to only one relation to avoid double-counting. 
Thus, for $Q_2$ we compute 
$w(r) = w_x(r[x]) + w_y(r[y])$ and
$w(s) = w_z(s[z])$
for all $r \in R$ and $s \in S$, respectively.
Since the query is full, the weights of the query answers are in one-to-one
correspondence with the pairwise sums of weights of tuples from $R$ and $S$.

For $Q_2$, we group the $R$ and $S$ tuples by their $y$ values:
we create $\ell$ buckets of tuples where all tuples $t$ within a bucket have
equal $t[y]$ values. This can be done in linear time.
For $Q_1$, we place all tuples in a single bucket.
For each assignment of a $y$ value (no assignment for the case of $Q_1$), 
the query answers with those values
are formed by the Cartesian product of $R$ and $S$ tuples inside that bucket.
Also, if the size of bucket $m$ is $n_m$, then $n_1 + \cdots + n_\ell = |R| + |S| = \bigO(n)$.
We sort the tuples in each bucket (separately for each relation)
according to their weight in $\bigO(n \log n)$ time.
Assume $R_m$ and $S_m$ are the partitions of $R$ and $S$ in
bucket $m$ and $R_m[i]$ denotes the $i^\textrm{th}$ tuple of $R_m$ in sorted order 
(equivalently for $S_m[j]$).
We define a union of sorted matrices $\{ M_1, \ldots, M_\ell \}$ by setting
for each bucket $m$: $M_m[i, j] = w(R_m[i]) + w(S_m[j])$ if $|R_m| \geq |S_m|$
or $M_m[i, j] = w(S_m[i]) + w(R_m[j])$ otherwise 
(this distinction is needed simply to conform with the way \Cref{th:sorted_matrices} is stated).
Selection on these matrices is equivalent to selection on the query answers of $Q$.
By \Cref{th:sorted_matrices}, if matrix $M_m$ has dimension $p_m \times q_m$ with $p_m \geq q_m$, 
we can achieve selection in 
$\bigO(\sum_{m=1}^\ell q_m \log (2 p_m / q_m)) \subseteq
\bigO(\sum_{m=1}^\ell q_m \cdot 2 p_m / q_m) = 
\bigO(\sum_{m=1}^\ell p_m) = 
\bigO(\sum_{m=1}^\ell n_m) = 
\bigO(n)$.
Overall, the time spent is $\bigO(n \log n)$ because of sorting.
\end{proof}

\subsection{Intractability Proofs for Full CQs}
\label{sec:sum_selection_intractable}

Though selection is a special case of direct access,
we show that for most full CQs, time
$\bigO(n \polylog n)$ is still unattainable.
We start from the cases covered by \Cref{lem:3sum_to_rra_geq3}.
To extend that result to the selection problem, note that
a selection algorithm can be repeatedly applied for solving 
direct access. For queries with three free and independent variables,
an $\bigO(n^{2-\epsilon})$ selection algorithm
would imply a $\Comp{1}{n^{2-\epsilon}}$
direct-access algorithm, 
which we showed to be impossible.
Therefore, the following immediately follows from \Cref{lem:3sum_to_rra_geq3}:

\begin{corollary}
\label{col:selection_geq3}
If a full CQ $Q$ is self-join-free and $\freeind(Q) \geq 3$, 
then selection by SUM is not possible in %
$\Comp{1}{n^{2-\epsilon}}$
for any $\epsilon > 0$ assuming \ThreeSUM{}.
\end{corollary}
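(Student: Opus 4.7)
The plan is a one-line reduction from direct access to selection, which combined with \Cref{lem:3sum_to_rra_geq3} yields the claim. Suppose for contradiction that there is some $\epsilon > 0$ and a full self-join-free CQ $Q$ with $\freeind(Q)\geq 3$ for which selection by SUM can be solved in $\Comp{1}{n^{2-\epsilon}}$. I would turn this selection procedure into a direct-access procedure for $Q$ by SUM simply as follows: the preprocessing phase does nothing (cost $O(1)$, which is well within $O(n^{2-\epsilon})$), and each access for an index $k$ is answered by invoking the selection algorithm with index $k$ in $O(n^{2-\epsilon})$ time. This gives a direct-access algorithm for $Q$ by SUM with complexity $\Comp{n^{2-\epsilon}}{n^{2-\epsilon}}$.

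This contradicts \Cref{lem:3sum_to_rra_geq3}, which, under the \ThreeSUM{} hypothesis, rules out exactly this bound for any self-join-free CQ with $\freeind(Q)\geq 3$. Hence the assumed selection bound cannot exist, proving the corollary.

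I do not foresee a real obstacle here: the reduction is immediate, and all the hardness content is already packaged inside \Cref{lem:3sum_to_rra_geq3}. The only mild subtlety to double-check is that the preprocessing cost of the trivial direct-access algorithm (namely constant) fits into the $O(n^{2-\epsilon})$ preprocessing budget in the statement of \Cref{lem:3sum_to_rra_geq3}; this is clear for any $\epsilon < 2$, which is the only regime where the selection bound is non-trivial in the first place.
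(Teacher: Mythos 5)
Your proposal is correct and matches the paper's argument exactly: the paper also observes that a selection algorithm yields a direct-access algorithm (trivial preprocessing, one selection call per access), so an $\bigO(n^{2-\epsilon})$ selection bound would give a $\Comp{n^{2-\epsilon}}{n^{2-\epsilon}}$ direct-access algorithm, contradicting \Cref{lem:3sum_to_rra_geq3} under \ThreeSUM{}. Nothing is missing.
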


This leaves only a small fraction of full acyclic CQs to be covered: queries with 
two or fewer independent variables and three or more maximal hyperedges.
We next show that these queries 
all contain a length-3 chordless path\footnote{
The conference version of this paper~\cite{carmeli21direct} erroneously claims that the maximal contraction
of these CQs is the 3-path query $Q(x, y, z, u) \datarule R(x, y), S(y, z), T(z, u)$.
This is not correct because for example, $Q'(x, y, z, u, b) \datarule R(x, b, y), S(y, b, z), T(z, b, u)$ also satisfies $\freeind(Q') < 3$ and $\mh(Q') > 2$.
However, as we show here, the same reduction that was used for the 3-path query $Q$ can also work for any of these CQs.
},
a property that we will use in order to prove a lower bound.

\begin{lemma}\label{lem:3path_new}
The hypergraph of the maximal contraction of any full acyclic CQ with $\freeind(Q) < 3$ and $\mh(Q) > 2$
contains a chordless path of four variables.
\end{lemma}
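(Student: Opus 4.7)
My plan is to work with the maximal contraction $Q^m$ of $Q$, which satisfies $\mh(Q^m) = \mh(Q) \geq 3$ and $\freeind(Q^m) \leq \freeind(Q) < 3$ (any independent set in $\calH(Q^m)$ remains independent in $\calH(Q)$, since every atom of $Q$ absent from $Q^m$ is absorbed by an atom of $Q^m$). The key structural observation is that the join tree $T$ of $Q^m$ must be a path: if some node of $T$ had degree at least three, picking a leaf of $T$ in each of the three resulting subtrees would, by non-absorption and the running intersection property, yield a variable appearing only in that leaf; the three resulting pairwise non-neighbor variables would contradict $\freeind(Q^m) \leq 2$.

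Writing $T$ as $e_1 - e_2 - \cdots - e_k$ with $k \geq 3$, the running intersection property makes the set of atoms containing any variable a subinterval $[i, j]$ of $[1, k]$, and maximal contraction forces distinct variables to have distinct such intervals. In particular, there are unique variables $a$ and $b$ whose supports are $\{e_1\}$ and $\{e_k\}$, respectively. I will produce the chordless 4-path in the form $a - c - d - b$, with supports $[1, p]$ for $c$ and $[q, k]$ for $d$, where $2 \leq q \leq p \leq k-1$: such intervals enforce exactly the required overlaps between consecutive pairs and disjointness between non-consecutive ones. If $Q^m$ has a universal variable (support $[1, k]$), I first remove it; the resulting CQ retains $\mh \geq 3$, $\freeind \leq 2$, acyclicity, and maximal contraction, and a chordless 4-path in it lifts directly to $Q^m$. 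So it suffices to assume $e_1 \cap e_k = \emptyset$.

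Under this assumption, set $p = \max\{j : e_1 \cap e_j \neq \emptyset\}$ and $q = \min\{j : e_j \cap e_k \neq \emptyset\}$. Since $e_1 \cap e_2$ and $e_{k-1} \cap e_k$ must be non-empty (otherwise a unique variable in $e_2 \setminus e_3$ or $e_{k-1} \setminus e_{k-2}$ joins $a$ and $b$ as a third non-neighbor), we have $2 \leq p, q \leq k-1$, and maximal contraction furnishes the desired $c$ and $d$. The crux is the inequality $p \geq q$: assuming $p < q$, I will pick a variable $w$ in an atom strictly between $e_p$ and $e_q$, or---when $p+1 = q$---in $e_{p+1}$ itself, invoking non-absorption of $e_{p+1}$ to ensure $w$'s support does not extend to $e_k$; in each case $w$'s support lies inside $[2, k-1]$, so $\{a, w, b\}$ are pairwise non-neighbors and contradict $\freeind \leq 2$. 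The main obstacle is precisely this delicate sub-case $p+1 = q$ with $e_p \cap e_{p+1} = \emptyset$, where non-absorption must be applied carefully to produce a non-universal witness $w$.
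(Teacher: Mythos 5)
Your proof is correct, but it takes a genuinely different route from the paper's. The paper fixes an independent pair $x,y$ (the case $\freeind(Q)\le 1$ being dispatched by \Cref{lem:freeind}) and splits on whether each of $x,y$ lies in a unique maximal hyperedge: in one case it extracts the path $x-a-b-y$ from a third maximal hyperedge via non-absorption, and in the other it first proves, by an induction on hypercliques exploiting acyclicity, that a vertex lying in two maximal hyperedges has two non-adjacent neighbors, and then completes the path using acyclicity once more. You instead derive global structure: since every leaf of a join tree of the maximal contraction carries a private variable, $\freeind \le 2$ forces the join tree to be a path $e_1-\cdots-e_k$, variable supports become intervals, and the chordless $4$-path is read off from the endpoints' private variables $a,b$ together with variables $c,d$ realizing the maximal reach of $e_1$ and $e_k$; the inequality $p\ge q$ is exactly where $\freeind\le 2$ is invoked a second time, and your delicate sub-case $q=p+1$ is correctly resolved by $e_{p+1}\not\subseteq e_k$. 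Your approach buys a more transparent, constructive picture (the maximal contraction is essentially an interval hypergraph and the path is located explicitly), at the cost of two preprocessing steps whose invariance must be checked: passing to $Q^m$, and stripping universal variables. The latter is the only place where you assert more than you verify --- it does preserve $\mh$, acyclicity, $\freeind\le 2$, and maximal-contraction-ness, but the reason deserves a line: since $v$ is common to all hyperedges of a maximal contraction, $e_i\setminus\{v\}\subseteq e_j\setminus\{v\}$ would give $e_i\subseteq e_j$, contradicting non-absorption, so no containments (and no new absorbed variables, whose supports are unchanged) are created. With that line added, the argument is complete; the paper's proof avoids the join-tree-is-a-path reduction but pays for it with the separate Helly-flavored claim about vertices in two maximal hyperedges.
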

\begin{proof}
First, for $\freeind(Q) = 1$,
we have by \Cref{lem:freeind} that an atom contains all free variables, thus $\mh(Q) = 1$. 
For the case of $\freeind(Q) = 2$, 
let $x$, $y$ be independent variables.
We distinguish two cases.

\underline{Case 1:} each of $x$ and $y$ appear in exactly one maximal hyperedge.

Denote the maximal hyperedge containing $y$ by $e_y$ and the maximal hyperedge containing $x$ by $e_x$.
Since there are at least $3$ maximal edges, there is a hyperedge $e_0$ such that $x,y\not\in e_0$.
Since $e_0$ is not absorbed by $e_y$, there exists $a\in e_0$ and $a\not\in e_y$. Thus, $a$ and $y$ are not neighbors.
Since $\{a,x,y\}$ is not an independent set, we conclude that $a$ and $x$ are neighbors,
and because only one maximal hyperedge can contain $x$, we get that $a \in e_x$.
Similarly, since $e_0$ is not absorbed by $e_x$, we conclude that there exists a neighbor $b$ of $y$ such that $b\in e_0$, $b\not\in e_x$.
Overall, we have a chordless path $x-a-b-y$.

\underline{Case 2:} $x$ or $y$ (or both) appear in at least two hyperedges.

Assume WLOG that two maximal hyperedges contain $x$.
According to the claim we shall prove next, this means that there exist non-neighbors $a$,$b$ that are both neighbors of $x$.
Then, since $\{a,b,y\}$ is not an independent set, we have that $y$ is a neighbor of $a$ or $b$. 
Assume WLOG it is $a$. 
Then, we have a path $y-a-x-b$ such that $y$ and $x$ are not neighbors and $a$ and $b$ are not neighbors. 
We conclude that also $y$ and $b$ are not neighbors, otherwise this path is a chordless cycle contradicting acyclicity.

\underline{Claim:}
If two maximal hyperedges in an acyclic hypergraph both contain a node $x$, then there exists a chordless path $a-x-b$ for some vertices $a, b$.

\underline{Proof:}
If there are two non-neighboring neighbors of $x$, then we are done.
For the sake of contradiction, assume that there are no such neighbors.
We prove with induction on the number of neighbors that for every $k$ neighbors of $x$, 
there exists a hyperedge that contains all of them and $x$.
For the base of the induction, consider $2$ neighbors of $x$. 
By our assumption, they must necessarily be neighbors. 
Since the graph is acyclic, every triangle must be covered by a hyperedge, 
so there exists a hyperedge containing both of them and $x$. 
For the inductive step, consider $k$ neighbors of $x$.
By the induction hypothesis, every subset of these neighbors of size $k-1$ appears in a hyperedge together with $x$. 
If there is no hyperedge that contains all $k$ of them, then these neighbors (without $x$) form a $(k,k-1)$-hyperclique contradicting acyclicity. 
If there is, then these $k$ neighbors along with $x$ form a $(k+1,k)$-hyperclique. 
This contradicts acyclicity unless there is a hyperedge containing all $k$ variables and $x$.
This concludes the induction.
Now consider all neighbors of $x$. 
By the induction, one edge contains all of them and $x$. 
This contradicts the fact that $x$ appears in two maximal hyperedges.

\end{proof}

Now that we established the precise form of the queries we want to classify,
we proceed to prove their intractability.
We approach this in a different way than the other hardness proofs: instead of
relying on the \ThreeSUM{} hypothesis, we instead show that tractable selection
would lead to unattainable bounds for Boolean cyclic queries.

\begin{lemma}\label{lem:sum-selection-3path}
If a full CQ $Q$ is self-join-free and the hypergraph of its maximal contraction contains a chordless path of four variables,
then selection by SUM is not possible in
\Comp{1}{n \polylog n}
assuming \hyperclique{}.
\end{lemma}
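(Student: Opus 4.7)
The plan is to reduce triangle detection in a graph with $m$ edges to selection by SUM for $Q$: since the \hyperclique{} hypothesis (for $k=2$) rules out triangle detection in $\bigO(m \polylog m)$ time, an $\bigO(n \polylog n)$ selection algorithm for $Q$ on a database of size $n = \bigO(m)$ would yield a contradiction. By \Cref{lem:absorbed}, I may assume $Q$ equals its maximal contraction $Q^m$. Let $v_1, v_2, v_3, v_4$ be the chordless path in $\calH(Q^m)$, and let $f_{12}, f_{23}, f_{34}$ be atoms of $Q^m$ containing the pairs $\{v_1,v_2\}$, $\{v_2,v_3\}$, $\{v_3,v_4\}$, respectively.

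Given a graph $G=(V,E)$ with $m$ edges and (without loss of generality) no self-loops and no isolated vertices, I would construct a database $I$ as follows. In $f_{12}$, for each edge $(a,b) \in E$, add a tuple setting $v_1 = a$, $v_2 = b$, and all other variables of $\var(f_{12})$ to a fixed value $0 \notin V$. Populate $f_{23}$ and $f_{34}$ analogously with the edges of $G$. For every other atom $f$ of $Q^m$, the chordless-path assumption implies that $\var(f) \cap \{v_1,v_2,v_3,v_4\}$ is either empty, a singleton $\{v_i\}$, or an adjacent pair $\{v_i,v_{i+1}\}$, because no atom may contain $\{v_1,v_3\}$, $\{v_2,v_4\}$, or $\{v_1,v_4\}$. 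I populate $f$ so that its projection onto these path variables equals the corresponding pair relation $R_{12}$, $R_{23}$, or $R_{34}$ in the pair case, or $V$ in the singleton case, setting all other variables of $f$ to $0$. Each atom then contributes at most $\bigO(m)$ tuples, so $|I| = \bigO(m)$.

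By construction, the answers of $Q^m$ on $I$ are in bijection with walks of length three $a_1 \to a_2 \to a_3 \to a_4$ in $G$: the path atoms enforce the three edge constraints, and the other atoms agree by construction. I assign weights $w_{v_1}(a) = a$ and $w_{v_4}(a) = -a$ for $a \in V$, with $w_{v_1}(0) = w_{v_4}(0) = 0$, and $w_z \equiv 0$ for every other free variable. Each answer then has weight $a_1 - a_4$, and since $G$ is loopless, a weight-$0$ answer exists iff $a_1 = a_4$, which corresponds precisely to a triangle $(a_1, a_2, a_3)$ in $G$. To decide existence, I would binary-search for the smallest index $k$ such that selection returns an answer with weight at least $0$, using the assumed selection routine at each step as the comparison oracle, and then check whether the returned weight equals $0$. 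The number of answers is polynomial in $n$, so $\bigO(\log n)$ selections suffice, giving total time $\bigO(n \polylog n \cdot \log n) = \bigO(m \polylog m)$ and contradicting the \hyperclique{} hypothesis.

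The main technical hurdle is keeping the database size at $\bigO(m)$ while ensuring that the overall join still yields exactly the 3-walks of $G$. The chordless-path assumption is exactly what makes this possible: it forbids any atom from combining two non-adjacent path variables, which would otherwise force a Cartesian product of size $\Theta(|V|^2)$; at the same time, it lets us populate each ``other'' atom by copying the projection of the appropriate path relation, so that no intended 3-walk is lost and none is introduced.
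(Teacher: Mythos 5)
Your proof is correct and follows essentially the same route as the paper's: encode the triangle instance into the three atoms covering the consecutive pairs of the chordless path (using chordlessness to keep the database linear and the join equal to the set of $3$-walks), assign weights $+a$ to the first path variable and $-a$ to the last, and detect a zero-weight answer by binary search over indices using the selection routine as a direct-access oracle. The only cosmetic differences are that you first pass to the maximal contraction via \Cref{lem:absorbed} and reduce from triangle detection in a single graph, whereas the paper works with $Q$ directly and reduces from the tripartite Boolean triangle query; you are also somewhat more explicit than the paper about how the non-path atoms are populated.
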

\begin{proof}
We will show that if selection for $Q$ can be done in 
$\bigO(n \polylog n)$, then the Boolean triangle query 
can be evaluated in the same time bound,
which contradicts the \hyperclique{} hypothesis.
Let $Q_\triangle() \datarule R'(x', y'), S'(y', z'), T'(z', x')$
be a query over a database $I'$ of size $\bigO(n)$.
We will construct a database $I$ for $Q$ so that
weight lookup for $Q$ over $I$ will allow us to answer $Q_\triangle$ over $I'$.

Let $x-y-z-u$ be the chordless path in the hypergraph of the maximal contraction of $Q$.
This implies that there are atoms 
$R(x, y, \mathbf{X}_R), S(y, z, \mathbf{X}_S), T(z, u, \mathbf{X}_T)$ in $Q$.
We construct a database $I$ where in all relations,
we let $x$ and $u$ respectively take all the values that $x'$ can take in $I'$.
We repeat the same for $y$ with $y'$ and $z$ with $z'$.
The values that all the other variables can take in $I$ are set to a fixed domain value $\bot$.
Because the $x-y-z-u$ path is chordless, 
we know that the pair $x-z$ never appears in a single atom,
and so is the case for $x-u$ and $y-u$.
Therefore, the size of each relation in $I$ is bounded by $|R'|$ or $|S'|$ or $|T'|$
and the size of $I$ is thus $\bigO(n)$.
Now consider a query answer $q \in Q(I)$. 
If $\pi_u(q) = \pi_x(q)$, then $\pi_{xyz}(q)$ has to satisfy all three atoms of $Q_{\triangle}$.
This is because an $(x, y)$ pair of values has to satisfy $\pi_{xy}(R)$ which contains precisely the tuples of $R'$,
and similarly for $(y, z)$ with $\pi_{yz}(S)$.
For a $(z, x)$ pair of values, these are the same as $(z, u)$ and satisfy $\pi_{zu}(T)$
contains precisely the tuples of $T'$.

We now assign weights as follows:
If $\dom \subseteq \R$, then
$w_x(i) = i, w_u(i) = -i$, and for all other variables $t$,
$w_t(i) = 0$.
Otherwise, it is also easy to assign $w_x$ and $w_u$ in a way s.t.
$w_x(i) = w_x(j)$ if and only if $i = j$ and $w_u(i) = - w_x(i)$.
This is done by maintaining a lookup table for all the domain values that we map to some arbitrary real number.
Then, we perform weight lookup for $Q$ to identify if a query result with zero weight
exists.
If it does for some result $q$, then 
$w_x(\pi_x(q)) + \ldots + w_u(\pi_u(u)) = 0$ hence
$\pi_x(q) = \pi_u(q)$ and $Q_\triangle$ is true, otherwise it is false.
If the time to access the sorted array of $Q$-answers takes $\bigO(n \polylog n)$,
then by \Cref{lem:inverted_sum} weight lookup also takes $\bigO(n \polylog n)$,
contradicting \hyperclique{}.
\end{proof}

For full CQs, the negative part of \Cref{thm:sum-selection-dicotomy} for acyclic queries is proved by combining \Cref{col:selection_geq3} and \Cref{lem:sum-selection-3path} together with \Cref{lem:3path_new} and \Cref{lem:absorbed} which show that we cover all queries.
For self-join-free cyclic CQs, 
we once again resort to the hardness of their Boolean version based on \hyperclique{}.

We summarize below what we have proven so far for full CQs:

\begin{lemma}\label{lem:sum-selection-full-dicotomy}
Let $Q$ be a full CQ.
\begin{itemize}
    \item If $\mh(Q) \leq 2$,
    then selection by SUM is possible in 
    $\Comp{1}{n \log n}$.
    \item Otherwise, if $Q$ is also self-join-free, 
    then selection by SUM is not possible in %
    $\Comp{1}{n \polylog n}$.
    assuming \ThreeSUM{} and \hyperclique{}.
\end{itemize}
\end{lemma}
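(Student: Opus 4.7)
The plan is to assemble the lemmas proven earlier in this section into the stated dichotomy via a case split on $\mh(Q)$ and on acyclicity of $Q$.

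For the tractable half of the dichotomy, assuming $\mh(Q)\le 2$, I would handle $\mh(Q)=1$ by \Cref{lem:selection_1} and $\mh(Q)=2$ by \Cref{lem:sum-selection-binary}; the case $\mh(Q)=0$ is vacuous for a non-empty CQ. Both invoked lemmas already bundle the reduction to the maximal contraction of $Q$ via \Cref{lem:absorbed}, so nothing further is needed here.

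For the intractable half, assume $Q$ is self-join-free with $\mh(Q)>2$. I would first dispose of the cyclic case: any cyclic self-join-free CQ already requires $\omega(n\polylog n)$ time just to decide its Boolean version under \hyperclique{}, and a selection procedure with the forbidden bound would decide the Boolean variant by querying the $0$-th index. If $Q$ is acyclic, I would split on $\freeind(Q)$. When $\freeind(Q)\ge 3$, \Cref{col:selection_geq3} directly provides the \ThreeSUM{}-based lower bound. When $\freeind(Q)\le 2$, the hypothesis $\mh(Q)>2$ together with acyclicity lets me invoke the structural \Cref{lem:3path_new}, which guarantees that the hypergraph of the maximal contraction of $Q$ contains a chordless path on four variables; \Cref{lem:sum-selection-3path} then supplies the \hyperclique{}-based hardness.

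The main conceptual weight is carried by the structural \Cref{lem:3path_new}: without it, the \ThreeSUM{}-based reduction (which needs three independent free variables) and the triangle-based reduction (which needs a chordless $4$-path) would together leave a gap. The remaining work in the proof is essentially a clean case analysis together with the observation that $\mh(Q)\le 2$ is precisely the cut-off matched by the positive constructions and that $\mh(Q)\ge 3$ is exactly where either independence ($\freeind(Q)\ge 3$) or the forced $4$-path ($\freeind(Q)\le 2$) takes over. I do not anticipate any subtle obstacle beyond verifying that \Cref{lem:absorbed} faithfully preserves self-join-freeness when passing between $Q$ and its maximal contraction in the hardness direction, which follows because the contraction only removes atoms and variables.
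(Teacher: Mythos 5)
Your proposal is correct and follows essentially the same route as the paper: the tractable side is exactly \Cref{lem:selection_1} and \Cref{lem:sum-selection-binary}, and the hard side is the same case split (cyclic via \hyperclique{}, $\freeind(Q)\ge 3$ via \Cref{col:selection_geq3}, and $\freeind(Q)\le 2$ with $\mh(Q)>2$ via \Cref{lem:3path_new} feeding \Cref{lem:sum-selection-3path}). Your closing observation about \Cref{lem:absorbed} is also the right sanity check and matches the paper's use of that lemma to confirm all queries are covered.
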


\subsection{CQs with Projections}
\label{sec:sum_selection_projections}

To complete the proof of \cref{thm:sum-selection-dicotomy}, 
we now show how the results from \cref{sec:sum_selection_tractable,sec:sum_selection_intractable} generalize to free-connex CQs.
For that purpose, we mainly rely on \cref{prop:reduce-to-subtree}, 
which allows us to reduce a free-connex CQ to a full CQ.
One difficulty that we encounter is that the full CQ that we obtain 
by this reduction may not
necessarily be unique. Thus, the following definition will be useful:

\begin{definition}[Reduced Full CQs]
\label{def:equivalent_full_queries}
For a free-connex CQ $Q$, $\RF(Q)$ is the set of all possible full CQs
we can obtain by the reduction of \cref{prop:reduce-to-subtree}.
\end{definition}

We now show that for our purposes, the tractability of a free-connex CQ $Q$ coincides with
that of any of the CQs in $\RF(Q)$.

\begin{lemma}\label{lem:selection_free_connex}
Let $Q$ be a free-connex CQ and $Q' \in \RF(Q)$.
Selection for $Q$ by SUM is possible in $\Comp{1}{T_S(n)}$ if
selection for $Q'$ by SUM is possible in $\Comp{1}{T_S(n)}$.
The converse is also true if $Q$ is self-join-free.
\end{lemma}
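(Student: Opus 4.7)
The plan is to give linear-time reductions in both directions between selection on $Q$ and on $Q'$, treating the selection algorithm as a black box. Since selection requires $\Omega(n)$ time on any query, an additive $O(n)$ overhead is absorbed into $T_S(n)$.

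For the \emph{if} direction, given $(I,w,k)$ for $Q$, invoke \Cref{prop:reduce-to-subtree} in linear time to produce $I'$ over $Q'$'s schema with $Q(I)=Q'(I')$ and $|I'|\le |I|$. Because $Q'$ arises from an ext-free-connex tree whose distinguished subtree contains exactly $\free(Q)$, we have $\free(Q')=\free(Q)$, so the SUM weight function $w$ transfers unchanged and the two sorted answer sequences coincide; invoking selection for $Q'$ on $(I',w,k)$ then returns the requested tuple within $O(T_S(n))$.

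For the \emph{only if} direction, assume $Q$ is self-join-free, and fix an ext-free-connex tree $T$ and its distinguished subtree $T'$ used to produce $Q'$. Given $(I',w,k)$, first apply a full Yannakakis semi-join reduction on $I'$ with respect to $T'$ in linear time; this preserves $Q'(I')$ and, because after reduction $R'_v=\pi_{\var(v)}(Q'(I'))$, it guarantees $\pi_{\var(v)}(R'_{v'})\subseteq R'_v$ whenever $v,v'\in T'$ satisfy $\var(v)\subseteq\var(v')$. Next build $I$ for $Q$: for each atom $e$ of $Q$ (whose relation symbol is unique by self-join-freeness, and where $e$ is a node of $T$ since the inclusive extension contains $\calH(Q)$), if $e\in T'$ set $R_e:=R'_e$; otherwise let $u(e)$ be the unique node of $T'$ closest to $e$ in $T$, so $V_e:=\var(e)\cap\free(Q)\subseteq\var(u(e))$ by running intersection, and add, for each $t\in R'_{u(e)}$, a tuple to $R_e$ agreeing with $t$ on $V_e$ and equal to a fresh value $\bot$ elsewhere. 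This takes linear time with $|I|=O(n)$, and invoking selection on $(Q,I,w,k)$ returns the answer in $O(T_S(n))$.

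Correctness of the reverse reduction reduces to $Q(I)=Q'(I')$ with matching weights; weights agree because $\free(Q)=\free(Q')$ and $\bot$ appears only on existential variables, which carry no weight. For $Q'(I')\subseteq Q(I)$, extend each $Q'$-answer by $\bot$ on every existential variable of $Q$: atoms of $Q$ in $T'$ are satisfied via $R_e=R'_e$ and atoms in $T\setminus T'$ via the $\bot$-padded tuples installed from $R'_{u(e)}$. For $Q(I)\subseteq Q'(I')$, restrict any $Q$-homomorphism $h$ to $h':=h|_{\free(Q)}$ and check $h'(\var(v))\in R'_v$ for every $v\in T'$: this is direct when $v$ is an atom of $Q$; otherwise, by inclusive extension $v\subsetneq e$ for some atom $e$ of $Q$, and using the post-Yannakakis containment with $u'=e$ (if $e\in T'$) or $u'=u(e)$ (if $e\in T\setminus T'$) together with $\var(v)\subseteq\var(u')$, we obtain $h'(\var(v))\in\pi_{\var(v)}(R'_{u'})\subseteq R'_v$. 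The main subtlety is precisely this last point: atoms of $Q'$ coming from ``bridge'' hyperedges of $T'$ not present as atoms of $Q$ could a priori impose extra constraints, but the Yannakakis pre-reduction makes these constraints redundant, and self-join-freeness is what allows each $R_e$ to be populated independently without conflict.
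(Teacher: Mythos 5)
Your proof is correct and follows essentially the same strategy as the paper's: the ``if'' direction is \Cref{prop:reduce-to-subtree} verbatim, and the ``only if'' direction builds a database for $Q$ from one for $Q'$ by padding existential positions with a fresh constant $\bot$ of weight zero. The only difference is that the paper's backward construction is stated very tersely (``copy the relations of $I'$ and add $\bot$-attributes for existential variables''), whereas you spell out the pieces it glosses over — populating atoms of $Q$ whose free-restriction is not itself a node of $T'$ by projecting from the nearest $T'$-node $u(e)$, and running a Yannakakis pre-reduction on $I'$ so that ``bridge'' nodes of $T'$ impose no extra constraints — which is a sound and indeed more rigorous rendering of the same reduction.
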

\begin{proof}
The ``if'' direction is trivial from \cref{prop:reduce-to-subtree}:
If $Q$ is over a database $I$, we can use $Q'$ over a modified database $I'$ to obtain exactly the same answers.

For the ``only if'' direction, 
we use selection on $Q$ to answer selection on $Q'$. 
If $Q'$ is over a database $I'$, then we construct a modified database $I$ for $Q$ as follows.
We copy all the relations of $I'$ into $I$ and for every existential variable of $Q$, we
add an attribute to the corresponding relations that takes the same $\bot$ value in all tuples.
The weight of all the new attributes is set to $0$ for $\bot$.
Now, the answers to $Q$ over $I$ are the same as those of $Q'$ over $I'$ if we ignore all the $\bot$ values from the answers.

The above reductions take linear time, which is dominated by $T_S(n)$
since $T_S(n)$ is trivially in $\Omega(n)$ for the selection problem.
\end{proof}

From \Cref{lem:sum-selection-full-dicotomy}, we can decide the tractability of any (self-join-free) full CQ 
in $\RF(Q)$ by the number of its maximal hyperedges.
We now connect this measure to the free-maximal-hyperedges $\mhfree(Q)$,
which is a measure easily computable from the original query $Q$.

\begin{lemma}\label{lem:maximal_free_hyperedges}
For a free-connex CQ $Q$, $\mhfree(Q) = \mh(Q')$ for any $Q' \in \RF(Q)$.
\end{lemma}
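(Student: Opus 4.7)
The plan is to reduce the statement to showing that the hypergraph $\calH(Q')$ (with vertex set $\free(Q)$ and hyperedges equal to the nodes of the subtree $T'$ used by the reduction of \Cref{prop:reduce-to-subtree}) is inclusion equivalent to $\calH_\free(Q)$. Once that is established, the two hypergraphs share the same maximal hyperedges with respect to containment, and so $\mh(Q') = \mh(\calH_\free(Q)) = \mhfree(Q)$. Recall that every node of $T'$ is a subset of $\free(Q)$ by the very definition of an ext-$\free(Q)$-connex tree, which is what makes the two hypergraphs comparable on a common vertex set.

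First I would verify the easy direction: every hyperedge of $\calH(Q')$ is contained in some hyperedge of $\calH_\free(Q)$. Each node $V$ of $T'$ is also a node of the ambient join tree $T$, which is a join tree of an inclusive extension $\calH^\star$ of $\calH(Q)$. Hence $V$ is a hyperedge of $\calH^\star$, so by the definition of inclusive extension there exists an atom $e$ of $Q$ with $V \subseteq e$; combined with $V \subseteq \free(Q)$ this gives $V \subseteq e \cap \free(Q)$, which is a hyperedge of $\calH_\free(Q)$.

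The main obstacle is the opposite direction: I need to show that every hyperedge $e \cap \free(Q)$ of $\calH_\free(Q)$ sits inside some node of $T'$. Since $e$ is an atom of $Q$ and inclusive extensions preserve all original edges, $e$ itself is a node of $T$. My plan is to use the node of $T'$ closest to $e$ in $T$, call it $V^\star$ (with $V^\star = e$ if $e$ happens to lie in $T'$); this is well-defined because $T'$ is a connected subtree of $T$. For each $v \in e \cap \free(Q)$ there exists a node $V_v \in T'$ containing $v$, because $T'$ covers all free variables. The unique simple path in $T$ from $V_v$ to $e$ must pass through $V^\star$, since $V^\star$ is the last $T'$-node encountered on any path leaving $T'$ toward $e$. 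Because both endpoints $V_v$ and $e$ contain $v$, the running intersection property forces every node along that path, and in particular $V^\star$, to contain $v$. Thus $e \cap \free(Q) \subseteq V^\star$, which completes the inclusion equivalence and yields $\mhfree(Q) = \mh(Q')$.
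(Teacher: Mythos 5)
Your proposal is correct and follows essentially the same route as the paper's proof: both directions of the inclusion equivalence are established exactly as in the paper (nodes of $T'$ sit inside hyperedges of $\calH_\free(Q)$ via the inclusive extension, and each $e\cap\free(Q)$ sits inside the $T'$-node closest to $e$ via the running intersection property), and the conclusion that inclusion-equivalent hypergraphs have the same maximal hyperedges is the same final step. No gaps.
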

\begin{proof}
Let $T$ be the ext-$\free(Q)$-connex tree used to derive $Q'$ and $T'$ be the connected subtree containing exactly the free variables.
Recall that the nodes of $T'$ correspond to the atoms of $Q'$.

First, we prove that every node of $T'$ is a subset of some hyperedge of $\calH_\free(Q)$ and as a result $\mhfree(Q) \leq \mh(Q')$.
Consider a node of $T'$ and let $V$ be the corresponding set of variables.
Note that $V$ are all free variables since they appear in $T'$.
Since $T'$ is a subtree of $T$, which in turn is a join-tree of an inclusive extension of $\calH(Q)$, 
there must exist an atom in $Q$ that contains $V$.
Let the set of variables of that atom be $V \cup V_1 \cup X$, 
for some disjoint sets $V, V_1, X$,
where $V_1$ are free variables and $X$ are existential.
By the definition of $\calH_\free(Q)$, 
there must exist a hyperedge $V \cup V_1$ in $\calH_\free(Q)$.

Second, we prove that every maximal hyperedge in $\calH_\free(Q)$ is a subset of some node of $T'$ and as a result $\mh(Q') \leq \mhfree(Q)$.
Let $V$ be the (free) variables of a hyperedge of $\calH_\free(Q)$. 
Then, there must exist an atom $e$ in $Q$ that contains the $V$ variables.
If $e$ corresponds to a node in $T'$, then we are done. 
Otherwise, $e$ corresponds to a node in $T \setminus T'$. 
Let $V'$ be the first node of $T'$ on the path from $V$ to the root.
A path between $V$ and any node of $T'$ must necessarily pass through $V'$,
otherwise $T$ would contain a cycle.
Since $V$ contains only free variables, 
each one of them has to appear in some node of $T'$
and by the running intersection property, in all the nodes on the path to $V$.
Therefore, $V'$ contains all the variables of $V$.
\end{proof}

\Cref{lem:maximal_free_hyperedges,lem:selection_free_connex} allow us to generalize the positive part of \Cref{lem:sum-selection-full-dicotomy} 
from full CQs to free-connex CQs by simply replacing the $\mh(Q)$ measure with $\mhfree(Q)$.
For the negative part, we need three ingredients to cover the class of self-join-free CQs: 
The case of free-connex CQs with $\mhfree(Q) > 2$ is also covered by the lemmas above.
For acyclic CQs that are not free-connex, we use \Cref{lem:lex-selection-intractability} which applies to any ranking function, including SUM.
Finally, the intractability of cyclic CQs follows from \hyperclique{}.

\def\qext{Q^+}
\def\fds{\Delta}
\def\fdsext{\Delta^+}

\section{Functional Dependencies}\label{sec:FDs}

In this section, we extend our results to apply to databases that are constrained by Functional Dependencies (FDs).
From the point of view of a CQ, if the allowed input databases are restricted to satisfy an FD, an assignment to some of the variables uniquely determines the assignment to another variable.
Our positive results are not affected by this, since an algorithm can simply ignore the FDs.
However, certain CQs that were previously intractable may now become tractable in the presence of FDs.
Our goal is to investigate how the tractability landscape changes for
all four variants we have investigated so far:
direct access and selection by LEX or SUM orders.

\introparagraph{Concepts and Notation for FDs}
In this section, we assume that the database schema $\calS$ is extended with FDs
of the form $R: A \rightarrow B$, where $A$ and $B$ are sets of integers.
This means that if the tuples of relation $R$ agree on the attributes indexed by $A$,
then they also agree on those indexed by $B$.
Though FDs are usually defined directly on the schema of the database, we express them from now on using the query variables for convenience.
More specifically, we assume that an FD has the form $R: \mathbf{X} \rightarrow \mathbf{Y}$ for some 
atom $R(\mathbf{Z})$
where $\mathbf{X}, \mathbf{Y} \subseteq \mathbf{Z}$. 
We now briefly explain why this assumption can be done without loss of generality.
There are two factors that can render such a notation not well-defined. The first is self-joins. However, our negative results apply only to self-join-free CQs regardless of this issue, and positive results for self-join-free CQs naturally extend to CQs with self joins: 
Self-joins can be reduced in linear time to a self-join-free form
by replacing the $i^\textrm{th}$ occurrence of a relational symbol $R$ with a fresh relational symbol $R_i$
and then copying relation $R$ into $R_i$. 
The second factor is repeated appearances of a variable in an atom (e.g., $R(x, x)$). Such an appearance can be eliminated in a linear-time preprocessing step that performs the selection on the relation and then removes the duplicate variable. For the other direction of the equivalence, the projected relation can be transformed to one with the repeated variable by duplicating the relevant column in the 
relation.
We say that an FD 
$\mathbf{X} \rightarrow \mathbf{Y}$
is \emph{satisfied} by an input database $I$ if for all tuples $t_1, t_2 \in R^I$
we have that if $t_1[x] = t_2[x], \forall x \in \mathbf{X}$, then 
$t_1[y] = t_2[y], \forall y \in \mathbf{Y}$.
For such an FD, we sometimes say that $\mathbf{X}$ \emph{implies} $\mathbf{Y}$.
WLOG, we assume that all FDs are of the form $R: \mathbf{X} \rightarrow y$ 
where $y$ is a single variable
because we can replace an FD of the form $R: \mathbf{X} \rightarrow \mathbf{Y}$
with a set of FDs $\{ R: \mathbf{X} \rightarrow y \;|\; y \in \mathbf{Y} \}$.
If $|\mathbf{X}| = 1$, we say that the FD is \emph{unary}.
In this paper, we only consider unary FDs.

\introparagraph{Complexity and Reductions}
In all previous sections, the computational problem we were considering was 
defined by a CQ and we were interested in the worst-case complexity over all possible input databases.
Now, the problem is defined by a CQ $Q$ and a set of FDs $\Delta$.
Thus the input is limited to databases that satisfy all the FDs $\Delta$.
Therefore, to determine the hardness of a problem, 
we need to consider the complexity of the combination of CQ and FDs.
In particular, the results in the previous sections apply when the given FD set is empty. 
Our reductions from now on are from a certain CQ and FD set to another CQ and another FD set.

\begin{definition}[Exact Reduction]
We say that there is an exact reduction from a CQ $Q$ with FDs $\Delta$ to a CQ $Q'$ with FDs $\Delta'$ 
if for every database $I$ that satisfies $\Delta$:
\begin{enumerate}
    \item We can construct a database $I'$ in $\bigO(|I|)$ that satisfies the FDs $\Delta'$.
    \item There is a bijection $\tau$ from $Q'(I')$ to $Q(I)$ that is computable in $\bigO(1)$.
\end{enumerate}
Additionally, we say that an exact reduction is \emph{weight-preserving} if for every weight function $w$ given for $Q$,
there is a weight function $w'$ for $Q'$ such that $w(\tau(q')) = w'(q'), \forall q' \in Q'(I')$.
\end{definition}

\introparagraph{Known Results}
Carmeli and Kr{\"{o}}ll~\cite{DBLP:journals/mst/CarmeliK20} reasoned about the complexity of
enumerating the answers to a CQ 
by looking at an equivalent extended CQ over an extended schema and FDs.
We recall the definition of this extension here\footnote{The original definition is more involved. We only give the restriction of the definition required for our purposes.} and then proceed to use it for our classification.

\begin{definition}[FD-extension~\cite{DBLP:journals/mst/CarmeliK20}]\label{def:fd-extension}
Given a self-join free CQ $Q$ and a set of FDs $\Delta$,
we define two types of extension steps.
For an FD $R: \mathbf{X} \rightarrow y$:
\begin{enumerate}
    \item If $\mathbf{X} \subseteq \mathbf{Z}$ for some atom $S(\mathbf{Z})$ and
    $y \notin \mathbf{Z}$,
    then increase the arity of $S$ by one, replace $S(\mathbf{Z})$ with $S(\mathbf{Z}, y)$,
    and add $S: \mathbf{X} \rightarrow y$ to the FD set.
    \item If $\mathbf{X} \subseteq \free(Q)$ and $y \notin \free(Q)$, then add $y$ to $\free(Q)$.
\end{enumerate}
The FD-extension of $Q$ and $\Delta$ is a CQ $Q^+$ and a set of FDs $\Delta^+$
that are obtained as the fixpoint of the above two extension steps.

\end{definition}

\begin{example}
Consider the CQ $Q_{2P}(x, z) \datarule R(x, y), S(y, z)$.
This CQ is not free-connex, therefore selection (or direct access) is not possible
by any order (\Cref{lem:lex-selection-intractability}).
Now, if we know that the database satisfies the FD $S: y \rightarrow z$,
we can take the unique $z$-value for every $y$ from $S$ and add it to every tuple of $R$, 
while preserving the same query answers.
Thus, we can extend the CQ to $Q_{2P}^+(x, z) \datarule R(x, y, z), S(y, z)$
and add the FD $R: y \rightarrow z$.
While the original CQ was not free-connex, notice that $Q_{2P}^+$ now is.
This makes it tractable for all the tasks that we consider in this article
since it is acyclic and $R$ contains all the free variables (see \Cref{sec:sum_direct}).

Similarly, the FD-extension may transform a cyclic CQ into an acyclic one.
For $Q_\triangle(x, y, z) \datarule R(x, y), S(y, z), T(z, x)$ with the FD $S: y \rightarrow z$,
we get the FD-extension $Q_\triangle^+(x, y, z) \datarule R(x, y, z), S(y, z), T(z, x)$ with the additional FD $R: y \rightarrow z$
which is acyclic because $R$ contains all the variables.
Like $Q_{2P}^+(x, z)$, we have that $Q_\triangle^+$ is tractable for all the tasks that we consider. 
This is despite the fact that the cyclic $Q_\triangle$ without FDs is intractable for all of these tasks.
\end{example}

Previous work~\cite{DBLP:journals/mst/CarmeliK20} showed exact reductions between the original CQ (with the original FDs) and its extension (with the extended FDs) in both directions, proving that the two tasks are essentially equivalent for the task of enumeration.

\begin{theorem}[\cite{DBLP:journals/mst/CarmeliK20}]\label{thm:fds-known-old}
Let $Q$ be a self-join free CQ 
with FDs $\Delta$.
There are exact reductions between $Q$ with $\Delta$ and $Q^+$ with $\Delta^+$ in both directions.
\end{theorem}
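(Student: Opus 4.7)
The plan is to prove this inductively on the extension steps defining the fixpoint: each application of a Type 1 or Type 2 step is shown to yield exact reductions in both directions, and these reductions compose. Since extension terminates after a number of steps bounded by the query size (which is constant), composing constantly many linear-time constructions still yields a linear-time reduction, and composing constantly many $\bigO(1)$-computable bijections stays $\bigO(1)$.

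First I would handle a single Type 1 step. Suppose we extend $S(\mathbf{Z})$ to $S(\mathbf{Z}, y)$ using an FD $R: \mathbf{X}\to y$ with $\mathbf{X}\subseteq \mathbf{Z}$, and add $S: \mathbf{X}\to y$ to the FD set. For the forward direction, given $I$ satisfying $\Delta$, build $I^+$ as follows: construct a hash index on $R^I$ keyed by $\mathbf{X}$; perform a semi-join to drop from $S^I$ any tuple whose $\mathbf{X}$-values do not occur in $R^I$ (such tuples contribute no answer of $Q$); and for each surviving $t\in S^I$, append the unique $y$-value $r[y]$ where $r\in R^I$ has $r[\mathbf{X}]=t[\mathbf{X}]$, with uniqueness guaranteed by $R:\mathbf{X}\to y$. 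The new FD $S:\mathbf{X}\to y$ holds by construction, and the bijection $\tau$ is the identity on homomorphisms: any homomorphism satisfying $Q$ uses, for the old $S$-tuple and the matching $R$-tuple, the same $\mathbf{X}$-values, and hence the $R$-induced $y$-value is exactly what the new $S$-atom demands in $Q^+$. For the reverse direction, set $S^I:=\pi_{\mathbf{Z}}(S^{I^+})$ and leave other relations unchanged; $I$ inherits $\Delta$ because projection preserves FDs on the retained columns, and the bijection is again the identity on the underlying homomorphisms.

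Next I would handle a single Type 2 step, which moves $y$ into $\free(Q)$ when $\mathbf{X}\subseteq\free(Q)$ and $R:\mathbf{X}\to y$. Here $I^+=I$. For the forward direction, map each $q\in Q(I)$ to the $Q^+$-answer obtained by appending the (unique) value $r[y]$ where $r\in R^I$ has $r[\mathbf{X}]=q[\mathbf{X}]$; with a precomputed hash on $R$ this is $\bigO(1)$ per answer. For the reverse direction, $\tau$ is projection onto the original free variables; the mapping is a bijection because the extra $y$-coordinate is functionally determined by the $\mathbf{X}$-coordinates it is projected over.

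The main obstacle will be the bookkeeping across the composition of steps: verifying that the filtering performed in one Type~1 step does not interfere with a later step, that every FD added along the way is honored by the constructed instance, and that both bijections remain mutually inverse after composition. The key observation that makes this go through is that filtered tuples are exactly those that contribute to no answer of any intermediate query (since removing them does not change the join result), so they remain irrelevant for all subsequent extensions; and each newly added FD is witnessed by a lookup into a relation that already satisfies the corresponding original FD, so satisfaction propagates. A clean way to organize the write-up is to fix a linear order on the extension steps consistent with the fixpoint, maintain the invariant that after $i$ steps we have an exact reduction to the $i$-th intermediate $(Q_i, \Delta_i)$, and verify the invariant inductively using the two single-step arguments above.
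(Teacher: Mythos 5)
This theorem is cited from Carmeli and Kr\"oll rather than proved in the paper, so the relevant check is whether your reconstruction of the two reductions is sound. Your forward direction (from $Q$ with $\Delta$ to $Q^+$ with $\Delta^+$) is correct and matches the standard construction: the semi-join filtering of $S$ against $R$ on $\mathbf{X}$ discards only tuples that contribute to no answer, the FD $R:\mathbf{X}\to y$ makes the appended $y$-value well defined, and the resulting homomorphism sets coincide. The Type~2 step is also handled correctly in both directions.

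The genuine gap is in the reverse direction of the Type~1 step. Setting $S^I:=\pi_{\mathbf{Z}}(S^{I^+})$ does not yield a bijection between $Q(I)$ and $Q^+(I^+)$. The instance $I^+$ is an \emph{arbitrary} instance satisfying $\Delta^+$; the FDs $R:\mathbf{X}\to y$ and $S:\mathbf{X}\to y$ are each intra-relational and do not force $R^{I^+}$ and $S^{I^+}$ to agree on which $y$-value a given $\mathbf{X}$-combination determines. In $Q^+$ the shared variable $y$ in the two atoms enforces that agreement at join time, but once you project $y$ out of $S$, the join in $Q$ no longer does, so $Q(I)$ can be strictly larger than $Q^+(I^+)$. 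Concretely, take $Q(x)\datarule R(x,y),S(x)$ with $\Delta=\{R:x\to y\}$, whose extension is $Q^+(x,y)\datarule R(x,y),S(x,y)$ with $\Delta^+=\{R:x\to y,\ S:x\to y\}$. The instance $R^{I^+}=\{(1,a)\}$, $S^{I^+}=\{(1,b)\}$ with $a\neq b$ satisfies $\Delta^+$ and has $Q^+(I^+)=\emptyset$, yet your constructed $I$ gives $Q(I)=\{(1)\}$; the ``identity on the underlying homomorphisms'' is not a bijection because the two homomorphism sets differ. The standard fix is not projection but value re-encoding: replace each determining value by one that packs in the $y$-value it determines in that tuple, so that the equi-join on $\mathbf{X}$ in the original query implicitly enforces agreement on $y$. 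This is exactly the concatenation device the paper itself uses in \Cref{lemma:extension-to-no-FDs} (and which requires extra care for non-unary $\mathbf{X}$); it is the ingredient your proposal is missing.
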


This theorem alone implies that, if $Q^+$ has a tractable structure (free-connex for enumeration), 
then the original CQ $Q$ is tractable too. 
That is because according to the definition of an exact reduction,
an instance for the extended schema can be built in linear time, and answers to
the extension can be translated back to answers of the original CQ in constant time per answer.

We restate this result below in a slightly more general way,
adding the fact that
these reductions are actually weight-preserving. 
Intuitively, this means that any SUM ordering of the answers in one problem can be preserved
through the reduction.

\begin{lemma}\label{thm:fds-known}
Let $Q$ be a self-join free CQ 
with FDs $\Delta$.
There are weight-preserving exact reductions between $Q$ with $\Delta$ and $Q^+$ with $\Delta^+$ in both directions.
\end{lemma}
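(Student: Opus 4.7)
The plan is to augment the exact reductions of \Cref{thm:fds-known-old} with weight functions, verifying that in each direction the weights can be transported so that the bijection $\tau$ preserves SUM weights.

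For the forward reduction from $Q$ with $\fds$ to $\qext$ with $\fdsext$, the bijection $\tau:\qext(I^+)\to Q(I)$ projects each answer of $\qext$ onto $\free(Q)$; this is well-defined because every variable of $\free(\qext)\setminus\free(Q)$ is functionally determined by $\free(Q)$ and contributes no genuine freedom. Given a weight function $w$ for $Q$, I would set $w^+_x = w_x$ for every $x\in\free(Q)$ and $w^+_y\equiv 0$ for every new free variable $y\in\free(\qext)\setminus\free(Q)$. The SUM weight of any $q^+\in\qext(I^+)$ then equals the SUM weight of $\tau(q^+)$ trivially, since the newly free variables contribute zero.

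For the backward reduction from $\qext$ with $\fdsext$ to $Q$ with $\fds$, the subtlety is that $\free(\qext)$ may properly contain $\free(Q)$, so I cannot simply restrict the weight function. Here I would use that all FDs are unary: by case~$(2)$ of \Cref{def:fd-extension}, each $y\in\free(\qext)\setminus\free(Q)$ becomes free because some $x$ with $\{x\}\to y$ is already free at that step. Unrolling the fixpoint yields, for each such $y$, a chain $x_0\to x_1\to\cdots\to x_k=y$ ending at an originally free variable $x_0\in\free(Q)$. Given a weight function $w^+$ for $\qext$, I would build $w$ for $Q$ by absorbing each new free variable's weights into its originating ancestor: during linear-time preprocessing I would walk the FD chain to compute, for each $x_0\in\free(Q)$ and each value $c$ in the active domain of $x_0$, the unique value $c_y$ that every descendant $y$ must take whenever $x_0=c$, and then set $w_{x_0}(c)=w^+_{x_0}(c)+\sum_{y}w^+_y(c_y)$ summing over all descendants $y$ of $x_0$. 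Values of $x_0$ outside its active domain can be set arbitrarily since they never occur in answers.

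The main obstacle is to show that this backward construction is unambiguous and truly matches the bijection $\tau:Q(I)\to\qext(I^+)$. Because $\tau$ extends a $Q$-answer by filling in the functionally determined values of the new free variables through exactly the same FD chains used above, summing the absorbed weights along each ancestor chain reproduces the SUM weight of $\tau(q)$. If a variable $y$ is reachable via several chains, the fact that $I^+$ satisfies $\fdsext$ forces $c_y$ to be the same regardless of the chain chosen, so fixing any one chain during preprocessing yields a well-defined weight assignment and the reduction becomes weight-preserving, as required.
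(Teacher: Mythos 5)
Your proposal is correct and follows essentially the same route as the paper's proof: zero weights on the new free variables in the forward direction, and absorbing each newly-free variable's weight into an originally-free variable that (transitively) implies it in the backward direction. Your treatment is somewhat more explicit than the paper's about unrolling the implication chains and avoiding double-counting when a variable is reachable from several ancestors, but the underlying argument is the same.
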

\begin{proof}
The bijections between the answers in the exact reductions do not change the values of the free variables of $Q$. 
For the reduction from the query to its extension, we set the weights of the new free variables to zero. 
For the reduction from the extension to the original query, if the extension has a free variable $y$ that is existential in the original query, then some free variable $x$ in the original query implies $y$. 
To maintain the contribution of the $y$-weight in the query answers,
we simply increase the weight of every domain value of $x$ by adding the weight of the corresponding domain value of $y$ 
(i.e., the one that is implied by the FD).
\end{proof}

Proving lower bounds using the extension is more tricky.
If the extension has a structure that is known to make a CQ intractable 
(e.g. not free-connex for enumeration), 
it is not necessarily the case that it is intractable together with the FDs.
Still, Carmeli and Kr{\"{o}}ll~\cite{DBLP:journals/mst/CarmeliK20} were able to prove lower bounds 
for the cases where the extension is not free-connex. 
We proceed to use the same technique that they used in order to prove a more general statement.

\introparagraph{Eliminating FDs}
Our goal is to show that for the extension of a self-join-free CQ, 
unary FDs essentially do not affect our classification.
This result is similar in spirit to other results in database theory where the complexities of problems with FDs were also identified to be the complexity of the original complexity criterion applied to the FD extension \emph{after removing the FDs}
\cite{DBLP:conf/pods/Kimelfeld12,
DBLP:journals/vldb/GatterbauerS17,
DBLP:journals/pvldb/FreireGIM15}.
To achieve that, we reduce a CQ without FDs to the same CQ with FDs,
under the condition that the CQ and the FDs we reduce to are the
extension of some CQ.

\begin{lemma}\label{lemma:extension-to-no-FDs}
Let $Q$ be a self-join free CQ 
with unary FDs $\Delta$.
There is a weight-preserving exact reduction from $Q^+$ 
without FDs to $Q^+$ over $\Delta^+$.
\end{lemma}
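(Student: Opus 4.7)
The plan is to construct, from an arbitrary database $I^+$ over the schema of $Q^+$ (which need not satisfy $\fdsext$), a database $J^+$ over the same schema that \emph{does} satisfy $\fdsext$, together with a bijection $\tau \colon Q^+(J^+) \to Q^+(I^+)$ that is computable in $\bigO(1)$ per answer, and a weight assignment that matches the original one term-by-term.

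First, for every variable $x$ appearing on the left of some unary FD in $\fdsext$, let $Y(x) = \{y : (S : x \to y) \in \fdsext \text{ for some atom } S\}$. A direct consequence of the fixpoint definition of the FD-extension (\Cref{def:fd-extension}) is that every atom of $Q^+$ containing $x$ also contains every variable in $Y(x)$, and carries an FD $S : x \to y$ for every such atom $S$ and every $y \in Y(x)$. The construction of $J^+$ rewrites the data as follows: for each relation $S$ whose schema contains such a variable $x$, replace each tuple $t$ by the tuple obtained from $t$ by substituting its $x$-component $a = t[x]$ with the composite value $\langle a, (t[y])_{y \in Y(x)} \rangle$. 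Apply this rewriting independently for every such $x$; the remaining relations are copied unchanged. This step takes linear time and preserves the size of $I^+$.

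Next I would verify that $J^+$ satisfies $\fdsext$: any two tuples in a relation $S$ of $J^+$ sharing the same (composite) $x$-value by construction agree on all variables in $Y(x)$, so every FD $S : x \to y$ holds trivially. I would then define the bijection $\tau$: any answer $q^+ \in Q^+(J^+)$ assigns each rewritten variable $x$ to a composite value $\langle a, \ldots \rangle$, and $\tau(q^+)$ is obtained by projecting out the extra coordinates. Soundness is immediate since each tuple of $J^+$ originated, under the rewriting, from a unique tuple of $I^+$. Surjectivity requires that, given $q \in Q^+(I^+)$, the rewritings induced by the witnessing tuples across different atoms are mutually consistent: for any $x$ with $Y(x) \neq \emptyset$, every witnessing tuple containing $x$ belongs to an atom that also contains all of $Y(x)$, and because $q$ is a homomorphism these tuples agree on the values of $x$ and of every $y \in Y(x)$; therefore they all get rewritten to the same composite value, producing a well-defined preimage of $q$.

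Finally, weight preservation is straightforward: for each free variable $x$ with $Y(x) \neq \emptyset$, set $w_x(\langle a, \ldots \rangle) = w'_x(a)$, and for all other free variables use the original weights; then $w(q^+) = w'(\tau(q^+))$. The main obstacle is the consistency argument inside the surjectivity step --- everything else is bookkeeping --- and it hinges precisely on the structural property of the FD-extension that no atom containing $x$ can lack a variable implied by $x$. This is why the statement is specifically about the \emph{extension} $Q^+$ over $\fdsext$ rather than an arbitrary CQ with FDs: for a general CQ this property fails and the relabelings across atoms may no longer agree.
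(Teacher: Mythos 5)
Your construction is the same one the paper uses — pack the values of implied variables into the value of the implying variable, argue the FDs then hold syntactically, and recover the original answer by projecting out the extra coordinates — but there is one point where your version genuinely breaks: you build the composite value from $Y(x)$, the set of variables \emph{directly} implied by $x$ in $\Delta^+$, whereas the construction must use the set of variables \emph{transitively} implied by $x$. The fixpoint in \Cref{def:fd-extension} propagates each original FD $x\to y$ to every atom containing $x$, but it does not add transitive compositions such as $x\to z$ to $\Delta^+$; so with chained FDs your rewriting can violate $\Delta^+$. Concretely, take $Q(x,y,z)\datarule R(x,y,z)$ with $\Delta=\{R:x\to y,\; R:y\to z\}$, so $Q^+=Q$ and $\Delta^+=\Delta$. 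Here $Y(x)=\{y\}$ and $Y(y)=\{z\}$, and your rewriting (applied to the original components of $t$, as your ``independently for every such $x$'' indicates) sends the tuples $(1,2,3)$ and $(1,2,4)$ to $(\langle 1,2\rangle,\langle 2,3\rangle,3)$ and $(\langle 1,2\rangle,\langle 2,4\rangle,4)$. These agree on the rewritten $x$-attribute but not on the rewritten $y$-attribute, so the FD $R:x\to y$ fails in $J^+$, and the whole point of the reduction — producing an instance that satisfies $\Delta^+$ — is lost.

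The fix is exactly what the paper does: let $I_x$ be the set of all variables transitively implied by $x$, and replace $t[x]$ by the concatenation of $t[x]$ with the original $t[y]$ for all $y\in I_x$ (the fixpoint guarantees every atom containing $x$ also contains all of $I_x$, so this is well defined and consistent across atoms). Then agreement on the new $x$-value forces agreement on all of $I_x$, which contains $I_y\cup\{y\}$ for every $y$ directly implied by $x$, so agreement on the new $y$-value follows and every FD of $\Delta^+$ holds. The rest of your argument — the surjectivity/consistency check across witnessing atoms, the $\bigO(1)$ bijection by projection, and carrying the weight of $x$ on the first coordinate of the composite value — matches the paper and is fine once $Y(x)$ is replaced by the transitive set.
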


\begin{proof}
We are given a database instance $I$ for $Q^+$ that does not necessarily satisfy any FDs
and we construct another database instance $I'$ with roughly the same answers (there is a bijection $\tau$ such that $Q^+(I) = \tau(Q^+(I'))$) 
that satisfies the extended FDs $\Delta^+$.
Given a variable $v$, denote by $I_v$ the set of all variables that are transitively implied by $v$. 
By the definition of the extension, for every variable $v$, every atom of $Q^+$ that contains $v$ also contains all $I_v$ variables.
For every tuple in the relation that corresponds to such an atom, 
we replace the $v$-value with the concatenation of $I_v$-values. 
We set the weight of a concatenated value to be equal to the weight of the $v$ variable as it was in the original database $I$.
This construction can be done in linear time, and the resulting database satisfies the FDs.

We now claim that this construction preserves the answers through a bijection.
Note that for every free variable $v$, by the definition of the extension, the variables in $I_v$ are all free.
Every answer to the original problem gives an answer to our construction by assigning every free variable $v$ to the concatenation of the assignments of $I_v$.
Every answer to our construction gives an answer to the original problem by keeping only the value that corresponds to the original variable for every free variable.
In both cases, the weights of the query answers are the same in the two instances.
\end{proof}

\begin{example}
We illustrate the reduction above through the CQ 
$Q(x, z, u) \datarule R(x, y),$ $S(y, z), T(z, u)$ 
with FD $T: z \rightarrow u$.
The FD-extension is $Q^+(x, z, u) \datarule R(x, y), S'(y, z, u), T(z, u)$
with extended FDs $T: z \rightarrow u$ and $S': z \rightarrow u$.
Notice that $Q^+$, like $Q$, is not free-connex. 
From that structural property of $Q^+$, we know that without any FDs it is
intractable,
e.g., for the task of selection by \Cref{lem:selection_free_connex}.
The reduction shows that this is still the case, even if we take the extended FDs into account.

Concretely, the reduction takes a database that does not satisfy the extended 
FDs and constructs another database that does.
For example, $S'$ could contain tuples $(1, 1, 1)$ and $(1, 1, 2)$ where the middle attribute (corresponding to $z$) 
does not imply the third attribute (corresponding to $u$).
To modify this database, we replace the $z$-values by values that pack $z$ and $u$ together
as $(z,u)$.
Conceptually, and abusing the notation a little, 
$S'(y,z,u)$ is further replaced with $S''(y,(z, u), u)$
and the whole query with
${Q^+}'(x, (z , u), u) \datarule R(x, y), S''(y, (z ,u) , u), T'((z , u), u)$.
The aforementioned two tuples of $S'$ are thus replaced by $(1, (1, 1), 1)$ and $(1, (1, 2), 2)$.
Now, the third attribute is trivially dependent on the middle one because it is contained in the latter.
This reduction is only possible because, being an FD-extension of $Q$,
$Q^+$ always contains $u$ in all of its atoms that contain $z$.
The weights of the query answers are preserved if we only keep the weight of $z$ in the concatenated $(z, u)$ values, i.e., $w_{z}((1,1)) = w_{z}(1)$ and $w_{z}((1,2)) = w_{z}(1)$.
\end{example}

Note that a reduction in the opposite direction is trivial since, given an instance that satisfies the FDs, 
it is also a valid instance for the instance without the FDs. 
Thus, this lemma proves that the two problems are equivalent.
By combining this fact with the equivalence of CQs and their FD-extensions
(\cref{thm:fds-known}),
we obtain the following result that is also useful for lower bounds 
and allows us to classify queries based on the structure of their extension.

\begin{theorem}\label{thm:equivalence-fds-extension}
Let $Q$ be a self-join free CQ 
with unary FDs $\Delta$.
There are weight-preserving exact reductions between $Q$ 
over $\Delta$ 
and $Q^+$ 
without FDs in both directions.
\end{theorem}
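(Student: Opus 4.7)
The plan is to obtain the theorem essentially for free by composing the three ingredients already established in the section: \cref{thm:fds-known} (the known weight-preserving equivalence between $Q$ over $\Delta$ and $Q^+$ over $\Delta^+$), \cref{lemma:extension-to-no-FDs} (the weight-preserving reduction from $Q^+$ without FDs to $Q^+$ over $\Delta^+$), and the trivial observation that any instance satisfying $\Delta^+$ is also a valid instance for $Q^+$ considered without FDs. So the bulk of the work is book-keeping: checking that composition preserves both exactness and weight-preservation.

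First I would handle the direction from $Q$ over $\Delta$ to $Q^+$ without FDs. Given an input $I$ satisfying $\Delta$, apply the forward reduction of \cref{thm:fds-known} to get an instance $I^+$ satisfying $\Delta^+$ together with a constant-time weight-preserving bijection between $Q^+(I^+)$ and $Q(I)$. Since $I^+$ trivially satisfies no constraints when viewed as an input to $Q^+$ without FDs, the same instance and the same bijection witness the desired reduction. Linearity of construction and constant-time of the bijection compose, and weight-preservation composes with the identity weight map.

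Next I would handle the direction from $Q^+$ without FDs to $Q$ over $\Delta$ by composing two reductions. Given an arbitrary $I$ for the schema of $Q^+$, apply \cref{lemma:extension-to-no-FDs} to produce in linear time an instance $I'$ satisfying $\Delta^+$, with a constant-time weight-preserving bijection $\tau_1$ between $Q^+(I')$ and $Q^+(I)$. Then apply the backward direction of \cref{thm:fds-known} to $I'$, producing in linear time an instance $I''$ satisfying $\Delta$ together with a constant-time weight-preserving bijection $\tau_2$ between $Q(I'')$ and $Q^+(I')$. The composition $\tau_1 \circ \tau_2$ is a constant-time weight-preserving bijection between $Q(I'')$ and $Q^+(I)$, and the overall construction is linear time. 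This yields the required exact reduction.

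There is no real obstacle here — the lemmas do all the real work. The only thing worth verifying carefully is that composition of weight-preserving reductions is itself weight-preserving: given weights $w$ for the source problem, the backward lemma produces weights $w'$ for the intermediate problem with $w(\tau_2(q')) = w'(q')$, and the forward lemma then produces weights $w''$ for the target problem with $w'(\tau_1(q'')) = w''(q'')$; combining gives $w(\tau_2(\tau_1(q''))) = w''(q'')$, as required. Linearity and constant-time are obviously preserved under a constant number of compositions. Thus both directions follow, completing the proof.
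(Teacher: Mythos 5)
Your proposal is correct and follows essentially the same route as the paper, which obtains the theorem by combining \cref{thm:fds-known} with \cref{lemma:extension-to-no-FDs} plus the trivial observation that an instance satisfying $\Delta^+$ is a valid FD-free instance; the only blemish is a harmless notational swap of $\tau_1$ and $\tau_2$ in your final verification paragraph. The composition checks (exactness, linearity, constant-time bijections, weight-preservation) are exactly the book-keeping the paper leaves implicit.
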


\subsection{Sum of Weights}
\label{sec:fds_sum}

For a sum-of-weights order, applying a weight-preserving exact reduction
cannot reorder the query answers since their weight is preserved.
More formally, the classes of self-join-free CQs that are tractable for selection
and direct access
are both \emph{closed} under weight-preserving exact reductions.
Therefore, \Cref{thm:equivalence-fds-extension} immediately proves that
for both problems, we can classify self-join-free CQs by their FD-extension.

\begin{theorem}
\label{th:fd_sum_dichotomy}
Let $Q$ be a CQ with unary FDs $\Delta$.
\begin{itemize}
    \item If $Q^+$ is acyclic and an atom of $Q^+$ contains all the free variables,
    then direct access for $Q$ by SUM is possible 
    in $\Comp{n \log n}{1}$.
    \item Otherwise, if $Q$ is also self-join-free, 
     direct access for $Q$ by SUM is not possible 
     in $\Comp{n \polylog n}{\polylog n}$,
     assuming \ThreeSUM{} and \hyperclique{}.
 \end{itemize}
 \end{theorem}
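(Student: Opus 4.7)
The plan is to derive this dichotomy directly from the corresponding FD-free dichotomy (\Cref{th:ra_sum_dichotomy}) by invoking the two-directional weight-preserving exact reductions provided by \Cref{thm:equivalence-fds-extension}. The first step is to note that the FD-extension preserves self-join-freeness: the extension steps of \Cref{def:fd-extension} only enlarge existing atoms or promote variables to free, never duplicating a relation symbol. Thus if $Q$ is self-join-free, so is $Q^+$, and the hypotheses of \Cref{th:ra_sum_dichotomy} apply to $Q^+$ over the empty FD set.

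For the positive direction, assume $Q^+$ is acyclic and some atom of $Q^+$ contains all of $\free(Q^+)$. By \Cref{th:ra_sum_dichotomy}, direct access for $Q^+$ without FDs by SUM is possible in $\Comp{n \log n}{1}$. Given an input database $I$ for $Q$ satisfying $\fds$ together with a weight function $w$, I would apply the linear-time reduction of \Cref{thm:equivalence-fds-extension} from $Q$ over $\fds$ to $Q^+$ without FDs to obtain an instance $I^+$ and a weight function $w^+$ such that the bijection $\tau$ between answer sets preserves weights. Running the preprocessing for $Q^+$ on $I^+$ in time $\bigO(n \log n)$ sets up the data structure; each access request for index $k$ is answered by retrieving the $k$-th answer of $Q^+$ on $I^+$ under $w^+$ and applying $\tau$ in $\bigO(1)$. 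Since weights are preserved, the $k$-th answer under the two SUM orders coincide, so the overall guarantee is $\Comp{n \log n}{1}$.

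For the negative direction, assume $Q$ is self-join-free and $Q^+$ either is cyclic or has no atom containing all free variables. Suppose for contradiction that direct access for $Q$ over $\fds$ by SUM is possible in $\Comp{n \polylog n}{\polylog n}$. Given any database $I^+$ for $Q^+$ (without FD constraints), use the reduction of \Cref{thm:equivalence-fds-extension} in the opposite direction to obtain, in linear time, a database $I$ satisfying $\fds$ together with a weight function for $Q$ that pulls back the SUM order on $Q^+(I^+)$ via the weight-preserving bijection. Running the assumed algorithm for $Q$ then yields direct access in $\Comp{n \polylog n}{\polylog n}$ for $Q^+$ by SUM without FDs, contradicting the negative side of \Cref{th:ra_sum_dichotomy} under \ThreeSUM{} and \hyperclique{}.

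The only subtlety I anticipate is the composition of time bounds through the reductions: the linear-time construction of $I^+$ must not dominate the preprocessing budget, which it does not since $\bigO(n)$ is absorbed into $\bigO(n \log n)$ and into $\bigO(n \polylog n)$; and the constant-time bijection $\tau$ must be invoked at every access, which fits into the $\bigO(1)$ (respectively $\bigO(\polylog n)$) per-access budget. Because the reductions are weight-preserving, no reordering of answers occurs and the SUM indexing is faithfully transported in both directions, so these routine checks suffice and no further argument is needed beyond \Cref{th:ra_sum_dichotomy} and \Cref{thm:equivalence-fds-extension}.
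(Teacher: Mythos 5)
Your proposal is correct and follows essentially the same route as the paper, which derives \Cref{th:fd_sum_dichotomy} in one step from the closure of SUM-tractability under weight-preserving exact reductions together with \Cref{thm:equivalence-fds-extension} and the FD-free dichotomy of \Cref{th:ra_sum_dichotomy}. You merely spell out the two directions and the timing bookkeeping (plus the useful observation that the FD-extension preserves self-join-freeness) that the paper leaves implicit.
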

 \begin{theorem}
Let $Q$ be a CQ with unary FDs $\Delta$.
\begin{itemize}
    \item If $Q^+$ is free-connex and $\mhfree(Q^+) \leq 2$,
    then selection for $Q$ by SUM is possible in 
    $\Comp{1}{n \log n}$.
    \item Otherwise, if $Q$ is also self-join-free, 
    then selection for $Q$ by SUM is not possible in %
    $\Comp{1}{n \polylog n}$,
    assuming \ThreeSUM{}, \hyperclique{}, and \seth{}.
\end{itemize}
\end{theorem}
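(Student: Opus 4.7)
The plan is to follow the exact template used for \Cref{th:fd_sum_dichotomy} in the previous subsection, simply replacing the appeal to the direct-access dichotomy (\Cref{th:ra_sum_dichotomy}) by the selection dichotomy (\Cref{thm:sum-selection-dicotomy}). The key leverage is \Cref{thm:equivalence-fds-extension}, which provides weight-preserving exact reductions in both directions between $Q$ with $\Delta$ and $Q^+$ without FDs.

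First, I would verify that the class of (query, FD-set) pairs admitting selection by SUM in $\Comp{1}{n \log n}$ is closed under weight-preserving exact reductions, by the same observation already implicit in the direct-access case. Indeed, given a weight-preserving bijection $\tau$ between the answer sets, the $k$-th smallest answer on one side corresponds via $\tau$ to the $k$-th smallest on the other; the linear-time database construction and constant-time evaluation of $\tau$ fit inside the $\bigO(n \log n)$ budget. The same closure holds for the intractability side (if selection were doable in $\Comp{1}{n \polylog n}$ on the reduced instance, it would also be doable on the source instance within the same asymptotic bound). Combined with \Cref{thm:equivalence-fds-extension}, this yields that selection by SUM is tractable for $Q$ with $\Delta$ if and only if it is tractable for $Q^+$ without FDs.

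Next, I would invoke \Cref{thm:sum-selection-dicotomy} applied to $Q^+$ without FDs. The positive side is then immediate: if $Q^+$ is free-connex and $\mhfree(Q^+) \le 2$, selection for $Q^+$ by SUM is in $\Comp{1}{n \log n}$, which transfers to $Q$ with $\Delta$ via the reduction. For the negative side, I need $Q^+$ to be self-join-free whenever $Q$ is, in order to apply the hardness part of \Cref{thm:sum-selection-dicotomy}. This follows from a direct inspection of \Cref{def:fd-extension}: the two extension operations only extend existing atoms with additional attributes and promote existential variables to free; neither step introduces a duplicate relation symbol. Hence self-join-freeness is preserved, and we obtain the stated lower bound under \ThreeSUM{}, \hyperclique{}, and \seth{}.

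The proof is essentially mechanical; the main (modest) obstacles are bookkeeping: verifying that self-join-freeness propagates through the FD-extension and that $\mhfree$ is correctly evaluated on the extended query rather than the original. No new technical machinery is needed beyond what has already been established in \Cref{thm:equivalence-fds-extension,thm:sum-selection-dicotomy}, and the resulting statement is an immediate corollary, exactly in parallel with how \Cref{th:fd_sum_dichotomy} is derived from \Cref{th:ra_sum_dichotomy}.
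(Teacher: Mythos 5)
Your proposal is correct and follows essentially the same route as the paper: the paper likewise observes that tractability of selection by SUM is closed under weight-preserving exact reductions and then invokes \Cref{thm:equivalence-fds-extension} to transfer the dichotomy of \Cref{thm:sum-selection-dicotomy} from $Q^+$ without FDs to $Q$ with $\Delta$. Your explicit check that the FD-extension preserves self-join-freeness is a detail the paper leaves implicit, but it is the right thing to verify.
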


\subsection{Lexicographic Orders}

We now move on to lexicographic orders, where we also provide dichotomies for unary FDs.
The analysis gets more intriguing compared to SUM since 
the FDs may interact with the given lexicographic order in non-trivial ways
that have not been explored by any previous work we know of.
Like before, we use as our main tool an extension of the query according to the FDs and show an equivalence between the extension and the original problem
with respect to tractability.
The key difference now is that the extension may also \emph{reorder} the variables in the lexicographic order according to the FDs.

\subsubsection{Tractability Results}

Our positive results rely on two ingredients.
The first ingredient is to have a reduction that preserves precisely the same lexicographic order that we begin with.
This is necessary because if the query answers after the reduction follow a different lexicographic order,
then we cannot conclude anything about the position of a query answer in the original problem.
Thus, we define a stricter notion of an exact reduction.

\begin{definition}[Lex-preserving exact reduction]
Consider a CQ $Q$ and FDs $\Delta$, and a CQ $Q'$ and FDs $\Delta'$ such that the free variables of $Q$ are also free variables in $Q'$.
An exact reduction via a bijection $\tau$ from $Q$ and $\Delta$ to $Q'$ and $\Delta'$ is called \emph{lex-preserving} if, for every partial lexicographic order $\lex$ for $Q$
and for all query answers $q_1, q_2$ of $Q'$,
we have that $q_1 \prec' q_2$ iff $\tau(q_1) \preceq \tau(q_2)$ 
where $\preceq$ and $\preceq'$ are orders implied by $\lex$.
\end{definition}

In a lex-preserving exact reduction, we have the guarantee that even though
the query answers might not be exactly the same,
those that are in a 1-1 correspondence will be placed in the same position when ordered.
Thus, such a reduction allows us to solve direct access or selection on a different problem
and translate the answers back to the original problem.
Analogously to weight-preserving exact reductions for SUM,
our notions of tractability for lexicographic orders
are preserved under lex-preserving exact reductions.

Now notice that the forward reduction from $Q$ with $\Delta$ to $Q^+$ with $\Delta^+$
in \Cref{thm:fds-known} is a lex-preserving exact reduction.
This is because as we argued before,
the reduction to the extension does not change the values of any of the free variables.
The following lemma is,
similarly to \Cref{thm:fds-known},
a slight generalization of the result of
Carmeli and Kr{\"{o}}ll~\cite{DBLP:journals/mst/CarmeliK20},
this time suited to lexicographic orders.

\begin{lemma}\label{thm:fds-known-lex}
Let $Q$ be a self-join free CQ 
with FDs $\Delta$.
There is a lex-preserving exact reduction from $Q$ with $\Delta$ to $Q^+$ with $\Delta^+$.
\end{lemma}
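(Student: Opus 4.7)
The plan is to take the forward reduction from $Q$ with $\Delta$ to $Q^+$ with $\Delta^+$ that was already constructed for \Cref{thm:fds-known} and argue that it happens to be lex-preserving as well. Recall how that reduction works: given an instance $I$ satisfying $\Delta$, we build $I^+$ by widening each atom $S(\mathbf{Z})$ with any variable $y$ that is functionally forced into it by a unary FD, filling in the new column with the unique $y$-value determined by the antecedent in $\mathbf{Z}$. The bijection $\tau \colon Q^+(I^+) \to Q(I)$ is then simply the projection that strips from each answer of $Q^+$ the free variables newly added during the extension; every original free variable in $\free(Q)$ retains exactly its value under $\tau$.

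This identity on original free-variable values is the only fact I need. Since $L$ is a partial lexicographic order over $\free(Q)$ and $\free(Q) \subseteq \free(Q^+)$ by \Cref{def:fd-extension} (the two extension steps only add variables or broaden atoms, they never remove anything), the same sequence $L$ is a valid partial lexicographic order for $Q^+$. The comparison $\preceq'$ of two $Q^+$-answers $q_1^+, q_2^+$ examines, in the order dictated by $L$, only the values of variables in $\free(Q)$; those are precisely the values that $\tau$ preserves, so the comparison of $\tau(q_1^+)$ and $\tau(q_2^+)$ under $\preceq$ proceeds identically. Hence $q_1^+ \preceq' q_2^+$ iff $\tau(q_1^+) \preceq \tau(q_2^+)$, which is exactly the lex-preservation property.

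There is essentially no hard step here: the reduction is reused off the shelf, and lex-preservation is even easier to establish than the weight-preservation carried out in \Cref{thm:fds-known}, since it does not require any re-weighting of attribute values across variables. The only subtle point worth flagging explicitly is that $L$ really is a legal partial lexicographic order for $Q^+$, which reduces to the containment $\free(Q^+) \supseteq \free(Q)$ — immediate from the FD-extension adding free variables but never removing any. Note also that, unlike the converse direction treated in \Cref{lemma:extension-to-no-FDs}, here no reordering or repacking of domain values is needed: because we are going \emph{into} the extension, the original values on which $L$ adjudicates appear unchanged on both sides of the bijection.
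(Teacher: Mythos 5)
Your proposal is correct and matches the paper's argument exactly: the paper likewise reuses the forward reduction of \Cref{thm:fds-known} and observes that it leaves the values of all original free variables untouched, so any lexicographic order $L$ over $\free(Q)$ induces identical comparisons on both sides of the bijection. The extra point you flag — that $L$ remains a legal partial order for $Q^+$ because the FD-extension only adds free variables — is implicit in the paper's definition of a lex-preserving reduction and is a reasonable thing to make explicit.
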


Importantly, the other direction of \Cref{thm:fds-known-lex} is not true in general,
since the FD-extension may contain additional free variables that do not appear
in the original query $Q$.
Similarly, the reduction in \Cref{lemma:extension-to-no-FDs} is also not lex-preserving. 
As an example, consider the simple example of
$Q(v_1, v_2, v_3) \datarule R(v_1, v_2, v_3)$ 
with the FD $R: v_1 \rightarrow v_3$. 
Sorting the constructed instance by $\angs{v_1, v_2, v_3}$ will actually result in an ordering according to $\angs{(v_1,v_3), v_2, v_3}$
which is the same as $\angs{v_1, v_3, v_2}$.
That is precisely the reason why proving lower bounds for the case of lexicographic orders is not as straightforward as the SUM case.

Going back to the positive side, we can now use \Cref{thm:fds-known-lex} to reduce our problem to its extension.
But in order to cover all tractable cases, we need to modify the FD-extension so that the FDs are also taken into account in the lexicographic order.
In particular, if a variable $u$ is implied by a variable $v$ and $u$ comes after $v$ in the order, 
then $u$ is placed right after $v$ in the reordering.

\begin{definition}[FD-reordered extension]
Given a self-join-free CQ $Q$, 
a set of unary FDs $\fds$ and a partial lexicographic order $\lex$,
their FD-reordered extension is a CQ $\qext$, a set of unary FDs $\Delta^+$
and a partial lexicographic order $\lex^+$.
We take $\qext$ and $\Delta^+$ to be as defined in \Cref{def:fd-extension},
and
$\lex^+$ is obtained from $\lex$ by applying the following reordering step iteratively from index $i=0$ until we reach the end of the list $\lex$:
Find all variables $I(L[i])$ that are transitively implied by $\lex[i]$
and place them in consecutive indexes starting from $i+1$.
Note that $\lex$ may grow in this process to contain variables that are free in $Q^+$ though they are not free in $Q$.

\end{definition}

\begin{example}
Consider the CQ $Q(v_1, v_2, v_3, v_4) \datarule R(v_1, v_3), S(v_3, v_2), T(v_2, v_4)$ with 
the FD $R: v_1 \rightarrow v_3$ and
the lexicographic order $\lex = \angs{v_1, v_2, v_3, v_4}$.
This order contains the disruptive trio $v_1, v_2, v_3$ and is intractable for direct access according to the results of \Cref{sec:lexic}.
When applying the FD-extension, we get that $Q^+ = Q$ because $v_1$ already appears with $v_3$ in $R$.
For the FD-reordered extension, we reorder the lexicographic order into $\lex = \angs{v_1, v_3, v_2, v_4}$,
which contains no disruptive trio and is tractable for direct access.
\end{example}

We next prove two important properties for the FD-reordered extension:

\begin{lemma}\label{lem:consecutive-order}
For every variable $v$ of an order $\lex^+$ of an FD-reordered extension,
all variables transitively implied by $v$ that appear after $v$ in $\lex^+$
have to appear consecutively after $v$ in $\lex^+$.
\end{lemma}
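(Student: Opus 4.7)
The plan is to trace through the iterative construction of $L^+$ and prove the statement by induction on the processing steps, maintaining the invariant that every already-processed position carries the desired consecutive-block structure for the variables it transitively implies.

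First I would fix a variable $v$ sitting at some position $p$ in the final $L^+$ and let $J_v \subseteq I(v)$ be those variables transitively implied by $v$ that appear at positions strictly greater than $p$ in $L^+$; the goal is to show that $J_v$ occupies exactly positions $p{+}1, p{+}2, \ldots, p{+}|J_v|$. The starting observation is that when the algorithm reaches position $p$, the current variable is $v$, and by the definition of the reordering step it gathers precisely the elements of $I(v)$ that are not already placed before $v$ and puts them consecutively at positions $p{+}1$ through $p{+}|J_v|$. Hence the block structure holds the instant position $p$ is processed; the task is to show it survives all subsequent iterations.

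The main obstacle, and the heart of the argument, is stability of this block under the later reordering steps. Here I would exploit transitivity of FD implication: whenever $u \in I(v)$, we have $I(u) \subseteq I(v)$. I split into two cases. If the algorithm later processes a position $j$ with $p < j \leq p{+}|J_v|$, then the current variable $u$ lies in $J_v \subseteq I(v)$, so every element of $I(u)$ is already in $I(v)$ and therefore already resides at a position $\leq p{+}|J_v|$ (either in the block itself or at a position before $p$). Consequently the only variables moved at step $j$ are those lying in positions $\{j{+}1,\ldots,p{+}|J_v|\}$, and their reshuffling stays entirely inside the block. If instead the algorithm processes some position $j > p{+}|J_v|$ with current variable $w$, the step only relocates variables at positions greater than $j$; hence nothing inside the block moves, and no element of $I(v) \cap I(w)$ can be pulled out of the block either, because every variable in $I(v)$ is already at a position $\leq p{+}|J_v| < j$.

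Finally I would address variables that enter $L^+$ only during the construction (free variables of $Q^+$ absent from the original $L$): each such variable $y$ is first placed inside the block of the variable $v'$ whose processing brought it into the list, so the invariant for $y$'s own implied set begins to hold from the moment $y$ is itself processed, and the two-case stability argument above applies unchanged. Combining the base case (placement at step $p$) with the preservation argument yields that $J_v$ sits at positions $p{+}1,\ldots,p{+}|J_v|$ in the final $L^+$, which is exactly the claim of the lemma. The point where I expect the most care is handling the interaction between overlapping implied sets $I(v) \cap I(w)$ for distinct variables $v,w$—the transitivity inclusion $I(u)\subseteq I(v)$ for $u\in I(v)$ is precisely what prevents such overlaps from ever dragging a variable out of an already-formed block.
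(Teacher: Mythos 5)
Your proof is correct, and it rests on the same key fact as the paper's — transitivity of implication, i.e.\ $u \in I(v)$ gives $I(u) \subseteq I(v)$ — but it is organized quite differently. The paper argues by minimal counterexample on the final order: it takes the first variable $x$ between $v$ and some implied $y$ that is not implied by $v$, observes that $x$ must have been inserted there while handling some $z$ strictly between $v$ and $x$, and derives $v \rightarrow z \rightarrow x$, a contradiction. You instead run a forward induction over the execution of the reordering procedure, establishing the block at the moment $v$ is processed and then showing, by a two-case analysis on where the later step $j$ falls relative to the block, that no subsequent step can insert into, remove from, or split the block. Your version is longer but makes explicit two points the paper's proof leaves implicit: why the block is consecutive immediately after $v$'s own step (the base case), and why any would-be intruder must have been inserted by a variable that already sits inside the block and is hence itself implied by $v$ (your observation that $I(u) \subseteq I(v)$ means step $j$ introduces no new variables and only reshuffles positions within $\{j{+}1,\ldots,p{+}|J_v|\}$). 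Both arguments share the same mild reliance on the intended reading of the reordering step — that it only relocates implied variables currently after the index being processed, leaving earlier positions frozen — so neither is more vulnerable on that front.
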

\begin{proof}
The property holds at the end of the process due to the transitivity of implication.
Consider a variable $y$ implied by $v$ that appears after $v$ in $\lex^+$ 
and assume there is a variable $x$ that is in-between $v$ and $y$ but is not implied by $v$.
We can further assume that $x$ is the first variable in the order with that property.
Now, $x$ must have been inserted at that position when handling another variable $z$ that is in-between $v$ and $x$ in $\lex^+$ and that $z \rightarrow x$.
Since $x$ is the first one not implied by $v$, we have that $v \rightarrow z$
which gives us $v \rightarrow x$, contradicting our assumption.

\end{proof}

We next show that the reordering of the variables gives the same result
because of the extended FDs $\Delta^+$.
As a consequence, we can study the complexity of the query with the FD-reordering.

\begin{lemma}\label{lemma:reordering}
Given a self-join-free CQ $Q$, 
a set of unary FDs $\fds$ and a partial lexicographic order $\lex$,
for every database $I$ that satisfies $\Delta^+$,
ordering $Q^+(I)$ by $\lex$ is the same as ordering it by $\lex^+$.

\end{lemma}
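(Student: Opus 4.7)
The plan is to reduce the problem to a single key invariant: whenever a database $I$ satisfies $\Delta^+$ and two answers $q_1, q_2 \in Q^+(I)$ agree on the value of some free variable $v$, they also agree on the value of every variable transitively implied by $v$ in $\Delta^+$. This invariant is the engine that makes the reordering invisible at the level of the induced sort order.

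To establish the invariant, I would argue by induction on the length of the implication chain. If $R: v \rightarrow u$ is in $\Delta^+$, then any homomorphism witnessing $q_i$ must send the atom of $R$ (which, by construction of the FD-extension in \Cref{def:fd-extension}, contains both $v$ and $u$) to a tuple in $R^I$; since $I$ satisfies $R: v \rightarrow u$, the $u$-value is determined by the $v$-value. By the second extension step, $u$ is free in $Q^+$ whenever $v$ is, so this agreement is visible on query answers, not just on homomorphisms. Iterating along a chain $v \rightarrow u_1 \rightarrow \cdots \rightarrow y$ (each link an FD of $\Delta^+$) propagates agreement all the way to $y$.

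Next, I would argue that each single reordering step preserves the induced sort order and then conclude by induction on the number of steps. Fix the step that, at position $i$, pulls the set $I(v)$ of variables transitively implied by $v := \lex^+[i]$ into consecutive positions starting at $i+1$, and let $\preceq$, $\preceq'$ be the orders before and after. For two answers $q_1, q_2$, if they disagree at some position $\le i$, the common prefix decides both orders identically. Otherwise they agree on $v$; by the invariant, they agree on every variable in $I(v)$, so the newly inserted block of comparisons decides nothing and can be deleted without changing the outcome. By \Cref{lem:consecutive-order}, the variables still to be compared after the block are exactly the non-implied variables of $\lex^+$ that came after position $i$ in the original, in the same relative order. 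Hence $\preceq$ and $\preceq'$ produce the same comparison result on $(q_1, q_2)$.

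The main obstacle is making sure that multiple reordering steps compose cleanly: a later step could in principle reshuffle variables pulled in by an earlier step. \Cref{lem:consecutive-order} is exactly the structural statement that rules this out, showing that the final $\lex^+$ places implied variables in a single consecutive block right after their ``ancestor'' in $\lex^+$, so the per-step argument above can be applied sequentially. Iterating across all reordering steps, beginning with $\lex^+ = \lex$ and ending with the final $\lex^+$, yields that ordering $Q^+(I)$ by $\lex$ and by $\lex^+$ coincide.
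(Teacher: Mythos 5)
Your proof is correct and takes essentially the same approach as the paper: the paper's own proof is the single (much terser) observation that once the value of $v$ is set, any variable implied by $v$ has at most one possible value, so its exact position after $v$ (and whether it is free) cannot influence the answer ordering. Your agreement invariant and the per-step induction over the reordering are a rigorous elaboration of exactly that observation.
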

\begin{proof}
Once the value for a variable $v$ is set, a variable implied by $v$ can have at most one possible value, 
so as long as it comes after $v$ in the order,
its exact position or whether it is free cannot influence the answer ordering.
Therefore, the reordered extension is equivalent to the original extension.
\end{proof}

We now have what we need in order to show that if the reordered extension has a tractable structure,
then we can conclude tractability for the original problem.
Specifically, if the CQ $Q^+$ of the FD-reordered extension is free-connex, $L^+$-connex and has no disruptive trio, 
then we know it is tractable with respect to direct access by \Cref{thm:partial-dichotomy}. 
\Cref{lemma:reordering} shows that direct access to the extension by $L^+$
is the same as direct access by $L$, which is the order that we want. 
Finally, we use the fact that the reduction given in \Cref{thm:fds-known} preserves lexicographic orders to conclude that, 
given an input to $Q$, we can construct an FD-reordered extension
where tractable direct access for $Q^+$ by $L^+$ 
will give us tractable direct access for $Q$ by $L$.
Following the same process for selection and using \Cref{thm:lex-selection-dicotomy},
we conclude that if the CQ $Q^+$ is free-connex, then
selection for $Q$ by $L$ is tractable.

\subsubsection{Intractability Results}

We begin by some negative results that can easily be inferred from the results we have already proved in this paper or by
past work.

\begin{lemma}
Let $Q$ be a self-join-free CQ with unary FDs $\Delta$.
If $Q^+$ is not free-connex then for any ranking function,
direct access for $Q$ is not possible in $\Comp{n \polylog n}{\polylog n}$ assuming \sparseBMM{} and \hyperclique{},
and selection for $Q$ is not possible in $\Comp{1}{n \polylog n}$ assuming \seth{} and \hyperclique{}.
\end{lemma}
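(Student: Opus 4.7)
The plan is to reduce from the already-established hardness of $Q^+$ (without FDs) to the problem on $Q$ with $\Delta$. The key tool is the backward direction of \Cref{thm:equivalence-fds-extension}, which gives a weight-preserving exact reduction from $Q^+$ without FDs to $Q$ with $\Delta$. While weight-preservation is the named property, what we actually exploit is weaker: the linear-time construction of a valid $\Delta$-satisfying instance $I$ from any input $I^+$, together with a constant-time-computable bijection $\tau \colon Q(I) \to Q^+(I^+)$ on the answer sets.

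Given this reduction, I would argue by contraposition. Suppose we had an algorithm for $Q$ with $\Delta$ that achieves direct access in $\Comp{n\polylog n}{\polylog n}$ under some ranking function $\prec$ on $Q(I)$. Via $\tau$, $\prec$ induces a ranking function $\prec^+$ on $Q^+(I^+)$ defined by $q^+_1 \prec^+ q^+_2 \iff \tau^{-1}(q^+_1) \prec \tau^{-1}(q^+_2)$. To access the $k$-th answer of $Q^+(I^+)$ under $\prec^+$, we first apply the reduction to build $I$ in linear time, then invoke the direct-access structure for $Q$ to retrieve the $k$-th answer of $Q(I)$ under $\prec$, and finally apply $\tau$ in constant time. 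This yields $\Comp{n\polylog n}{\polylog n}$ direct access for $Q^+$ without FDs under the ordering $\prec^+$. The identical argument applied without a preprocessing/access split yields selection for $Q^+$ in $\Comp{1}{n\polylog n}$ whenever selection for $Q$ with $\Delta$ admits the same bound.

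Next, I would note that $Q^+$ is self-join-free whenever $Q$ is: \Cref{def:fd-extension} only augments the arities of existing atoms and extends the free-variable set, never introducing a second copy of a relational symbol. Since $Q^+$ is, by assumption, not free-connex and is self-join-free, \Cref{theorem:known-access} rules out direct access in $\Comp{n\polylog n}{\polylog n}$ \emph{in any order}, assuming \sparseBMM{} and \hyperclique{}; likewise, \Cref{lem:lex-selection-intractability} rules out selection in $\Comp{1}{n\polylog n}$ \emph{by any ranking function}, assuming \seth{} and \hyperclique{}. Both statements quantify over all orders/ranking functions, so they cover the induced $\prec^+$ regardless of how we picked $\prec$ on $Q$. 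This yields the desired contradiction in each case.

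I expect no serious obstacle beyond spelling out that (i) the reduction in \Cref{thm:equivalence-fds-extension} really provides the bijection and linear-time instance construction needed to transport direct-access and selection algorithms backward, and (ii) the existing lower bounds for $Q^+$ are stated strongly enough to apply to the induced order $\prec^+$. Both are handled by quoting \Cref{theorem:known-access} and \Cref{lem:lex-selection-intractability} verbatim, since each explicitly allows an arbitrary order or ranking function.
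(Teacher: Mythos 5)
Your proof is correct and follows essentially the same route as the paper's: both hinge on the backward exact reduction of \Cref{thm:equivalence-fds-extension} to transfer a hypothetical algorithm for $Q$ with $\Delta$ into one for $Q^+$ without FDs, and then appeal to the order-agnostic hardness of non-free-connex self-join-free CQs. The only cosmetic difference is that the paper unfolds that hardness explicitly through enumeration (for direct access) and through counting via \Cref{lem:lex-selection-to-counting} (for selection), whereas you cite the already-packaged statements \Cref{theorem:known-access} and \Cref{lem:lex-selection-intractability}, which is equally valid since both quantify over all orders and ranking functions.
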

\begin{proof}
For direct access, the impossibility is implied by the hardness of enumeration.
If we could have direct access for $Q$, then we can also have enumeration with the same time bounds.
Then, by the exact reduction of \Cref{thm:equivalence-fds-extension},
we would be able to enumerate the answers to the non-free-connex CQ $Q^+$ with no FDs with
quasilinear preprocessing and polylogarithmic delay,
which is known to contradict \sparseBMM{} or \hyperclique{}~\cite{bagan2008computing,bb:thesis}.

For selection, we use \Cref{lem:lex-selection-to-counting}
together with the simple observation that an exact reduction preserves the number of query answers.
Indeed, by \Cref{thm:equivalence-fds-extension} we have that counting the answers to $Q$ with $\Delta$
can give us the count of query answers of $Q^+$ with no FDs.
Since a logarithmic number of selections can be used to find the latter,
the count of answers of the non-free-connex CQ $Q^+$ can be found in quasilinear time,
contradicting \seth{} (if $Q^+$ is acyclic) or \hyperclique{} (if $Q^+$ is cyclic).
\end{proof}

It is left to handle the cases that the reordering of the extension is free-connex acyclic but has a disruptive trio or is not $L^+$-connex.
The case that the extension is not $L^+$-connex can be shown using a reduction from enumeration for a non-free-connex CQ (\Cref{lemma:enum-prefix-using-access}) as we did in \Cref{sec:partial-intractable}.

\begin{lemma}
Let $Q$ be a self-join-free CQ with unary FDs $\Delta$, and let $\lex$ be a partial lexicographic order.
If $Q^+$ is acyclic but not $L^+$-connex, then direct access for $Q$ by $\lex$ is not possible in $\Comp{n \polylog n}{\polylog n}$ assuming \sparseBMM{}.
\end{lemma}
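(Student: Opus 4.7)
The plan is to contradict \sparseBMM{} by converting a hypothetical efficient direct-access algorithm for $Q$ by $L$ into an efficient enumeration algorithm for a self-join-free acyclic non-free-connex CQ. Assume for contradiction that direct access for $Q$ by $L$ over $\Delta$-satisfying databases is within $\Comp{n \polylog n}{\polylog n}$. I first transfer this tractability to direct access for $Q^+$ by $L^+$ without FD constraints via two lex-preserving reductions. The first is the converse direction of \Cref{thm:fds-known-lex}, already implicit in the ``both directions'' part of \Cref{thm:fds-known}: given an instance for $Q^+$ satisfying $\Delta^+$, restricting each relation to the attributes of $Q$ yields an instance for $Q$ satisfying $\Delta$, and the values of the extra free variables of $Q^+$ are recovered through the FDs; since each such extra variable is placed in $L^+$ immediately after its determinant in $L$, \Cref{lemma:reordering} guarantees that ordering $Q^+$ by $L^+$ on $\Delta^+$-satisfying inputs agrees with ordering $Q$ by $L$. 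The second is the lex-preserving analogue of \Cref{lemma:extension-to-no-FDs}: the packing construction replaces each value $v$ with the concatenation $(v,I_v)$, and because $I_v$ is functionally determined by $v$, the lex order on packed values is consistent with the lex order on their first components. Composing these two with the assumed algorithm yields direct access for $Q^+$ by $L^+$ in $\Comp{n \polylog n}{\polylog n}$ over unrestricted inputs.

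Next I apply \Cref{lemma:enum-prefix-using-access} to $Q^+$ and $L^+$. Since $Q^+$ is acyclic but not $L^+$-connex, there exists an $L^+$-path $v_1, z_1, \ldots, z_k, v_2$ in $\calH(Q^+)$. Letting $L'$ be the prefix of $L^+$ ending at $v_2$, the vertices $z_1, \ldots, z_k$ are absent from $L'$ (being already absent from $L^+$), so $v_1, z_1, \ldots, z_k, v_2$ remains an $L'$-path. Hence the CQ $Q''$ obtained from $Q^+$ by taking exactly the variables of $L'$ as free is self-join-free (inherited from $Q$, since the FD-extension does not introduce self-joins), acyclic, and not free-connex. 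Then \Cref{lemma:enum-prefix-using-access} produces an enumeration algorithm for $Q''$ within $\Comp{n \polylog n}{\polylog n}$, contradicting the known lower bound for enumerating self-join-free acyclic non-free-connex CQs under \sparseBMM{}~\cite{bagan2008computing}.

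The main obstacle is establishing the two lex-preserving reductions cleanly, since the paper states them only in weight-preserving form (\Cref{lemma:extension-to-no-FDs}) or only in one direction (\Cref{thm:fds-known-lex}). The crucial observation, captured by \Cref{lemma:reordering}, is that every attribute added by the FD-extension is functionally determined by a strictly earlier attribute in $L^+$; hence inserting such attributes into an instance, or packing them into the values of their determinants, never alters the relative order of any two query answers under the lex order. Once this is formalized, both reductions follow the same templates as the existing weight-preserving and single-direction versions.
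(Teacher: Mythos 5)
Your proof reaches the right conclusion, but by a different route than the paper takes for this particular lemma, and the justification of your key step is stated incorrectly even though the underlying claim is true. The paper's proof is shorter: it applies \Cref{lemma:enum-prefix-using-access} with $L' = L^+$ (the entire order, which is closed under implication by construction of the reordered extension), obtaining enumeration of $Q^+$ with free variables $L^+$ over $\Delta^+$-satisfying databases; since $L'$ is closed under implication, that query is an FD-extension of itself, so the order-oblivious \Cref{lemma:extension-to-no-FDs} removes the FDs at the \emph{enumeration} level, where the order of the answers is irrelevant and only the answer set matters. This sidesteps any need for a lex-preserving FD-elimination. You instead eliminate the FDs at the \emph{direct-access} level, which forces you to argue that the packing reduction is lex-preserving with respect to $L^+$ --- essentially the machinery the paper reserves for the subsequent, harder disruptive-trio lemma, where no implication-closed prefix ending at $v_2$ need exist. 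Your approach is more general (it would cover that case too), but it is overkill here; also, once the FDs are gone, taking $L'$ to be the prefix ending at $v_2$ is unnecessary, since $L'=L^+$ already witnesses that the projected query is not free-connex.

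The caveat concerns your sentence ``because $I_v$ is functionally determined by $v$, the lex order on packed values is consistent with the lex order on their first components.'' In the unrestricted source instance, $I_v$ is \emph{not} determined by $v$ --- that is the whole point of the construction --- so two answers can agree on $v$ yet disagree on some $y\in I_v$, and then the comparison of packed values is decided by the $I_v$ components rather than the first component. What actually makes the packing lex-preserving with respect to $L^+$ is \Cref{lem:consecutive-order}: the variables of $I_v$ that follow $v$ occur consecutively right after $v$ in $L^+$ (and must be concatenated in exactly that order), so resolving the comparison inside the packed $v$-value yields the same outcome as continuing the comparison along $L^+$ with unpacked values. \Cref{lemma:reordering}, which you cite for this, speaks only about $\Delta^+$-satisfying instances and does not cover the packed reduction from arbitrary instances. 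Your step 2a (the converse reduction from $Q^+$ with $\Delta^+$ to $Q$ with $\Delta$, using that every added free variable is determined by an earlier variable of $L$) and your step 3 are fine.
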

\begin{proof}
Using \Cref{lemma:enum-prefix-using-access}, it is enough to find a prefix $L'$ of $L^+$ that is closed under implication such that the extension $Q^+$ is not $L'$-connex. We simply take $L'=L^+$. Then, we conclude that there is efficient enumeration for the acyclic but not free-connex $Q^+$ with $L'$ as free variables. Since $L'$ is closed under implication, $Q^+$ is an extension of itself, so we can use \Cref{lemma:extension-to-no-FDs} to conclude that there exists efficient enumeration to $Q^+$ with $L'$ as free variables even without FDs. This is a contradiction as it is not free-connex.
\end{proof}

The case where the reordered extension has a disruptive trio $v_1,v_2,v_3$ is slightly more intricate
as we cannot directly use \Cref{lemma:extension-to-no-FDs}.
One might hope that, as we did in the proof of \Cref{lemma:hardness}, 
we would be able to take a prefix of $L^+$ that ends in $v_2$
and then apply \Cref{lemma:extension-to-no-FDs}.
However, that is not always possible here because we have the additional restriction that the prefix we pick has to be closed under implication.
This is required so that the CQ we obtain when we restrict the free variables to the prefix is a valid FD-extension. 
Unfortunately, a prefix that includes $v_2$ but not $v_3$ and is closed under implication does not necessarily exist.
That is the case when some variable implies both $v_2$ and $v_3$.

\begin{example}
Consider the CQ $Q(v_1, v_2) \datarule R(v_1, v_3), S(v_3, v_2)$ with 
the FD $S: v_2 \rightarrow v_3$ and
the lexicographic order $\lex = \angs{v_1, v_2}$.
The extended reordering is $\lex^+ = \angs{v_1, v_2, v_3}$ which contains the disruptive trio $v_1, v_2, v_3$.
To reuse our previous approach, we would want to claim that using lexicographic direct access to $Q^+$, we can enumerate the CQ with only $v_1,v_2$ as free variables (which happens to be in this case the same as $Q$ that we started with).
However, this is not a contradiction because $Q$ is not known to be hard for enumeration as it has FDs and it is not an extension ($v_2$ implies $v_3$ while $v_2$ is free and $v_3$ is existential). In fact, we cannot find any prefix $L'$ of $\lex^+$ that is closed under implication such that $Q^+$ is not $L'$-connex.
To circumvent this issue, we encode the enumeration of $Q$ without FDs into the extension by combining the binary search approach from  \Cref{lemma:enum-prefix-using-access} with the concatenation reduction from \Cref{lemma:extension-to-no-FDs}.
The difference is that before we used binary search to enumerate a prefix of the free variables, and now this prefix might stop in the middle of a variable with concatenated values. Thus, $v_2$ will be assigned the values $(v_2,v_3)$, and we will use binary search to skip over the $v_3$ values.
\end{example}

\begin{lemma}
Let $Q$ be a self-join-free CQ with unary FDs $\Delta$, and let $\lex$ be a partial lexicographic order.
If $Q^+$ is acyclic and $L^+$ contains a disruptive trio in $Q^+$, then direct access for $Q$ by $\lex$ is not possible in $\Comp{n \polylog n}{\polylog n}$ assuming \sparseBMM{}.
\end{lemma}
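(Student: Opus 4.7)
The plan is to adapt the hardness argument of \Cref{lemma:enum-nonfc-using-access} by combining it with the packing construction of \Cref{lemma:extension-to-no-FDs}, following the template sketched in the example that immediately precedes the lemma. Let $L'$ be the prefix of $\lex^+$ ending at $v_2$. Since $v_1,v_2,v_3$ is a disruptive trio in $\lex^+$, we have $v_1,v_2\in L'$ and $v_3\notin L'$, and the chordless path $v_1-v_3-v_2$ in $\calH(Q^+)$ is an $L'$-path. Define $\tilde{Q}$ as $Q^+$ viewed as a CQ with free variables $L'$ and no FDs. Then $\tilde{Q}$ is self-join-free, acyclic (inherited from $Q^+$), and not free-connex, so the known enumeration hardness of such CQs~\cite{bagan2008computing} implies that enumerating $\tilde{Q}$ in $\Comp{n\polylog n}{\polylog n}$ contradicts \sparseBMM{}. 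It therefore suffices to reduce enumeration of $\tilde{Q}$ to the assumed direct access for $Q$ by $\lex$.

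Given an input $I_{\tilde{Q}}$ for $\tilde{Q}$ (whose schema is that of $Q^+$), I would first apply the packing of \Cref{lemma:extension-to-no-FDs}: in every tuple of every relation, replace each variable $v$'s value by the concatenation of the values of $v$ and of all variables that $v$ transitively implies in $\Delta^+$. This packing is well-defined by the defining property of the FD-extension, it produces an instance $I^+$ for $Q^+$ satisfying $\Delta^+$, and it comes equipped with a bijection to $Q^+(I_{\tilde{Q}})$ given by unpacking. Projecting $I^+$ onto the schema of $Q$ (dropping the attributes added by the FD-extension) then yields an instance $I$ for $Q$, and the FDs in $\Delta$ are satisfied because the packed value of any variable deterministically encodes the value of every variable in its closure. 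I would order packed domain values lexicographically by components so that the comparison under $\lex$ on packed tuples of $Q(I)$ coincides, via the bijection, with the comparison under $\lex^+$ on unpacked tuples of $Q^+(I^+)$: the packed value of $v$ encodes exactly those variables inserted immediately after $v$ in $\lex^+$ by the reordering rule, so stepping through the packed components of $v$ matches stepping through the $\lex^+$-positions of $v$ and its consecutively following implied variables.

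Enumeration of $\tilde{Q}(I_{\tilde{Q}})$ then proceeds as follows: after the $\bigO(n\polylog n)$ preprocessing for direct access on $Q$ by $\lex$, starting at index $k=0$, I access the $k$-th answer, unpack it, project onto $L'$ to emit one output tuple, and use binary search on the index space (with $\bigO(\log n)$ direct-access invocations) to locate the smallest $k'>k$ whose unpacked $L'$-projection differs. Correctness of the binary search requires that, under the resulting total order, the answers sharing a common unpacked $L'$-projection form a contiguous block. This holds because $L'$ is a prefix of $\lex^+$ and, by \Cref{lem:consecutive-order}, the variables of $L^+$ transitively implied by $L'$-variables (in particular any such variable that is packed inside $v_2$) occupy consecutive positions immediately after their implying variable; these implied positions are pinned down by the unpacked $L'$-projection, and the remaining $\lex^+$-positions lie strictly after the entire $L'$-block and so cannot separate answers that agree on the $L'$-projection. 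Combining the $\bigO(n\polylog n)$ preprocessing with the $\polylog n$ per-output cost yields enumeration of $\tilde{Q}$ in $\Comp{n\polylog n}{\polylog n}$, contradicting \sparseBMM{}.

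The main obstacle is the interplay between packing and ordering. Because $L'$ need not be closed under implication in $\Delta^+$ (as the preceding example stresses, this is precisely why the prefix argument of \Cref{lemma:enum-nonfc-using-access} cannot be applied verbatim here), two consecutive answers in the access order may share the same unpacked $L'$-projection while differing in the implied-variable components that were packed into their $L'$-values; the binary search must therefore be based on the unpacked projection, and one must verify carefully that the order induced by $\lex$ on the packed instance $I$ coincides with the $\lex^+$-order on the unpacked instance exactly as required to cluster equal unpacked $L'$-projections into contiguous blocks.
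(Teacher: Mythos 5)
Your overall strategy is the one the paper uses: take the prefix $L'$ of $\lex^+$ ending at $v_2$, observe that $Q^+$ with free variables $L'$ is acyclic but not free-connex (hence hard to enumerate under \sparseBMM{}), pack values as in \Cref{lemma:extension-to-no-FDs} with the concatenation ordered by $\lex^+$, and enumerate distinct unpacked $L'$-projections by binary search as in \Cref{lemma:enum-prefix-using-access}. The construction and the reduction scaffolding are all right.

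There is, however, a gap in the one step you yourself flag as the crux. You justify block-contiguity by saying that the implied positions ``are pinned down by the unpacked $L'$-projection.'' That is false precisely in the situation this lemma is designed for: the source instance of $\tilde{Q}$ does \emph{not} satisfy the FDs, so if $v_2$ implies $v_3$ (as in the paper's motivating example), two answers can agree on the unpacked values of every $L'$-variable yet differ on the $v_3$-component packed inside $v_2$. Contiguity therefore cannot come from the values being determined; it has to come from the \emph{order} in which components are compared. What is actually needed — and what the paper proves as a separate ``stability'' claim derived from \Cref{lem:consecutive-order} — is that in the flattened comparison sequence induced by $\lex$ on the packed instance, the first occurrence of every variable that lies after $v_2$ in $\lex^+$ comes after the first occurrences of all variables up to and including $v_2$. (The dangerous case is a variable $b >_{\lex^+} v_2$ packed inside some $u \in L'$ with $u <_{\lex^+} v_2$; one must argue via \Cref{lem:consecutive-order} that then $v_2$ is also implied by $u$ and hence appears \emph{earlier} inside $u$'s packed value, so $b$'s component only refines the order within a block rather than splitting one.) Without this argument the binary search over ``next distinct unpacked $L'$-projection'' is not justified. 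Your final paragraph correctly identifies that this is the delicate point, but the verification you offer for it is the wrong one, so the proof as written does not close.
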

\begin{proof}
Consider a disruptive trio $v_1,v_2,v_3$ in $Q^+$ with respect to $L^+$.
Let $L'$ be the prefix of $L^+$ ending in $v_2$, and let $Q'$ be the query with the body of $Q^+$ and the free variables $L'$.
As $v_1,v_3,v_2$ is an $L'$-path, we know that $Q'$ is acyclic but not free-connex, and so it cannot be enumerated efficiently without FDs assuming \sparseBMM{}.

We first claim that the reordering $L^+$ is stable with respect to the first occurrences of implying variables. More precisely, let $a$ and $b$ be variables in $L^+$ such that $a$ appears before $b$ in $L^+$.
We claim that the first variable implying $b$ does not appear before the first variable implying $a$.
Indeed, consider the first variable $v_b$ implying $b$. If $v_b$ appears before $a$, by \Cref{lem:consecutive-order}, $v_b$ also implies $a$. So, the first variable that implies $a$ is $v_b$ or a variable before it.
As a consequence, due to the reordering, the first variable implying a value that appears after $v_2$ appears after all first variables implying values before (and including) $v_2$.

We can now claim that we can enumerate the answers to $Q'$ without FDs using lexicographic direct access to $Q^+$ with FDs.
We use the same construction as we did in \Cref{lemma:extension-to-no-FDs} by assigning each variable a concatenation of the variables it implies, except that now we need to be careful about the order in which we concatenate: we start with any variables in $L^+$, ordered by $L^+$. The constructed database satisfies the FDs.
It is only left to use binary search, similarly to \Cref{lemma:enum-prefix-using-access}, in order to enumerate the distinct values of the variables of $L'$.
Due to the previous paragraph, we know that these appear as a prefix, before the first value of a variable after $v_2$.
\end{proof}

The results of this section regarding lexicographic orders are summarized as follows.

\begin{theorem}\label{th:fds_lex_direct}
Let $Q$ be a CQ with unary FDs $\Delta$ and $\lex$ be a partial lexicographic order.
\begin{itemize}
    \item If $Q^+$ is free-connex and $\lex^+$-connex and does not have a disruptive trio with respect to $\lex^+$, 
    then direct access for $Q$ by $\lex$ is possible 
    in $\Comp{n \log n}{\log n}$.
    \item Otherwise, if $Q$ is also self-join-free, then direct access for $Q$ by $\lex$ is not possible 
    in $\Comp{n \polylog n}{\polylog n}$,
    assuming \sparseBMM{} and \hyperclique{}.
\end{itemize}
\end{theorem}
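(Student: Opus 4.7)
The plan is to assemble the theorem from the three lemmas that immediately precede it, together with the reduction machinery of \Cref{thm:fds-known-lex} and \Cref{lemma:reordering}. For the positive direction, I would start from an input database $I$ satisfying $\Delta$, apply the lex-preserving exact reduction from $Q$ with $\Delta$ to $Q^+$ with $\Delta^+$ given by \Cref{thm:fds-known-lex} to obtain in linear time a database $I^+$ satisfying $\Delta^+$, and then invoke \Cref{lemma:reordering} to argue that accessing the $k$-th answer of $Q^+(I^+)$ by $\lex$ is the same as accessing the $k$-th answer by $\lex^+$. Since by assumption $Q^+$ is free-connex, $\lex^+$-connex, and has no disruptive trio with respect to $\lex^+$, \Cref{thm:partial-dichotomy} yields a $\Comp{n \log n}{\log n}$ direct-access algorithm for $Q^+$ by $\lex^+$. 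Composing with the constant-time bijection of the exact reduction gives the bound for $Q$ by $\lex$.

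For the negative direction, I would do a case analysis on which of the three conditions on $Q^+$ and $\lex^+$ fails. If $Q^+$ is cyclic, hardness follows from \hyperclique{} via the exact reduction of \Cref{thm:equivalence-fds-extension}, which turns any efficient direct access for $Q$ with $\Delta$ into an enumeration procedure for the cyclic $Q^+$ without FDs and hence Boolean evaluation, contradicting \hyperclique{}. If $Q^+$ is acyclic but not free-connex, I would combine \Cref{thm:equivalence-fds-extension} with \Cref{lemma:enum-prefix-using-access}, noting that efficient direct access for $Q$ by $\lex$ gives efficient enumeration of some non-free-connex CQ without FDs, contradicting \sparseBMM{}. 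The remaining two subcases, namely $Q^+$ free-connex but not $\lex^+$-connex, and $Q^+$ free-connex with a disruptive trio in $\lex^+$, are exactly the two lemmas proved in the preceding subsection, and they plug in directly through \Cref{thm:fds-known-lex} plus \Cref{lemma:reordering}.

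The delicate step, as the discussion in the preceding pages stresses, is the disruptive-trio subcase: unlike the setting without FDs, one cannot simply take the prefix of $\lex^+$ ending at $v_2$ and apply the enumeration lower bound, because that prefix need not be closed under the FDs and so the restricted CQ need not be an FD-extension of itself. The fix that I would follow is the one foreshadowed in the excerpt's example: combine the concatenation construction from \Cref{lemma:extension-to-no-FDs} (packing each variable with all variables it implies in the order they appear in $\lex^+$) with the binary-search trick from \Cref{lemma:enum-prefix-using-access}, using \Cref{lem:consecutive-order} to guarantee that the distinct values of the variables preceding and including $v_2$ form a prefix that can be enumerated by skipping over tail coordinates in logarithmic time per distinct value. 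This yields enumeration for a self-join-free acyclic non-free-connex CQ without FDs, contradicting \sparseBMM{}.

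Finally, I would collect the four negative subcases and the single positive construction into the dichotomy statement, observing that tractability is preserved under lex-preserving exact reductions so no additional bookkeeping is needed beyond what is already established in \Cref{thm:fds-known-lex,lemma:reordering,thm:partial-dichotomy}.
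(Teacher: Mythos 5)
Your proposal is correct and follows essentially the same route as the paper: the positive side composes the lex-preserving reduction of \Cref{thm:fds-known-lex} with \Cref{lemma:reordering} and \Cref{thm:partial-dichotomy}, and the negative side is exactly the paper's case analysis (cyclic / not free-connex via \Cref{thm:equivalence-fds-extension}, not $\lex^+$-connex via \Cref{lemma:enum-prefix-using-access} with $L'=L^+$, and the disruptive-trio case via the concatenation-plus-binary-search argument relying on \Cref{lem:consecutive-order}). You also correctly identify the disruptive-trio subcase as the one place where the prefix need not be closed under implication, which is precisely the subtlety the paper's final lemma addresses.
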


\begin{theorem}\label{th:fds_lex_selection}
Let $Q$ be a CQ with unary FDs $\Delta$ and $\lex$ be a partial lexicographic order.
\begin{itemize}
    \item If $Q^+$ is free-connex, 
    then selection for $Q$ by $\lex$ is possible 
    in $\Comp{1}{n}$.
    \item Otherwise, if $Q$ is also self-join-free, then selection for $Q$ by $\lex$ is not possible 
    in $\Comp{1}{n \polylog n}$,
    assuming \seth{} and \hyperclique{}.
\end{itemize}
\end{theorem}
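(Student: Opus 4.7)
The intractable direction is already established by the preceding lemma: if $Q$ is self-join-free and $Q^+$ is not free-connex, then selection for $Q$ cannot be done in $\Comp{1}{n \polylog n}$ under \seth{} and \hyperclique{}. That argument is stated for arbitrary ranking functions and hence covers selection by $\lex$ without modification, so nothing new is required.

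For the tractable direction, the plan is to combine the lex-preserving exact reduction of \Cref{thm:fds-known-lex} with the selection algorithm of \Cref{thm:lex-selection-dicotomy}. Given an input database $I$ satisfying $\Delta$ and a target index $k$, I would first invoke that reduction to build in linear time a database $I'$ over $Q^+$ satisfying $\Delta^+$, together with a bijection $\tau : Q^+(I') \to Q(I)$. Since $\free(Q) \subseteq \free(Q^+)$, the partial lexicographic order $\lex$ on $\free(Q)$ is automatically a valid partial lexicographic order on $\free(Q^+)$; the lex-preserving property of $\tau$ then says that any total order on $Q^+(I')$ consistent with $\lex$ is mapped by $\tau$ to a total order on $Q(I)$ consistent with $\lex$.

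Because $Q^+$ is free-connex by hypothesis, \Cref{thm:lex-selection-dicotomy} applied to $Q^+$ with the partial order $\lex$ yields a selection algorithm of complexity $\Comp{1}{n}$ on $I'$. Running this algorithm to retrieve the $k$-th answer of $Q^+(I')$ and then applying the constant-time $\tau$ returns the $k$-th answer of $Q(I)$ with respect to $\lex$. Together with the linear-time construction of $I'$, the overall cost is $\Comp{1}{n}$, as required.

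The whole argument is essentially a plumbing of pre-existing lemmas, so there is no real obstacle; the one point that merits care is that the variables in $\free(Q^+) \setminus \free(Q)$ are simply unmentioned by $\lex$ and therefore do not participate in comparisons, which is exactly the situation handled by \Cref{thm:fds-known-lex}. Note in particular that the reordered order $\lex^+$ is \emph{not} needed here, because \Cref{thm:lex-selection-dicotomy} holds for every partial lexicographic order once the underlying CQ is free-connex.
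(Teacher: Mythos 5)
Your proposal is correct and follows essentially the same route as the paper: the negative side is exactly the preceding lemma for arbitrary ranking functions, and the positive side chains the lex-preserving exact reduction of \Cref{thm:fds-known-lex} with the order-independent selection algorithm of \Cref{thm:lex-selection-dicotomy} applied to the free-connex $Q^+$. Your closing observation that the reordered order $\lex^+$ plays no role here (unlike for direct access) is also consistent with the paper, whose tractability condition for selection mentions only free-connexity of $Q^+$.
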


\subsection{A Note on General FDs}
\label{sec:generalFDs}

We discussed only unary FDs, where a single variable implies another.
The positive side of our results also holds for general FDs where a combination of variables may imply a variable. 
We simply need to take the general form of the extension (given an FD $x_1,\ldots,x_m\rightarrow y$, we add $y$ wherever all of $x_1,\ldots,x_m$ appear). If the extension has a tractable form, \Cref{thm:fds-known} and \Cref{thm:fds-known-lex} show that the original query is tractable too. 
However, extending the negative results requires a much more intricate analysis that goes beyond the scope of this work. 
Already for enumeration, even though Carmeli and Kr{\"{o}}ll~\cite{DBLP:journals/mst/CarmeliK20} showed a classificiation for general FDs when the extension is acyclic, 
the cyclic case is not resolved, and they provide a specific example of a CQ and FDs where the complexity is unknown.

\section{Conclusions}
\label{sec:conclusions}

We investigated the task of constructing a direct-access data structure to the output of a query with an ordering over the answers,
as well as the restriction of the problem to accessing a single answer (the selection problem).
We presented algorithms for fragments of the class of CQs for lexicographic and sum-of-weights orders. 
The direct access algorithms take quasilinear construction time in the size of the database, and logarithmic time for access.
For selection, our algorithms take quasilinear or even linear time.
We further showed that within the class of CQs without self-joins, our algorithms cover all the cases where these complexity guarantees are feasible, assuming conventional hypotheses in the theory of fine-grained complexity.
We were also able to precisely capture how the frontier of tractability changes under the presence of unary FDs.

This work opens up several directions for future work, including the generalization to more expressive queries (CQs with self-joins, union of CQs, negation, etc.), other kinds of orders (e.g., min/max over the tuple entries),
and a continuum of complexity guarantees
(beyond $\langle$quasilinear, logarithmic time$\rangle$). 

Generalizing the question posed at the beginning of the Introduction, we view this work as part of a bigger challenge that continues the line of research on \e{factorized representations} in databases~\cite{DBLP:conf/icdt/OlteanuZ12,DBLP:journals/sigmod/OlteanuS16}: how can we represent the output of a query in a way that, compared to the explicit representation, is fundamentally more compact and efficiently computable, yet equally useful to downstream operations?

\bibliographystyle{ACM-Reference-Format}
\bibliography{bibliography}

\clearpage
\appendix

\section{Nomenclature}

\begin{table}[h]
\centering
\small
\begin{tabularx}{\linewidth}{@{\hspace{0pt}} >{$}l<{$} @{\hspace{2mm}}X@{}} %
\hline
\textrm{Symbol}		& Definition 	\\
\hline
	R,S,T,U,R_1,R_2		& relation \\
	e, e_R, V, V_1, V_2 & atom/hyperedge/node of join tree \\
	\calS       & schema \\
	I           & database (instance) \\
    n           & size of $I$ (number of tuples) \\	
	\dom        & domain \\
	\mathbf{X}, \mathbf{Y}, \mathbf{Z}		& list of variables/attributes \\
	t			& tuple \\	
	x,y,z,u,v_1,v_2		& variable \\	
	Q			& CQ	\\
	q \in Q(I)  & query answer of CQ $Q$ over database $I$ \\
	Q(I)        & set of answers of $Q$ over $I$ \\
	\pi_\mathbf{X}(R)       & projection of on $\mathbf{X}$\\
	\mathbf{X}_f, \free(Q)  & free variables of query $Q$ \\
	\var(Q), \var(e)     & variables of query or atom \\
	\atoms(Q)   & set of query atoms \\
	\calH(Q) = (V, E)	& hypergraph associated with query $Q$ \\
	\calH_\free(Q) = (V, E)	& restriction of $\calH(Q)$ to free variables only \\
	T   & join tree \\
	\calV   & set of join tree nodes \\
	\freeind(Q)	& maximum number of independent free variables of $Q$ \\
    \mh(Q)	& number of maximal hyperedges (with respect to containment) in $\calH(Q)$ \\
    \mhfree(Q)	& number of free-maximal hyperedges = maximal hyperedges in $\calH_\free(Q)$ \\
    \lex = \angs{v_1, \ldots, v_m} & lexicographic order of variables \\
    w_x     & weight function for variable $x$: $\dom \rightarrow \R$\\
	w_Q    & weight function for query answers \\
	w       & short form for all $w_x$ and $w_Q$ \\
	\Sigma w  & sum-of-weights order \\
	\lambda  & a real-valued weight \\
	\preceq & total order over query answers \\
	\Pi & family of orders \\
	\text{order} & a binary relation as in partial/total order \\
	\text{ordering} & a sorted list according to an order \\
	R: \mathbf{X} \rightarrow \mathbf{Y} & FD where $\mathbf{X}$ implies $\mathbf{Y}$ in $R$ \\
	\Delta & set of FDs \\
	\Comp{n \log n}{\log n} & direct access with $\bigO(n \log n)$ preprocessing and $\bigO(\log n)$ per access\\
	\Comp{1}{n \log n} & selection in $\bigO(n \log n)$\\
\hline
\end{tabularx}
\end{table}

\end{document}